\newtheorem{theorem}{Theorem}[section]
\newtheorem{lemma}{Lemma}[section]
\newtheorem{corollary}{Corollary}[section]
\newtheorem{definition}{Definition}[section]
\newtheorem{observation}{Observation}
\newtheorem{proposition}{Proposition}
\newtheorem{rr}{Reduction Rule}
\newcommand{\C}{\mathcal}
\newcommand{\OO}{{\mathcal O}}
\newcommand{\N}{{\mathbb{N}}}
\newcommand{\Nb}{{\mathbb{N}}}
\newcommand{\I}{\ensuremath{\mathcal{I}}}
\newcommand{\J}{\ensuremath{\mathcal{J}}}
\newcommand{\A}{\ensuremath{\mathcal{A}}}
\newcommand{\F}{\ensuremath{\mathcal{F}}}
\newcommand{\To}{\ensuremath{\rightarrow}}
\newcommand{\Sb}[1]{\ensuremath{\left\{#1\right\}}}
\newcommand{\restrict}[1]{\raisebox{0ex}{$| $}_{#1}}
\newcommand{\sm}{\setminus}
\newcommand{\Tb}[1]{\ensuremath{\left[#1\right]}}
\newcommand{\Mopt}{\ensuremath{\mu_{M}}}
\newcommand{\Wopt}{\ensuremath{\mu_{W}}}
\newcommand{\bal}[1]{\ensuremath{{\sf balance}(#1)}}
\newcommand{\mbal}{\ensuremath{{\sf balance}}}
\newcommand{\worst}{\ensuremath{{\sf worst}}}
\newcommand{\Bal}{\ensuremath{{\sf Bal}}}
\newcommand{\Om}{\ensuremath{O_{M}}}
\newcommand{\Ow}{\ensuremath{O_{W}}}
\newcommand{\NPH}{\textrm{\textup{NP-hard}}}
\newcommand{\Yes}{{\sc Yes}}
\newcommand{\No}{{\sc No}}
\newcommand{\FPT}{\textrm{\textup{FPT}}}
\newcommand{\WOH}{\textrm{\textup{W[1]-hard}}}
\newcommand{\minBSM}{{\sc Above-Min BSM}}
\newcommand{\pBSM}{\minBSM} 
\newcommand{\maxBSM}{{\sc Above-Max BSM}}
\newcommand{\FBSM}{{\sc Above-Min FBSM}} 
\newcommand{\pGBSM}{\FBSM} 
\newcommand{\sad}{sad}
\newcommand{\happy}{happy}
\newcommand{\bad}{\sad} 
\newcommand{\good}{\happy} 
\newcommand{\Msad}{\ensuremath{M_{S}}} 
\newcommand{\Mhap}{\ensuremath{M_H}} 
\newcommand{\Wsad}{\ensuremath{W_{S}}} 
\newcommand{\Whap}{\ensuremath{W_H}}
\newcommand{\Mbad}{\Msad} 
\newcommand{\Mgood}{\Mhap} 
\newcommand{\Wbad}{\Wsad} 
\newcommand{\Wgood}{\Whap} 
\newcommand{\defparproblemMy}[4]{
  \vspace{1mm}
  \noindent\fbox{
  \begin{minipage}{0.96\textwidth}
  \begin{tabular*}{\textwidth}{@{\extracolsep{\fill}}lr} #1\\ \end{tabular*}
  {\bf Input:} #2  \\
  {\bf Question:} #4\\
  {\bf Parameter:} #3 
  \end{minipage}}
  \vspace{1mm}}
\newcommand{\footremember}[2]{%
    \footnote{#2}
    \newcounter{#1}
    \setcounter{#1}{\value{footnote}}%
}
\newcommand{\footrecall}[1]{%
    \footnotemark[\value{#1}]%
}
\begin{document}

\title{Balanced Stable Marriage: How Close is Close Enough?\footnote{The research leading to these results received funding from the European Research Council under the European Union’s Seventh Framework Programme (FP/2007-2013) / ERC Grant Agreement no.~306992.}}

\author{Sushmita Gupta\footremember{alley}{University of Bergen, Norway; Emails: {\tt \{Sushmita.Gupta, Meirav.Zehavi\}@uib.no}} \and Sanjukta Roy\footremember{trailer}{Institute of Mathematical Sciences, India; Emails: {\tt \{sanjukta, saket\}@imsc.res.in} } \and Saket Saurabh\footrecall{alley}\, \footrecall{trailer} \and Meirav Zehavi\footrecall{alley}}

\date{ }

\maketitle
\thispagestyle{empty}

\begin{abstract}
The {\sc Balanced Stable Marriage} problem is a central optimization version of the classic {\sc Stable Marriage} problem. Here, the output cannot be an arbitrary stable matching, but one that balances between the dissatisfaction of the two parties, men and women. We study {\sc Balanced Stable Marriage} from the viewpoint of Parameterized Complexity. Our ``above guarantee parameterizations''   are arguably the most natural parameterizations of the problem at hand. Indeed, our parameterizations precisely fit the scenario where there exists a stable marriage that both parties would accept, that is, where the satisfaction of each party is {\em close} to the best it can hope for. Furthermore, our parameterizations accurately draw the line between tractability and intractability with respect to the target value.
\end{abstract}



\section{Introduction}\label{sec:intro}

{\em Matching under preferences} is a rich topic central to both economics and computer science, which has been consistently and intensively studied for over several decades. One of the main reasons for interest in this topic stems from the observation that it is extremely relevant to a wide variety of practical applications modeling situations where the objective is to {\em match} agents to other agents (or to resources). In the most general setting, a matching is defined as an allocation (or assignment) of agents to resources that satisfies some predefined criterion of compatibility/acceptability. Here, the (arguably) best known model is the {\em two-sided model}, where the agents on one side are referred to as {\it men}, and the agents on the other side are referred to as {\it women}. A few illustrative examples of real life situations where this model is employed in practice include matching hospitals to residents, students to colleges, kidney patients to donors and users to servers in a distributed Internet service. At the heart of all of these applications lies the fundamental {\sc Stable Marriage} problem. In particular, the Nobel Prize in Economics was awarded to Shapley and Roth in 2012 ``for the theory of stable allocations and the practice of market design.'' Moreover, several books have been dedicated to the study of {\sc Stable Marriage} as well as optimization variants of this classical problem such as the {\sc Egalitarian Stable Marriage}, {\sc Sex-Equal Stable Marriage} and {\sc Balanced Stable Marriage} problems \cite{DBLP:books/daglib/0066875,opac-b1092346,DBLP:books/ws/Manlove13}.

The input of the {\sc Stable Marriage} problem consists of a set of men, $M$, and a set of women, $W$, each person ranking a subset of people of the opposite gender. That is, each person $a$ has a set of {\em acceptable partners}, ${\cal A}(a)$, whom this person subjectively ranks. Consequently, each person $a$ has a so-called {\em preference list}, where $p_{a}(b)$ denotes the position of $b\in{\cal A}(a)$ in $a$'s preference list. Without loss of generality, it is assumed that if a person $a$ ranks a person $b$, then the person $b$ ranks the person $a$ as well. The sets of preference lists of the men and the women are denoted by ${\cal L}_M$ and ${\cal L}_W$, respectively. In this context, we say that a pair of a man and a woman, $(m,a)$, is an {\em acceptable pair} if both $m\in{\cal A}(w)$ and $w\in{\cal A}(m)$. Accordingly, the notion of a {\em matching} refers to a matching between men and women, where two people that are matched to one another form an acceptable pair. Roughly speaking, the goal of the {\sc Stable Marriage} problem is to {\em find} a matching that is {\em stable} in the following sense: there should not exist two people who prefer being matched to one another over their current ``status''. More precisely, a matching $\mu$ is said to be stable if it does not have a {\em blocking pair}, which is an acceptable pair $(m,w)$ such that {\bf (i)} either $m$ is unmatched by $\mu$ or $p_{m}(w)< p_{m}(\mu(m))$, and {\bf (ii)} either $w$ is unmatched by $\mu$ or  $p_{w}(m)< p_{w}(\mu(w))$. Here, the notation $\mu(a)$ indicates the person to whom $\mu$ matches the person $a$. Note that a person always prefers being matched to an acceptable partner over being unmatched.

The seminal paper \cite{gale62a} by Gale and Shapely on stable matchings shows that given an instance of {\sc Stable Marriage}, a stable matching necessarily {\em exists}, but it is not necessarily unique. In fact, for a given instance of {\sc Stable Marriage}, there can be an {\em exponential} number of stable matchings, and they should be viewed as a {\em spectrum} where the two extremes are known as the {\em man-optimal stable matching} and the {\em woman-optimal stable matching}. Formally, the  man-optimal stable matching, denoted by \Mopt, is a stable matching such that every stable matching $\mu$ satisfies the following condition: every man $m$ is either unmatched by both \Mopt\ and $\mu$ or $p_{m}(\Mopt{}(m))\leq p_{m}(\mu(m))$. The woman-optimal stable matching, denoted by \Wopt, is defined analogously. These two extremes, which give the best possible solution for one party at the expense of the other party, always exist and can be computed in polynomial time \cite{gale62a}. Naturally, it is desirable to analyze matchings that lie somewhere in the middle, being {\em globally desirable}, {\em fair towards both sides} or {\em desirable by both sides}. 

Each notion above of what constitutes a desirable stable matching leads to a natural, {\em different} optimization problem. The determination of which notion best describes an appropriate outcome depends on the specific situation at hand. Here, the quantity $p_{a}(\mu(a))$ is viewed as the ``satisfaction'' of $a$ in a matching  $\mu$, where a smaller value signifies a greater amount of  satisfaction. Under this interpretation, the {\it egalitarian stable matching} attempts to be {\em globally desirable} by minimizing $e(\mu)=\sum_{(m,w)\in \mu} (p_{m}(\mu(m))+p_{w}(\mu(w)))$ over the set of all stable matchings, which we denote by $\mathrm{SM}$. The problem of finding an egalitarian stable matching, called {\sc Egalitarian Stable Marriage}, is known be solvable in polynomial time due to Irving et al.~\cite{IrvingLG87j}. Roughly speaking, this problem does not distinguish between men and women, and therefore it does not fit scenarios where it is necessary to differentiate between the individual satisfaction of each party. In such scenarios, the {\sc Sex-Equal Stable Marriage} and {\sc Balanced Stable Marriage} problems come into play. Before we define each of these two problems, we would like to remark that a survey of results related to {\sc Egalitarian Stable Marriage} and {\sc Sex-Equal Stable Marriage} is outside the scope of this paper, and we refer interested readers to the books \cite{DBLP:books/daglib/0066875,opac-b1092346,DBLP:books/ws/Manlove13}. Here, we consider these two problems only to understand the context of {\sc Balanced Stable Marriage}.

In the {\sc Sex-Equal Stable Marriage} problem, the objective is to find a stable matching that minimizes the absolute value of $\delta(\mu)$ over $\mathrm{SM}$, where $\delta(\mu)=\sum_{(m,w)\in \mu} p_{m}(\mu(m))$ $-\sum_{(m,w)\in \mu}p_{w}(\mu(w))$. It is thus clear that {\sc Sex-Equal Stable Marriage} seeks a stable matching that is fair towards both sides by minimizing the difference between their individual amounts of satisfaction. Unlike the {\sc Egalitarian Stable Marriage}, the {\sc Sex-Equal Stable Marriage} problem is known to be \NPH\ \cite{Kato93}. On the other hand, in {\sc Balanced Stable Marriage}, the objective is to find a stable matching that minimizes $\bal{\mu}=\max\{\sum_{(m,w)\in \mu}p_{m}(w), \sum_{(m,w)\in \mu} p_{w}(m)\}$ over $\mathrm{SM}$. At first sight, this measure might seem conceptually similar to the previous one, but in fact, the two measures are quite different. Indeed, {\sc Balanced Stable Marriage} does not attempt to find a stable matching that is fair, but one that is desirable by both sides. In other words, {\sc Balanced Stable Marriage} examines the amount of dissatisfaction of each party {\em individually}, and attempts to minimize the worse one among the two. This problem fits the common scenario in economics where each party is selfish in the sense that it desires a matching where its own dissatisfaction is minimized, irrespective of the dissatisfaction of the other party, and our goal is to find a matching desirable by both parties by ensuring that each individual amount of dissatisfaction does not exceed some threshold. In some situations, the minimization of $\bal{\mu}$ may indirectly also minimize $\delta(\mu)$, but in other situations, this may not be the case. Indeed,  McDermid \cite{mcdermidCom} constructed a family of instances where there does {\em not} exist any matching that is both a sex-equal stable matching and a balanced stable matching (the construction is also available in the book \cite{DBLP:books/ws/Manlove13}).

The {\sc Balanced Stable Marriage} problem was introduced in the influential work of Feder \cite{Feder95} on stable matchings. Feder \cite{Feder95} proved that this problem is \NPH\ and that it admits a 2-approximation algorithm. Later, it was shown that this problem also admits a $(2-1/\ell)$-approximation algorithm where $\ell$ is the maximum size of a set of acceptable partners \cite{DBLP:books/ws/Manlove13}. O'Malley \cite{omalleyThesis} phrased the {\sc Balanced Stable Marriage} problem in terms of constraint programming. Recently, McDermid and Irving \cite{McDermidIrving14} expressed interest in the design of fast exact exponential-time algorithms for {\sc Balanced Stable Marriage}. In this paper, we study this problem in the realm of fast exact exponential-time algorithms as defined by the field of Parameterized Complexity (see Section \ref{sec:prelim}). Recall that $\mathrm{SM}$ is the set of all stable matchings. In this context, we would like to remark that {\sc Egalitarian Stable Roommates} problem is NP-complete \cite{Feder95}. Recently, Chen et al.\cite{chen2017hard} showed that it is \FPT\ parameterized by the egalitarian cost. McDermid and Irving \cite{McDermidIrving14} showed that {\sc Sex-Equal Stable Marriage} is \NPH\ even if it is only necessary to decide whether the target $\Delta=\min_{\mu\in \mathrm{SM}}|\delta(\mu)|$ is 0 or not \cite{McDermidIrving14}. In particular, this means that {\sc Sex-Equal Stable Marriage} is not only W[1]-hard with respect to $\Delta$, but it is even paraNP-hard with respect to this parameter.\footnote{If a parameterized problem cannot be solved in polynomial time even when the value of the parameter is a fixed constant (that is, independent of the input), then the problem is said to be paraNP-hard.} In the case of {\sc Balanced Stable Marriage}, however, fixed-parameter tractability with respect to the target $\Bal=\min_{\mu\in \mathrm{SM}}\bal{\mu}$ trivially follows from the fact that this value is lower bounded by $\max\{|M|,|W|\}$.\footnote{We assume that any stable matching $\mu$ is perfect. We justify this assumption later.}

\paragraph{Our Contribution.}
We introduce two ``above-guarantee parameterizations'' of {\sc Balanced Stable Marriage}. To this end, we consider the minimum value $\Om$ of the total dissatisfaction of men that can be realized by a stable matching, and the minimum value $\Ow$ of the total dissatisfaction of women that can be realized by a stable matching. Formally, $\Om=\sum_{(m,w)\in \Mopt{}} p_m(w)$, and $\Ow = \sum_{(m,w)\in \Wopt{}} p_w(m)$, where $\Mopt{}$ and $\Wopt{}$ are the man-optimal and woman-optimal stable matchings, respectively. An input integer $k$ would indicate that our objective is to decide whether $\Bal \leq k$.
 In our first parameterization, the parameter is $k-\min\{\Om,\Ow\}$, and in the second one, it is $k-\max\{\Om,\Ow\}$. In other words, we would like to answer the following questions (recall that $\Bal=\min_{\mu\in \mathrm{SM}}\bal{\mu}$).

\defparproblemMy{{\sc Above-Min Balanced Stable Marriage} (\minBSM)}{An instance $(M,W, {\cal L}_M, {\cal L}_W)$ of {\sc Balanced Stable Marriage}, and a non-negative integer $k$.}{$t = k - \min\{\Om,\Ow\}$}{Is $\Bal\leq k$?}

\defparproblemMy{{\sc Above-Max Balanced Stable Marriage} (\maxBSM)}{An instance $(M,W, {\cal L}_M, {\cal L}_W)$ of {\sc Balanced Stable Marriage}, and a non-negative integer $k$.}{$t = k - \max\{\Om,\Ow\}$}{Is $\Bal \leq k$?}

Before stating our results, let us explain the choice of these parameterizations. Here, note that the best satisfaction the party of men can hope for is $\Om$, and the best satisfaction the party of women can hope for is $\Ow$. 
First, consider the parameter $t = k - \min\{\Om,\Ow\}$. Whenever we have a solution such that the amounts of satisfaction of {\em both} parties are {\em close enough} to the best they can hope for, this parameter is small. Indeed, the closer the satisfaction of both parties to the best they can hope for (which is exactly the case where both parties would find the solution desirable), the smaller the parameter is, and the smaller the parameter is, the faster a parameterized algorithm is. In other words, if there exists a solution that is desirable by both parties, our parameter is small.

In the parameterization above, as we take the {\em min} of $\{\Om,\Ow\}$, we need the satisfaction of {\em both} parties to be close to optimal in order to have a small parameter. As we are able to show that {\sc Balanced Stable Marriage} is \FPT\ with respect to this parameter, it is very natural to next examine the case where we take the {\em max} of $\{\Om,\Ow\}$. In this case, the closer the satisfaction of {\em at least one} party to the best it can hope for, the smaller the parameter is. In other words, now the demand from a solution---while not changing the definition of a solution in any way, that is, a solution is still a stable matching $\mu$ minimizing $\bal{\mu}$---in order to have a small parameter is {\em weaker}. In the jargon of Parameterized Complexity, it is said that the parameterization by $t = k - \max\{\Om,\Ow\}$ is ``above a higher guarantee'' than the parameterization by $t = k - \min\{\Om,\Ow\}$, since it is {\em always} the case that $\max\{\Om,\Ow\}\geq \min\{\Om,\Ow\}$.
Unfortunately, as we show in this paper, the parameterization by $k - \max\{\Om,\Ow\}$ results in a problem that is W[1]-hard. Hence, the complexities of the two parameterizations behave very differently. We remark that in Parameterized Complexity, it is {\em not at all} the rule that when one takes an ``above a higher guarantee'' parameterization, the problem would suddenly become W[1]-hard, as can be evidenced by the most classical above guarantee parameterizations in this field, which are of the {\sc Vertex Cover} problem. For that problem, three above guarantee parameterizations were considered in \cite{DBLP:conf/soda/GargP16,DBLP:journals/talg/LokshtanovNRRS14,DBLP:journals/toct/CyganPPW13,DBLP:conf/esa/RamanRS11}, each above a higher guarantee than the previous one that was studied, and each led to a problem that is \FPT. In that context, unlike our case, it is still not clear whether the bar can be raised higher. Overall, our results accurately draw the line between tractability and intractability with respect to the target value in the context of two very natural, useful parameterizations.

Finally, to be more precise, we note that our work proves three main theorems:
\begin{itemize}
\setlength{\itemsep}{-2pt}
\item First, we prove (in Section \ref{sec:kernel}) that \minBSM\ admits a kernel where {\em the number of people is linear in $t$}. For this purpose, we introduce notions that might be of independent interest in the context of a ``functional'' variant of \minBSM. Our kernelization algorithm consists of several phases, each simplifying a different aspect of \minBSM, and shedding light on structural properties of the \Yes-instances of this problem. Note that this result already implies that \minBSM\  is \FPT.
\item Second, we prove (in Section \ref{sec:alg}) that \minBSM\ admits a parameterized algorithm whose running time is {\em single exponential in the parameter $t$}. This algorithm first builds upon our kernel, and then incorporates the method of bounded search trees. 

\item  Third, we prove (in Section \ref{sec:hardness}) that \maxBSM\ is \WOH. This reduction is quite technical, and its importance lies in the fact that it rules out (under plausible complexity-theoretic assumptions) the existence of a parameterized algorithm for \maxBSM. Thus, we show that although \maxBSM\ seems quite similar to \minBSM, in the realm of Parameterized Complexity, these two problems are completely different. 
\end{itemize}


\section{Preliminaries}\label{sec:prelim}

Let $f$ be a function $f : A \rightarrow B$. For a subset $A'\subseteq A$, we let
$f|_{A'}$ denote the restriction of $f$ to $A'$. That is, $f|_{A'}:A'\rightarrow B$, and $f|_{A'}(a)=f(a)$ for all $a\in A'$.

Throughout the paper, whenever the instance $\I$ of {\sc Balanced Stable Marriage} under discussion is not clear from context or we would like to put emphasis on $\I$, we add ``$(\I)$'' to the appropriate notation. For example, we use the 
notation $t(\I)$ 
rather than $t$. When we would like to refer to the balance of a stable matching $\mu$ in a specific instance $\I$, we would use the notation $\mbal{}_{\I}(\mu)$. A matching is called a {\em perfect matching} if it matches every person (to some other person).

While designing our kernelization algorithm (see ``Parameterized Complexity''), we might be able to determine whether the input instance is a \Yes-instance or a \No-instance. For the sake of clarity, in the first case, we simply return \Yes, and in second case, we simply return \No. To properly comply with the definition of a kernel, the return of \Yes\ and \No\ should be interpreted as the return of a trivial \Yes-instance and a trivial \No-instance, respectively. Here, a trivial \Yes-instance can be the one in which $M=W=\emptyset$ and $k=0$, where the only stable matching is the one that is empty and whose balance is 0, and a trivial \No-instance can be the one where $M=\{m\}$, $W=\{w\}$, ${\cal A}(m)=\{w\}$, ${\cal A}(w)=\{m\}$ and $k=0$. 


\paragraph{Parameterized Complexity.}
A {\em parameterization} of a problem is the association of an integer $k$ with each input instance, which results in a {\em parameterized problem}. For our purposes, we need to recall three central notions that define the parameterized complexity of a parameterized problem. The first one is the notion of a {\em kernel}. Here, a parameterized problem is said to admit a {\em kernel} of size $f(k)$ for some function $f$ that depends {\em only} on $k$ if there exists a polynomial-time algorithm, called a {\em kernelization algorithm}, that translates any input instance into an equivalent instance of the same problem whose size is bounded by $f(k)$ and such that the value of the parameter does not increase. In case the function $f$ is polynomial in $k$, the problem is said to admit a {\em polynomial kernel}. Hence, kernelization is a mathematical concept that aims to analyze the power of preprocessing procedures in a formal, rigorous manner.
The second notion that we use is the one of {\em fixed-parameter tractability (\FPT)}. Here, a parameterized problem $\Pi$ is said to be \FPT\ if there is an algorithm that solves it in time $f(k)\cdot |I|^{\OO(1)}$, where $|I|$ is the size of the input and $f$ is a function that depends only on $k$. Such an algorithm is called a {\em parameterized algorithm}. In other words, the notion of \FPT\ signifies that it is not necessary for the combinatorial explosion in the running time of an algorithm for $\Pi$ to depend on the input size, but it can be confined to the parameter $k$.
Finally, we recall that Parameterized Complexity also provides tools to refute the existence of polynomial kernels and parameterized algorithms for certain problems (under plausible complexity-theoretic assumptions), in which context the notion of \WOH\ is a central one. It is widely believed that a problem that is \WOH\ is unlikely to be \FPT, and we refer the reader to the books \cite{ParamAlgorithms15b,ParamBook13} for more information on this notion in particular, and on Parameterized Complexity in general. The  notation $\OO^*$ is used to hide factors polynomial in the input size.

\noindent 
{\bf Reduction Rule.} To design our kernelization algorithm, we rely on the notion of a {\em reduction rule}. A reduction rule is a polynomial-time procedure that replaces an instance $(\I, k)$ of a parameterized problem 
${\rm \Pi}$  by a new instance $(\C{I'},k' )$ of ${\rm \Pi}$. Roughly speaking, a reduction rule is useful when the instance $\C{I'}$ is in some sense ``simpler'' than the instance $\I$. In particular, it is desirable to ensure that $k'\leq k$. The rule is said to be {\em safe} if $(\I,k)$ is a \Yes-instance if and only if  $(\C{I'},k' )$ is a \Yes-instance.

\paragraph{A Functional Variant of {\sc Stable Marriage}.}
To obtain our kernelization algorithm for \minBSM, it will be convenient  to work with a ``functional'' definition of preferences, resulting in a ``functional'' variant of this problem which we call \FBSM. Here, instead of the sets of preferences lists ${\cal L}_M$ and ${\cal L}_W$, the input consists of sets of preference {\em functions} ${\cal F}_M$ and ${\cal F}_W$, where ${\cal F}_M$ replaces ${\cal L}_M$ and ${\cal F}_W$ replaces ${\cal L}_W$. Specifically, every person $a \in M \cup W$ has an injective (one-to-one) function $f_a: \A(a) \To \Nb$, called a {\em preference function}. Intuitively, a lower function value corresponds to a higher preference. Since every preference function is injective, it defines a total order over a set of acceptable partners. Note that all of the definitions presented in the introduction extend to our functional variant in the natural way. For the sake of formality, we specify the required adaptations below. 

 
The input of the {\sc Functional Stable Marriage} problem consists of a set of men, $M$, and a set of women, $W$. Each person $a$ has a set of {\em acceptable partners}, denoted by ${\cal A}(a)$, and an injective function $f_a:{\cal A}(a)\rightarrow\N$ called a {\em preference function}. Without loss of generality, it is assumed that if a person $a$ belongs to the set of acceptable partners of a person $b$, then the person $b$ belongs to the set of acceptable partners of the person $a$. The set of preference functions of the men and the women are denoted by ${\cal F}_M$ and ${\cal F}_W$, respectively. A pair of a man and a woman, $(m,a)$, is an {\em acceptable pair} if both $m\in{\cal A}(w)$ and $w\in{\cal A}(m)$. Accordingly, the notion of a {\em matching} refers to a matching between men and women, where two people that are matched to one another form an acceptable pair. A matching $\mu$ {\em stable} if it does not have a {\em blocking pair}, which is an acceptable pair $(m,w)$ such that {\bf (i)} either $m$ is unmatched by $\mu$ or $f_{m}(w)< f_{m}(\mu(m))$, and {\bf (ii)} either $w$ is unmatched by $\mu$ or  $f_{w}(m)< f_{w}(\mu(w))$. The goal of the {\sc Functional Stable Marriage} problem is to find a stable matching.

The man-optimal stable matching, denoted by \Mopt, is a stable matching such that every stable matching $\mu$ satisfies the following condition: every man $m$ is either unmatched by both \Mopt\ and $\mu$ or $f_{m}(\Mopt{}(m))\leq f_{m}(\mu(m))$. The woman-optimal stable matching, denoted by \Wopt, is defined analogously. Given a stable matching $\mu$, define \[\bal{\mu}=\max\{\sum_{(m,w)\in \mu}f_{m}(w), \sum_{(m,w)\in \mu} f_{w}(m)\}.\] Moreover, $\Bal=\min_{\mu\in \mathrm{SM}}\bal{\mu}$, where SM is the set of all stable matchings, $\Om=\sum_{(m,w)\in \Mopt} f_m(w)$, and $\Ow = \sum_{(m,w)\in \Wopt} f_w(m)$. Finally, \FBSM\ is defined as follows.

\defparproblemMy{{\sc Above-Min Functional Balanced Stable Marriage} (\FBSM)}{An instance $(M,W, {\cal F}_M, {\cal F}_W)$ of {\sc Functional Balanced Stable Marriage}, and a non-negative integer $k$.}{$t = k - \min\{\Om,\Ow\}$}{Is $\Bal\leq k$?}

\medskip

From the above discussions it is straightforward to turn an instance of \minBSM\ into an equivalent instance of \FBSM\ as stated in the following observation.

\begin{observation}
\label{lem:basic}
Let $\C{I}=(M,W,\C{L}_{M},\C{L}_{W},k)$ be an instance of \minBSM. For each $a\in M\cup W$, define $f_{a}: \C{A}(a)\To \Nb$ by setting $f_{a}(b)=p_{a}(b)$ for all $b\in \C{A}(a)$. Then, $\C{I}$ is a \Yes-instance of \minBSM\ if and only if $(M,W, \F_{M}=\Sb{f_{m}}_{m\in M}, \F_{W}=\Sb{f_{w}}_{w\in W},k)$ is a \Yes-instance of \FBSM.
\end{observation}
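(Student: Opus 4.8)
The plan is to show that the map described is essentially the identity on all the combinatorial structure that matters, so that the two instances are equivalent in the strongest possible sense. First I would observe that the ground sets $M$, $W$ and the acceptability relation are literally unchanged: since each $f_{a}$ has the same domain $\C{A}(a)$ as $p_{a}$, a pair $(m,w)$ is an acceptable pair in $\C{I}$ exactly when it is an acceptable pair in the constructed \FBSM-instance. Hence the two instances have precisely the same set of matchings.

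Next I would show that they have the same set of \emph{stable} matchings. Fix a matching $\mu$ and an acceptable pair $(m,w)$. By construction $f_{m}(w)=p_{m}(w)$ and $f_{m}(\mu(m))=p_{m}(\mu(m))$, and symmetrically $f_{w}(m)=p_{w}(m)$ and $f_{w}(\mu(w))=p_{w}(\mu(w))$. Therefore the inequality $f_{m}(w)<f_{m}(\mu(m))$ holds iff $p_{m}(w)<p_{m}(\mu(m))$, and likewise on the women's side, so $(m,w)$ is a blocking pair of $\mu$ in the \FBSM-instance iff it is a blocking pair of $\mu$ in $\C{I}$. Consequently $\mathrm{SM}$ is identical for the two instances.

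From here everything follows mechanically. Since the stable matchings coincide and the dominance condition defining $\Mopt$ (resp.\ $\Wopt$) is stated via the same inequalities with $p_{a}$ replaced by $f_{a}$, the man-optimal and woman-optimal stable matchings of the two instances coincide; hence $\Om$ and $\Ow$ take the same values in both, so the parameter $t=k-\min\{\Om,\Ow\}$ is unchanged. Moreover, for every $\mu\in\mathrm{SM}$ we have $\sum_{(m,w)\in\mu}f_{m}(w)=\sum_{(m,w)\in\mu}p_{m}(w)$ and $\sum_{(m,w)\in\mu}f_{w}(m)=\sum_{(m,w)\in\mu}p_{w}(m)$, so $\mbal{}_{\C{I}}(\mu)$ equals its counterpart in the \FBSM-instance; minimizing over the common set $\mathrm{SM}$ yields the same value $\Bal$. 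Hence $\Bal\leq k$ holds for $\C{I}$ exactly when it holds for the constructed instance, which is the claimed equivalence.

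I do not expect a genuine obstacle here; the only point warranting a sentence of care is that the notions $\Mopt,\Wopt,\Om,\Ow$ are well-defined for the functional variant at all, but this is exactly the content of the remark (already made in the excerpt) that every definition from the introduction extends verbatim to the functional setting with $p_{a}$ replaced by $f_{a}$. Since the construction takes $f_{a}=p_{a}$, no further justification is needed, and the proof reduces to the routine verifications above.
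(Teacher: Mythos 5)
Your proof is correct and matches the paper's treatment: the paper states this as an immediate observation (relying on the preceding discussion that setting $f_a=p_a$ preserves acceptable pairs, blocking pairs, and hence the set of stable matchings together with all derived quantities $\Mopt,\Wopt,\Om,\Ow,\Bal$), and your routine verification simply spells out exactly that argument.
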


\paragraph{Known Results.}
Finally, we state several classical results, which were originally presented in the context of {\sc Stable Marriage}. By their original proofs, these results also hold in the context of {\sc Functional Stable Marriage}. To be more precise,  given an instance of {\sc Functional Stable Marriage}, we can construct an equivalent instance of {\sc Stable Marriage}, by ranking the acceptable partners in the order of their function values, where a smaller value implies a higher preference. The instances are equivalent in the sense that they give rise to the exact same set of stable matchings. Hence, all the structural results about stable matchings in the usual setting (modeled by strict preference lists) apply to the generalized setting, modeled by injective functions.

\begin{proposition}[\cite{GS62}]
\label{lem:existenceSM}
For any instance of {\sc Stable Marriage} (or {\sc Functional Stable Marriage}), there exist a man-optimal stable matching, $\Mopt$, and a woman-optimal stable matching, $\Wopt$, and both $\Mopt$ and $\Wopt$ can be computed in polynomial time. 
\end{proposition}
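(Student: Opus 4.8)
The plan is to invoke the Gale--Shapley \emph{deferred acceptance} procedure and to analyze it directly. Since the excerpt already records that any instance of {\sc Functional Stable Marriage} can be replaced by an equivalent instance of {\sc Stable Marriage} with exactly the same set of stable matchings --- simply rank each set of acceptable partners by increasing preference-function value --- it suffices to prove the statement for {\sc Stable Marriage}, and the functional case follows verbatim. So fix an instance $(M,W,\C{L}_M,\C{L}_W)$ and run the \emph{man-proposing} algorithm: repeatedly, an arbitrary currently-unmatched man who has not yet exhausted his list proposes to the most-preferred woman on his list to whom he has not previously proposed; a woman who is proposed to keeps (tentatively) the single best proposal among the one she currently holds and the new one(s), and rejects the rest; a man rejected by $w$ never proposes to $w$ again. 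The process halts once every man is either tentatively matched or has exhausted his list. Each of the at most $|M|$ men proposes to each woman at most once, so there are $\OO(|M|\cdot|W|)$ proposals and the algorithm runs in polynomial time; at termination each woman holds at most one man, each matched man is held by exactly one woman, and every such pair is acceptable, so the output $\mu$ is a matching.

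Next I would check that $\mu$ is stable. Suppose $(m,w)$ were a blocking pair. Then $m$ is unmatched by $\mu$ or prefers $w$ to $\mu(m)$; in the first case $m$ exhausted his list, and in the second case $m$ ultimately proposed to (and was held by) $\mu(m)$ and hence must earlier have proposed to the more-preferred $w$. So in either case $m$ proposed to $w$ and was rejected by her. A woman rejects a man only when she holds or receives a strictly better proposal, and the partner a woman holds only improves over time; thus at termination $\mu(w)$ exists and $w$ prefers $\mu(w)$ to $m$, contradicting that $(m,w)$ blocks $\mu$.

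The crux is man-optimality. Call a woman $w$ \emph{feasible} for a man $m$ if $\mu'(m)=w$ for some stable matching $\mu'$. I would prove, by induction on the order in which rejections occur during the run, that no man is ever rejected by a woman feasible for him. Consider the first rejection that violates this, say $w$ rejecting $m$ with $w$ feasible for $m$, witnessed by a stable matching $\mu'$ with $\mu'(m)=w$. At that moment $w$ holds some $m'\neq m$ whom she prefers to $m$; since $m'$ is currently held by $w$, he has already proposed to and been rejected by every woman he strictly prefers to $w$, and each of those rejections occurred before the present one, so by the induction hypothesis every woman $m'$ strictly prefers to $w$ is infeasible for him. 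Hence in $\mu'$ the man $m'$ is either unmatched or does not prefer $\mu'(m')$ to $w$; in the latter case $\mu'(m')\neq w=\mu'(m)$, so he strictly prefers $w$ to $\mu'(m')$. Either way, since $w$ prefers $m'$ to $m=\mu'(w)$, the pair $(m',w)$ blocks $\mu'$ --- a contradiction. Granting this claim, in $\mu$ a man with no feasible woman is unmatched, which is consistent with the definition of $\Mopt$ since such a man is unmatched in every stable matching; and a man $m$ with a feasible woman is matched (he is rejected only by infeasible women, hence not by all women), indeed to $\mu(m)$, which is the most-preferred feasible woman on his list, so $m$ is at least as happy in $\mu$ as in any stable matching $\mu'$ (recall $\mu$ is itself stable, so $\mu(m)$ is feasible). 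Thus $\mu$ fulfils the definition of the man-optimal stable matching $\Mopt$, so $\Mopt$ exists and is computed in polynomial time; running the symmetric woman-proposing algorithm produces $\Wopt$. The main obstacle is precisely this inductive no-feasible-rejection argument; termination, the matching property, stability, and the polynomial running time are then routine.
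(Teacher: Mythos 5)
Your proof is correct: the paper itself does not prove this proposition but cites it as the classical result of Gale and Shapley, and your argument (deferred acceptance, termination in $\OO(|M|\cdot|W|)$ proposals, stability, and man-optimality via the ``no man is ever rejected by a feasible woman'' induction, plus the functional-to-ordinal translation the paper also records) is exactly the standard proof from that source. So you have reproduced, correctly, essentially the same approach the paper relies on.
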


The following powerful proposition is known as the Rural-Hospital Theorem.

\begin{proposition}[\cite{GS85a}]\label{lem:rht}
Given an instance of {\sc Stable Marriage} (or \FBSM), the set of men and women that are matched is the same for all stable matchings.
\end{proposition}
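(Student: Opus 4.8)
The plan is to run the classical Gale--Sotomayor symmetric-difference argument, which uses only that preferences are strict; since every preference function $f_a$ is injective it induces a strict total order on $\A(a)$, so the proof applies verbatim to \FBSM\ (equivalently, one may appeal to the equivalence noted above between a functional instance and the list instance obtained by ordering each $\A(a)$ by $f_a$-value). It is enough to prove that any two stable matchings $\mu$ and $\nu$ match exactly the same set of people, since then the statement follows by transitivity.

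Assume for contradiction that some person $x$ is matched by $\mu$ but not by $\nu$; by the symmetry between the two sides we may take $x$ to be a woman $w$. Form the graph on $M\cup W$ whose edges are the symmetric difference of the edge sets of $\mu$ and $\nu$. As $\mu$ and $\nu$ are matchings, every vertex has degree at most $2$ there, so every component is a path or a cycle whose edges alternate between $\mu$-edges and $\nu$-edges. Here $w$ has degree exactly $1$ (its $\mu$-edge is present, it has no $\nu$-edge), so the component $P$ containing $w$ is a path with $w$ as an endpoint, beginning with the $\mu$-edge $\{w,\mu(w)\}$ and then alternating $\nu,\mu,\nu,\dots$; every internal vertex of $P$ has degree $2$, hence is matched by both $\mu$ and $\nu$, and internal vertices alternate between men and women along $P$.

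The core step is to propagate a preference comparison along $P$: starting from $w$ and using repeatedly that $\mu$ has no blocking pair each time we traverse a $\nu$-edge, and that $\nu$ has no blocking pair each time we traverse a $\mu$-edge, one proves by induction that every woman on $P$ strictly prefers her $\mu$-partner to her $\nu$-partner while every man on $P$ strictly prefers his $\nu$-partner to his $\mu$-partner. (Base case: since $w$ is unmatched by $\nu$, the pair $\{\mu(w),w\}$ cannot block $\nu$, which forces $\mu(w)$ both to be matched by $\nu$ and to prefer $\nu(\mu(w))$ to $w$.) Now consider the other endpoint $y$ of $P$: it has degree $1$, so it is matched by exactly one of $\mu,\nu$, and a one-line parity check on $P$ shows that $y$ is a man if it is matched only by $\mu$ and a woman if it is matched only by $\nu$. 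In the former case, $y$'s $\mu$-partner $z$ is either $w$ itself (when $P$ has length $1$) or an internal woman of $P$; in both situations $z$ is free in $\nu$ or prefers $y=\mu(z)$ to $\nu(z)$, and as $y$ is free in $\nu$ the pair $\{y,z\}$ blocks $\nu$. In the latter case, $y$'s $\nu$-partner $z$ is an internal man of $P$ that prefers $y=\nu(z)$ to $\mu(z)$, and as $y$ is free in $\mu$ the pair $\{y,z\}$ blocks $\mu$. Either way stability is violated. Hence no person is matched by $\mu$ but not by $\nu$; applying the same statement with $\mu$ and $\nu$ interchanged shows the two matched sets are equal, as needed.

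The only subtle point is the inductive preference-propagation step together with the bookkeeping it requires (tracking which vertex of $P$ is a man versus a woman and which incident edge lies in $\mu$); everything else is a routine finite case check. It is worth noting that strictness of preferences is genuinely used here — the statement fails in the presence of ties — which is precisely why injectivity of the $f_a$ is what lets the proof carry over to \FBSM. A more modular route would invoke Proposition~\ref{lem:existenceSM}: man-optimality gives that every man matched by some stable matching is matched by $\Mopt$, and the dual (woman-optimality) statement gives that every man matched by $\Wopt$ is matched by every stable matching, reducing the theorem to the equality $|\Mopt|=|\Wopt|$; but establishing that equality seems to need the same alternating-path argument, so we would simply run it directly.
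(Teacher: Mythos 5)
Your proof is correct, but note that the paper itself does not prove Proposition~\ref{lem:rht} at all: this is the classical Rural Hospitals Theorem, imported from \cite{GS85a}, and the only argument the paper supplies is the surrounding remark that such structural results transfer to the functional setting because each injective $f_a$ induces a strict total order on $\A(a)$, so a functional instance and its associated list instance have exactly the same stable matchings. You therefore do strictly more than the paper: you reprove the theorem from scratch via the standard Gale--Sotomayor symmetric-difference argument, while also invoking the same injectivity-gives-strictness observation to cover \FBSM. Your argument checks out: the base case (non-blocking of $(\mu(w),w)$ in $\nu$ forces $\mu(w)$ to be matched by $\nu$ and to strictly prefer $\nu(\mu(w))$, strictness coming from injectivity), the alternating propagation along $P$, the degree-one analysis and parity claim at the far endpoint $y$, and the two concluding blocking-pair cases are all sound. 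The only imprecision is in the statement of the invariant: ``every woman on $P$ strictly prefers her $\mu$-partner to her $\nu$-partner'' cannot literally apply to $w$ (who has no $\nu$-partner) or to a far endpoint matched by only one of the two matchings, so it should be asserted for the internal vertices of $P$ together with the separate base case; since your endpoint case analysis only ever uses the invariant for internal vertices (or for $w$ being free in $\nu$), nothing breaks. Your closing observation that strictness is genuinely needed (the statement fails with ties) and that the reduction to $|\Mopt|=|\Wopt|$ via Proposition~\ref{lem:existenceSM} would not avoid the alternating-path work is accurate; the paper sidesteps all of this by citation, which is the pragmatic choice given that the result is classical, whereas your version has the merit of being self-contained.
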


We further need a proposition regarding the man-optimal and woman-optimal stable matchings that implies Proposition~\ref{lem:rht}~\cite{GS85a}.

\begin{proposition}[\cite{DBLP:books/daglib/0066875}]
\label{lem:manwomanOpt}
For any instance of {\sc Stable Marriage} (or {\sc Functional Stable Marriage}), every stable matching $\mu$ satisfies the following conditions: every woman $w$ is either unmatched by both \Mopt\ and $\mu$ or $p_{w}(\Mopt{}(w))\geq p_{w}(\mu(w))$, and every man $m$ is either unmatched by both \Wopt\ and $\mu$ or $p_{m}(\Wopt{}(m))\geq p_{m}(\mu(m))$.
\end{proposition}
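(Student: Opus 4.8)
The statement is the classical fact that the man-optimal stable matching is simultaneously woman-pessimal, and dually that $\Wopt$ is man-pessimal; I would prove the assertion about women and $\Mopt$, the other being symmetric under exchanging the two sides. The plan is to argue by contradiction directly from the defining optimality property of $\Mopt$ (whose existence is guaranteed by Proposition~\ref{lem:existenceSM}), using only the blocking-pair definition --- in particular, without appealing to the Rural--Hospital Theorem (Proposition~\ref{lem:rht}), which this proposition is supposed to imply.

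Suppose some stable matching $\mu$ and some woman $w$ violate the claim. Reading $p_w(\cdot)$ (resp.\ $f_w(\cdot)$) of an ``unmatched partner'' as $+\infty$ --- which is consistent with the convention that any acceptable partner is preferred to being unmatched, and with the way the blocking-pair definition already treats unmatched people --- the violation means exactly that $p_w(\Mopt(w)) < p_w(\mu(w))$; in particular $w$ must actually be matched by $\Mopt$, say to $m := \Mopt(w)$, and $w$ strictly prefers $m$ to $\mu(w)$. Note that $(m,w)$ is an acceptable pair since it lies in $\Mopt$. Now I would feed the man $m$ into the optimality property of $\Mopt$: since $m$ is matched by $\Mopt$, that property gives $p_m(w) = p_m(\Mopt(m)) \le p_m(\mu(m))$, i.e.\ $m$ weakly prefers $w$ to its $\mu$-partner.

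It remains to split on this weak preference. If $m$ strictly prefers $w$ to $\mu(m)$, or $m$ is unmatched by $\mu$, then $(m,w)$ satisfies both clauses of the blocking-pair definition for $\mu$ (clause (i) from $m$'s preference, clause (ii) from $w$'s preference), contradicting the stability of $\mu$. Otherwise $m$ is matched by $\mu$ and $p_m(w) = p_m(\mu(m))$; since each preference function (or list) is injective, i.e.\ induces a strict total order on the acceptable partners, this equality forces $\mu(m) = w$, hence $\mu(w) = m = \Mopt(w)$, contradicting that $w$ strictly prefers $\Mopt(w)$ to $\mu(w)$. Either way we reach a contradiction, which proves the claim for women and $\Mopt$; the claim for men and $\Wopt$ follows by the same argument with the roles of men and women interchanged. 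The only delicate points are the uniform $+\infty$ treatment of unmatched people (so that the three-way ``unmatched by both / comparison'' phrasing collapses to a single inequality) and the use of injectivity to kill the degenerate equality case; everything else is a one-line blocking-pair check. As noted in the excerpt, every such structural statement transfers verbatim from {\sc Stable Marriage} to {\sc Functional Stable Marriage} through the order-preserving reduction, so the functional version needs no separate treatment.
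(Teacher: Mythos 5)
Your proof is correct, and there is nothing in the paper to compare it against: the paper states this proposition without proof, citing it as a classical result (man-optimality of $\Mopt$ implies woman-pessimality, and dually for $\Wopt$) from the Gusfield--Irving book. Your argument is the standard one --- take a woman $w$ who does strictly better under $\Mopt$ than under $\mu$, feed $m=\Mopt(w)$ into the defining optimality property of $\Mopt$, and exhibit $(m,w)$ as a blocking pair for $\mu$ --- and you handle the delicate points properly: the $+\infty$ reading of unmatched people is consistent with the paper's convention that any acceptable partner is preferred to being unmatched, the equality case is correctly killed by injectivity of the preference functions, and you rightly avoid invoking Proposition~\ref{lem:rht}, which the paper notes is itself implied by this proposition.
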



\section{Kernel}\label{sec:kernel}

In this section, we design a kernelization algorithm for \minBSM. More precisely, we prove the following theorem.

\begin{theorem}\label{thm:kernelBSM}
\minBSM\ admits a kernel that has at most $3t$ men, at most $3t$ women, and such that each person has at most $2t+1$ acceptable partners.
\end{theorem}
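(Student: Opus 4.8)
The plan is to build the kernel in several phases, each shrinking a different aspect of the instance while preserving the answer and not increasing the parameter $t$. Throughout, by Observation~\ref{lem:basic} we may work with the functional variant \FBSM, which gives us freedom to relabel preference values as needed. The key structural handle is the following: by Proposition~\ref{lem:manwomanOpt}, in \emph{any} stable matching $\mu$ every man $m$ does at least as well as in $\Wopt$ and every woman $w$ does at least as well as in $\Mopt$. Hence, writing $\Om = \sum_{(m,w)\in \Mopt} f_m(w)$ and $\Ow = \sum_{(m,w)\in \Wopt} f_w(m)$, the man-side cost of any stable matching is at least $\Om$ and the woman-side cost is at least $\Ow$, while $\bal{\mu}\le k$ forces both of these costs to be at most $k$. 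So, assuming w.l.o.g.\ $\Om \le \Ow$ (hence $t = k-\Om$), we know that in any solution $\mu$, the total ``excess'' $\sum_{(m,w)\in\mu}(f_m(w) - f_m(\Mopt(m)))$ of the men over their man-optimal partners is at most $k-\Om = t$, and likewise $\sum_{(m,w)\in\mu}(f_w(m)-f_w(\Wopt(w))) \le k-\Ow \le t$.

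First I would show that a person who is matched to the \emph{same} partner in $\Mopt$ and $\Wopt$ is matched to that partner in every stable matching, and is essentially ``irrelevant''; such pairs can be contracted out (removed, after deleting them from everyone's lists), decreasing $k$ by the man-cost and woman-cost of the removed pair and leaving $t$ unchanged. After this, every remaining person strictly prefers their $\Mopt$-partner to their $\Wopt$-partner (or vice versa). Second, I would bound the number of remaining people: if a man $m$ is matched in a solution $\mu$, his excess $f_m(\mu(m)) - f_m(\Mopt(m))$ is a nonnegative integer, and the men with excess $\ge 1$ in $\mu$ contribute at least their count to the total excess $\le t$; but I also need to handle men with zero excess (matched to their $\Mopt$-partner). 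The crucial observation is that for a man with zero excess in $\mu$, his partner $w = \Mopt(m)$ must, on the women's side, pay $f_w(m) = f_w(\Mopt(w))$, which is the \emph{worst} a woman ever does; since the women's total excess over $\Wopt$ is at most $t$, and $f_w(\Mopt(w)) - f_w(\Wopt(w)) \ge 1$ for every ``non-trivial'' woman, there are at most $t$ such women, hence at most $t$ zero-excess men. Combining, the number of men matched in a solution is at most $2t$ (men with positive excess) $+ t$ (men with zero excess) — and after arguing that unmatched people can be discarded via the Rural-Hospital Theorem (Proposition~\ref{lem:rht}), we get at most $3t$ men and symmetrically at most $3t$ women. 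The counting here needs care: I expect the honest bound to come from a more careful double-counting of the two excess budgets simultaneously, and the constant $3$ in the statement suggests exactly the ``$2t + t$'' split sketched above.

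Third, and this is where the $2t+1$ bound on list lengths comes from, I would truncate preference lists. For a man $m$ with $\Mopt(m) = w_0$, any partner $w$ with $f_m(w) > f_m(w_0)$ that $m$ could receive in a solution has excess $f_m(w)-f_m(w_0) \le t$, so $m$ can afford at most $t$ partners strictly worse than $w_0$; symmetrically $m$ can afford at most $t$ partners strictly better than $\Wopt(m)$ (by the woman-side budget, since such a $w$ prefers $m$ over her $\Wopt$-partner, costing her excess); together with $w_0$ itself this gives roughly $2t+1$ relevant partners, and the rest can be deleted from $m$'s list — but one must check that deleting a pair from both endpoints' lists is safe, i.e.\ does not create or destroy blocking pairs relevant to solutions of balance $\le k$, and does not disturb $\Mopt$, $\Wopt$, $\Om$, $\Ow$ in a way that increases $t$. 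Re-establishing these invariants after each deletion (recomputing the optimal matchings and re-checking the budget) is the main technical obstacle: the safety proofs of the reduction rules, and in particular showing that the relevant ``window'' of each list is stable under the whole process, is what makes the argument delicate rather than the arithmetic. Finally I would observe that after all reductions the instance has $O(t)$ people each with $O(t)$-length lists, so its total encoding size is $O(t^2)$, which is a polynomial kernel, and translating back from \FBSM\ to \minBSM\ via Observation~\ref{lem:basic} completes the proof.
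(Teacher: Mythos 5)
Your overall skeleton is the same as the paper's (simplify the lists, eliminate happy pairs, bound the number of sad people via the two excess budgets $k-\Om\le t$ and $k-\Ow\le t$, then truncate lists), and your count of sad men---at most $t$ with positive man-side excess, plus at most $t$ whose partner $\Mopt(m)$ then pays her worst possible woman-side value---is essentially the paper's argument for the bound of $2t$ sad men per side (note that by your own budgets this gives $t+t=2t$, not the ``$2t+t$'' you write). However, two of your steps have genuine gaps. First, eliminating a happy pair $(m_h,w_h)$ by ``decreasing $k$ by the man-cost and woman-cost'' is not well defined: the balance is a maximum of two sums, and the pair contributes $f_{m_h}(w_h)$ to one and $f_{w_h}(m_h)$ to the other, which in general differ, so no single decrement of $k$ yields an equivalent instance. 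The paper instead keeps $k$ fixed and \emph{transfers} the two contributions onto an arbitrary sad man and sad woman by adding $f_{m_h}(w_h)$, resp.\ $f_{w_h}(m_h)$, to every value of their preference functions; moreover, for the deletion to be safe at all one must first clean prefixes/suffixes so that a happy person's only acceptable partner is his/her spouse---a phase you omit, and without which removing $w_h$ from other men's lists can create new stable matchings in the reduced instance.

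Second, and more importantly, the way back from \FBSM\ to \minBSM\ is not ``Observation 1 in reverse.'' After truncation the preference functions have gaps, and collapsing a gapped function into a preference list changes the costs of different stable matchings by different amounts, so equivalence is lost; also, $O(t)$ people with $O(t)$-length domains does \emph{not} bound the instance size in $t$, because the function values themselves can still be huge. The paper resolves this by first shrinking values (a rule that decrements $k$) until every image lies in $\{1,2,\ldots,t+1\}$, and then inserting $t$ dummy men and $t$ dummy women forming happy pairs whose entries fill the gaps. This is exactly where the theorem's numbers come from: $2t$ sad people per side with lists of length at most $t+1$, plus $t$ dummies per side with lists of length at most $2t+1$---not from your ``$2t+t$'' split of sad men, and not from your per-man decomposition ``$t$ partners worse than $\Mopt(m)$ plus $t$ better than $\Wopt(m)$ plus one,'' which double-counts the middle of the list (after cleaning, every acceptable partner of $m$ lies between $\Mopt(m)$ and $\Wopt(m)$, and the correct per-person bound in the functional kernel is $t+1$, from the single budget $k-\Om\le t$ together with injectivity). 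So the sad-people counting is sound, but the happy-pair elimination and the functional-to-list conversion both require the missing ideas above.
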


%



%
%
%



\subsection{Functional Balanced Stable Marriage}\label{sec:functional}

To prove Theorem \ref{thm:kernelBSM}, we first prove the following result for the \FBSM\ problem. 

\begin{lemma}\label{lem:kernelFBSM}
\pGBSM\ admits a kernel with at most $2t$ men, at most $2t$ women, and such that the image of the preference function of each person is a subset of $\{1,2,\ldots,t+1\}$.
\end{lemma}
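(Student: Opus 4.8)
The plan is to set up a sequence of safe reduction rules, each peeling off a different source of "large" data in the instance, and then argue that once none of them applies the instance must be small. Throughout I will freely use Propositions~\ref{lem:existenceSM}, \ref{lem:rht} and \ref{lem:manwomanOpt}, and the fact that by the Rural–Hospital theorem every stable matching matches exactly the same set of people; in particular I may assume (by first deleting every person matched by no stable matching — this is the very first reduction rule, and it is safe because such people are irrelevant to $\mathrm{SM}$, to $\Bal$, to $\Om$ and to $\Ow$) that every stable matching is perfect and $|M|=|W|=:n$.

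\medskip
\noindent\textbf{Step 1: shrink the preference functions.} The first substantive observation is that only the \emph{relative order} of acceptable partners matters for which matchings are stable, and the \emph{values} $f_a(\mu(a))$ matter only through the two sums in $\bal{\mu}$. Fix the man-optimal matching $\Mopt$ and woman-optimal matching $\Wopt$, computable in polynomial time. For every man $m$, Proposition~\ref{lem:manwomanOpt} tells us that in \emph{every} stable matching $\mu$ we have $f_m(\Mopt(m)) \le f_m(\mu(m)) \le f_m(\Wopt(m))$; symmetrically for women. So the only partners of $m$ that can ever be used lie, in $m$'s order, between $\Mopt(m)$ and $\Wopt(m)$ inclusive. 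I would first delete every acceptable pair $(m,w)$ for which $w$ is strictly better for $m$ than $\Mopt(m)$ \emph{or} strictly worse for $m$ than $\Wopt(m)$ (and symmetrically from $w$'s side); one checks this does not create or destroy blocking pairs among the surviving matchings, so it is safe. Now "re-normalize" each preference function: replace $f_a$ by the function that sends the $i$-th most preferred surviving partner of $a$ to $i$. This preserves all stable matchings, preserves $\Bal$, and can only decrease $\Om,\Ow$ and hence cannot increase $t$; and after it, $f_a(\mu(a)) \le \deg(a)$, the number of $a$'s remaining acceptable partners, for every stable $\mu$. The genuinely delicate part of this step — and I expect it to be the main obstacle — is bounding the \emph{image} of each $f_a$ by $\{1,\dots,t+1\}$ rather than merely by $\{1,\dots,\deg(a)\}$: for that I would argue that if some person $a$ has $f_a(\Wopt(a)) \ge t+2$ (taking $a$ to be a woman, say, so $\Wopt$ is her pessimal stable partner), then already $\bal{\Wopt}$ — or rather the relevant side-sum in it — forces $\Bal$ above $\min\{\Om,\Ow\}+t = k$, since $\Ow = \sum_{w} f_w(\Wopt(w))$ already counts a contribution of at least $t+2$ from $a$ while $\Om$ is obtained from a matching in which $a$ contributes at least $1$; making this "charging" precise (comparing the man-sum of $\Wopt$ to $\Om$ person by person, using Proposition~\ref{lem:manwomanOpt} in the other direction) is the crux, and when it fires it lets us either answer \No\ outright or conclude every relevant value is at most $t+1$.

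\medskip
\noindent\textbf{Step 2: bound the number of people.} Assume now every preference-function image is contained in $\{1,\dots,t+1\}$. Since $\Mopt$ is a perfect matching on $n$ men, $\Om = \sum_{m\in M} f_m(\Mopt(m)) \ge n$ because each summand is a positive integer; likewise $\Ow \ge n$. Hence $\min\{\Om,\Ow\}\ge n$. If $k \ge 2n$ then $\bal{\mu}\le \sum_m f_m(\mu(m)) + \sum_w f_w(\mu(w)) \le$ (two sums each at most $(t+1)n$) — this is too weak, so instead I would argue directly: the woman-optimal matching $\Wopt$ has $\sum_w f_w(\Wopt(w)) = \Ow$ and, by the normalization and Proposition~\ref{lem:manwomanOpt}, $\sum_m f_m(\Wopt(m)) \le \sum_m f_m(\Wopt(m))$, which I bound against $k$; more cleanly, if $n > 2t$ then $\min\{\Om,\Ow\} \ge n > 2t \ge k - \min\{\Om,\Ow\} + \min\{\Om,\Ow\}$ gives a contradiction only after rearranging, so the honest statement is: whenever $n \ge 2t+1$ I claim $\Bal > k$ and we return \No. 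To see it, take the side achieving $\min$, say $\Om\le\Ow$; then for \emph{any} stable $\mu$, $\sum_w f_w(\mu(w)) \ge \Ow \ge \Om$... — here I would instead use that $\bal{\mu}\ge \frac12(\sum_m f_m(\mu(m)) + \sum_w f_w(\mu(w))) \ge \frac12(\Om+\Ow) \ge \Om \ge n$, and separately that $\bal{\Wopt}$ and $\bal{\Mopt}$ are both at least $\max\{\Om,\Ow\}$... The clean route, which I will take, is: $\Bal \ge \max\{\Om,\Ow\} \ge \min\{\Om,\Ow\} \ge n$ always (since the man-sum of any stable matching is $\ge\Om$ and its woman-sum is $\ge\Ow$), so $k\ge\Bal\ge n$ forces $n \le k = t + \min\{\Om,\Ow\} \le t + $ ... which is circular unless we also know $\min\{\Om,\Ow\}$ is not much bigger than $n$. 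That extra input comes from Step 1's normalization plus a matching-exchange argument showing $\min\{\Om,\Ow\}\le n + t$ in any \Yes-instance (essentially: the balanced matching has one side-sum $\le k$ and the other $\ge$ the corresponding optimum, and comparing the two optima across this matching costs at most $t$ per... ), whence $n \le k - \min\{\Om,\Ow\} + \min\{\Om,\Ow\} - ?$ — concretely $\min\{\Om,\Ow\} \le n + t$ combined with $n\le\min\{\Om,\Ow\}$ is consistent with $n$ up to $n+t$, not a bound; so the real bound on $n$ must come from the other direction. I will therefore prove: in a \Yes-instance, $\min\{\Om,\Ow\}\le k$ and $\bal{\mu^\star}\le k$ for the balanced matching $\mu^\star$, and the side of $\mu^\star$ \emph{not} achieving the max of its two sums is at least its optimum; chasing this gives $2n \le \Om + \Ow \le 2k - $ (something nonnegative) only if both optima are $\le k$, and since each optimum is $\ge n$ we would need... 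I will pin down that $n \le 2t$ by the argument that $\Om + \Ow \le \bal{\mu^\star} + (\text{other side of }\mu^\star) \le k + k$ is false but $\Om \le \bal{\mu^\star}\le k$ and $\Ow\le\bal{\mu^\star}\le k$ are... not true either in general. Given the stated target "at most $2t$ men", the correct and intended argument is the short one: $n \le \min\{\Om,\Ow\}$ is false; rather $n\le$ each of the two side-sums of the \emph{balanced} matching, the larger of which is $\le k$, so $n\le k$; and $\min\{\Om,\Ow\}\le n+\text{(small)}$...

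\medskip
\noindent\textbf{Summary of the plan and the obstacle.} I would (i) delete unmatched-in-all-stable-matchings people; (ii) delete, for each person, every acceptable partner outside the man-optimal/woman-optimal window, and re-normalize; (iii) show the re-normalized values never exceed $t+1$ or else answer \No; (iv) show the number of men (hence women) never exceeds $2t$ or else answer \No; (v) output the resulting instance, observing the parameter has not increased. Safety of (i) and (ii) is a routine blocking-pair check; (v) is immediate. The two places where real work is needed are (iii) and (iv), and both hinge on the same combinatorial device: a partner-by-partner comparison, across the man-optimal and woman-optimal matchings (and the hypothetical balanced matching $\mu^\star$), of the contributions to the two side-sums, using Proposition~\ref{lem:manwomanOpt} to sandwich each person's value. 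I expect step (iv) — extracting a genuine upper bound on $n$ in terms of $t$ alone, rather than the circular $n\le\min\{\Om,\Ow\}$ — to be the main obstacle, and the resolution will be to observe that in a \Yes-instance every woman's woman-optimal value and every man's man-optimal value is $1$ for "most" people (those whose window is trivial), and each person with a nontrivial window contributes a unit of "slack" that is charged against $t$; counting these shows all but $\le 2t$ people are forced, and for the forced ones the man-optimal and woman-optimal matchings coincide so they may be removed, leaving $\le 2t$ men and $\le 2t$ women.
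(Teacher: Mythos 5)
Your high-level plan (restrict each list to the window between $\Mopt$- and $\Wopt$-partners, bound the preference values, remove the \happy\ people and bound the \sad\ ones) is the same skeleton as the paper's, but two of the three load-bearing steps are either wrong or missing. First, the ``re-normalization'' in your Step~1 is not safe: replacing $f_a$ by the rank function compresses different people's values by different, matching-dependent amounts, so $\bal{\mu}$ changes while $k$ does not, and the answer to ``$\Bal\leq k$?'' can flip. Moreover your accompanying claim is backwards: decreasing $\Om$ and $\Ow$ while keeping $k$ fixed \emph{increases} $t=k-\min\{\Om,\Ow\}$. This is precisely why the functional variant exists: the paper never re-ranks. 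It only (i) deletes pairs $(m,w)$ with $f_m(w)>(k-\Om)+f_m(\Mopt(m))$ or $f_w(m)>(k-\Ow)+f_w(\Wopt(w))$ (no values change), and (ii) shrinks gaps only when it can lower the \emph{entire} function of one man and one woman simultaneously, decrementing $k$ by $1$, so that $\Bal,\Om,\Ow,k$ all move in lockstep; gaps are allowed to survive, and combining the truncation with $f_m(\Mopt(m))\leq \Om-(|M|-1)$ and $\min\{\Om,\Ow\}=|M|=|W|$ after gap-shrinking gives the $t+1$ bound you only gesture at.

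Second, the bound ``at most $2t$ men'' is never actually proved: your Step~2 is, by your own admission, circular, and the one-line charging claim in the summary (``each person with a nontrivial window contributes a unit of slack charged against $t$'') skips the essential case analysis. A \sad\ man $m$ in a balanced matching $\mu$ may well have $\mu(m)=\Mopt(m)$, contributing no slack on the men's side; the paper's argument partitions the \sad\ men according to whether the slack they force lands on the women's sum (when $\mu(m)$ is not matched to her $\Wopt$-partner) or on the men's sum (when $\mu(m)=\Wopt(m)\neq\Mopt(m)$), and only then does $|\Msad|>2t$ force one of the two sums above $\min\{\Om,\Ow\}+t\geq k$ --- this is also why the bound is $2t$ and not $t$. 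Finally, ``the forced ones \ldots may be removed'' hides a real difficulty: deleting a \happy\ pair $(m_h,w_h)$ subtracts $f_{m_h}(w_h)$ from one sum and $f_{w_h}(m_h)$ from the other, and these two numbers generally differ, so no single adjustment of $k$ restores equivalence. The paper handles this by transferring the two contributions into the preference functions of an arbitrary \sad\ man and \sad\ woman (after separately disposing of the case with no \sad\ people), a device your proposal does not supply. (Your preliminary deletion of people unmatched by every stable matching is true but also asserted without proof; the paper makes it immediate by cleaning suffixes first.)
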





To obtain the desired kernelization algorithm, we execute the following plan.
\begin{enumerate}
\setlength{\itemsep}{-1pt}
\item {\bf Cleaning Prefixes and Suffixes.} Simplify the preference functions by ``cleaning'' suffixes and thereby also ``cleaning'' prefixes.
\item {\bf Perfect Matching.} Zoom into the set of people matched by every stable matching.
\item {\bf Overcoming Sadness.} Bound the number of ``\sad'' people. Roughly speaking, a ``\sad'' person $a$ is one whose best attainable partner, $b$, does not reciprocate by considering $a$ as the best attainable partner.
\item {\bf Marrying Happy People.} Remove ``\happy'' people from the instance.
\item {\bf Truncating High-Values.} Obtain ``compact''   preference functions by truncating ``high-values''.   
\item {\bf Shrinking Gaps.} Shrink some of the gaps created by previous steps. 
\end{enumerate} 

Each of the following subsections captures one of the steps above. In what follows, we let $\I$ denote our current instance of \pGBSM. Initially, this instance is the input instance, but as the execution of our algorithm progresses, the instance is modified. The reduction rules that we present are applied {\em exhaustively} in the order of their presentation. In other words, at each point of time, the first rule whose condition is true is the one that we apply next. In particular, the execution terminates once the value of $t$ drops below $0$, as implied by the following rule.

\begin{rr}\label{rr:End1}
If $k<\max\{\Om,\Ow\}$, then return \No.
\end{rr}

\addtocounter{lemma}{1}

\begin{lemma}\label{lem:End1Safe}
Reduction Rule \ref{rr:End1} is safe.
\end{lemma}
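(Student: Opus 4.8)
The plan is to show that Reduction Rule~\ref{rr:End1} is safe, i.e., that whenever $k < \max\{\Om,\Ow\}$ the instance $\I$ is a \No-instance of \pGBSM, so that returning \No\ is correct (and, trivially, the converse direction is vacuous since the rule never fires on a \Yes-instance it would misclassify).

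First I would recall that by definition $\Bal = \min_{\mu \in \mathrm{SM}} \bal{\mu}$ and that for \emph{every} stable matching $\mu$ we have $\bal{\mu} = \max\{\sum_{(m,w)\in\mu} f_m(w), \sum_{(m,w)\in\mu} f_w(m)\}$. The key structural input is Proposition~\ref{lem:manwomanOpt} (together with Proposition~\ref{lem:existenceSM}): for any stable matching $\mu$, every man $m$ satisfies $f_m(\Mopt(m)) \le f_m(\mu(m))$, since $\Mopt$ is man-optimal. Summing over all matched men (by Proposition~\ref{lem:rht} the matched set is the same for all stable matchings, so the index set of the sum is well-defined), this gives $\sum_{(m,w)\in\Mopt} f_m(w) \le \sum_{(m,w)\in\mu} f_m(w)$, i.e. $\Om \le \sum_{(m,w)\in\mu} f_m(w) \le \bal{\mu}$. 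Symmetrically, using woman-optimality of $\Wopt$, $\Ow \le \sum_{(m,w)\in\mu} f_w(m) \le \bal{\mu}$. Hence $\max\{\Om,\Ow\} \le \bal{\mu}$ for every $\mu \in \mathrm{SM}$, and therefore $\max\{\Om,\Ow\} \le \Bal$.

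Consequently, if $k < \max\{\Om,\Ow\}$, then $k < \max\{\Om,\Ow\} \le \Bal$, so $\Bal \le k$ fails and $\I$ is a \No-instance; returning \No\ is correct. For the ``if and only if'' form required by the definition of safeness: the rule replaces $(\I,k)$ by a trivial \No-instance precisely when the condition $k < \max\{\Om,\Ow\}$ holds, and we have just argued that in that case $(\I,k)$ is itself a \No-instance; when the condition fails the rule does nothing. In either case $(\I,k)$ is a \Yes-instance iff the output instance is, which is exactly safeness.

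I do not anticipate a genuine obstacle here; the only point that needs a little care is making sure the summation indices in the inequalities $\Om \le \sum_{(m,w)\in\mu} f_m(w)$ are legitimate, which is precisely what the Rural-Hospital Theorem (Proposition~\ref{lem:rht}) guarantees — the set of matched people, and hence of man--woman pairs being summed over, is identical across $\Mopt$, $\Wopt$, and $\mu$. (If one prefers not to invoke the perfectness convention in the footnote, one can still run the argument over the common matched set; under that convention all stable matchings are perfect and the sums simply range over all of $M$, resp.\ $W$.)
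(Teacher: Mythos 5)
Your proof is correct and follows essentially the same route as the paper: the paper simply asserts that $\bal{\mu}\geq\max\{\Om,\Ow\}$ for every stable matching $\mu$ and concludes $\Bal>k$, while you spell out this inequality via the optimality of $\Mopt$ and $\Wopt$ (note the fact you need is the defining property of man-/woman-optimality rather than Proposition~\ref{lem:manwomanOpt}, which states the reverse comparisons). The extra care about summation indices via Proposition~\ref{lem:rht} is fine but not essential to the argument.
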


\begin{proof}
For every $\mu\in\mathrm{SM}$, it holds that $\bal{\mu}\geq\max\{\Om,\Ow\}$. Thus, if $k<\max\{\Om,\Ow\}$, then every $\mu\in\mathrm{SM}$ satisfies $\bal{\mu}>k$. In this case, we conclude that $\Bal>k$, and therefore $\I$ is a \No-instance.
\end{proof}

Note that if $k<0$, then it also holds that $t<0$, and that if $t<0$, then $k<\min\{\Om,\Ow\}$. We remark that by Proposition \ref{lem:existenceSM}, it would be clear that each of our reduction rules can indeed be implemented in polynomial time.

\paragraph{Cleaning Prefixes and Suffixes.}
We begin by modifying the images of the preference functions. We remark that it is {\em necessary} to perform this step first as otherwise the following steps would not be correct. To clean prefixes while ensuring {\em both} safeness and that the parameter $t$ does not increase, we would actually need to clean {\em suffixes} first. Formally, we define suffixes as follows.

\begin{definition}
Let $(m, w)$ denote an acceptable pair. If $m$ is matched by $\Wopt$ and $f_{m}(w) > f_{m}(\Wopt(m))$, then we say that $w$ {\em belongs to the suffix of} $m$. Similarly, if $w$ is matched by $\Mopt$ and $f_{w}(m) > f_{w}(\Mopt(w))$, then we say that $m$ belongs to the suffix of~$w$. 
\end{definition}

By Proposition \ref{lem:manwomanOpt}, we have the following observation.

\begin{observation}\label{obs:suffix}
Let $(m, w)$ denote an acceptable pair such that one of its members belongs to the suffix of the other member. Then, there is no $\mu\in\mathrm{SM}(\I)$ that matches $m$ with $w$.
\end{observation}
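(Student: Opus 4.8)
The plan is to invoke Proposition~\ref{lem:manwomanOpt}, which characterizes how every stable matching compares to $\Mopt$ and $\Wopt$. First I would fix an acceptable pair $(m,w)$ and, without loss of generality, assume that it is $w$ that belongs to the suffix of $m$; the other case is symmetric, obtained by swapping the roles of men and women (and hence of $\Mopt$ and $\Wopt$). By the definition of ``belongs to the suffix'', $m$ is matched by $\Wopt$ and $f_{m}(w) > f_{m}(\Wopt(m))$.

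Next, suppose toward a contradiction that there is some $\mu \in \mathrm{SM}(\I)$ with $\mu(m) = w$. Since $m$ is matched by $\Wopt$, Proposition~\ref{lem:manwomanOpt} (the second clause, concerning men and $\Wopt$) applies to $\mu$ and yields $f_{m}(\Wopt(m)) \geq f_{m}(\mu(m)) = f_{m}(w)$. This directly contradicts $f_{m}(w) > f_{m}(\Wopt(m))$. Hence no such $\mu$ exists, which is exactly the claim. (Strictly speaking, Proposition~\ref{lem:manwomanOpt} is stated with $p_a(\cdot)$ notation for {\sc Stable Marriage}, but as the excerpt notes right after its statement, it transfers verbatim to {\sc Functional Stable Marriage} with $f_a$ in place of $p_a$.)

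For the symmetric case, where $m$ belongs to the suffix of $w$: then $w$ is matched by $\Mopt$ and $f_{w}(m) > f_{w}(\Mopt(w))$, and if some stable $\mu$ matched $w$ with $m$, the first clause of Proposition~\ref{lem:manwomanOpt} would give $f_{w}(\Mopt(w)) \geq f_{w}(\mu(w)) = f_{w}(m)$, again a contradiction. I do not anticipate any genuine obstacle here; the statement is an almost immediate unpacking of the definition against Proposition~\ref{lem:manwomanOpt}. The only mild care needed is to make sure the hypothesis ``$m$ is matched by $\Wopt$'' (resp. ``$w$ is matched by $\Mopt$'') is what licenses dropping the ``unmatched'' alternative in Proposition~\ref{lem:manwomanOpt}, so that the inequality on function values is the one that actually holds.
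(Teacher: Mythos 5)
Your argument is correct and matches the paper exactly: the paper derives Observation~\ref{obs:suffix} directly from Proposition~\ref{lem:manwomanOpt}, and your unpacking of the two suffix cases (using that the relevant person is matched to rule out the ``unmatched'' alternative) is precisely the intended justification.
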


For every person $a$, let $\worst(a)$ be the person in ${\cal A}(a)$ to whom $f_a$ assigns its worst preference value. More precisely, $\worst(a)=\mathrm{argmax}_{b\in{\cal A}(a)}f_a(b)$. We will now clean~suffixes.

\begin{rr}\label{rr:cleansuffix}\label{rule:cleanSuffix}
If there exists a person $a$ such that $\worst(a)$ belongs to the suffix of $a$, then define the preference functions as follows.
\begin{itemize}
\item $f'_{a}= f_{a}\restrict{\C{A}(a)\sm \{\worst(a)\}}$.
\item $f'_{\worst(a)}= f_{\worst(a)}\restrict{\C{A}(\worst(a))\sm\{a\}}$.
\item For all $b\in M\cup W\sm \{a, \worst(a)\}$: $f'_{b}= f_{b}$. 
\end{itemize}
The new instance is $\J=(M, W, \{f'_{m'}\}_{m'\in M}, \{f'_{w'}\}_{w'\in W}, k)$.
\end{rr}

\begin{lemma}\label{lem:cleanSuffix}
Reduction Rule \ref{rule:cleanSuffix} is safe, and $t({\cal I})=t({\cal J})$. 
\end{lemma}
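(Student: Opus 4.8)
The plan is to verify two things: that the set of stable matchings is unchanged by the rule, $\mathrm{SM}(\I)=\mathrm{SM}(\J)$, and that the key parameters ($\Om$, $\Ow$, hence $t$) are preserved. For the first part, write $a$ and $b=\worst(a)$ for the pair whose mutual rankings we delete. By definition of the rule, $b$ belongs to the suffix of $a$ (and consequently, by symmetry of the suffix notion and Proposition \ref{lem:manwomanOpt}, we should also check $a$ lies in the suffix of $b$ — or at least that the pair $(a,b)$ is never used by a stable matching). By Observation \ref{obs:suffix}, no $\mu\in\mathrm{SM}(\I)$ matches $a$ to $b$. So every stable matching of $\I$ is still a valid matching in $\J$ (it never used the deleted pair), and conversely every matching of $\J$ is a matching of $\I$. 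It remains to argue stability is preserved in both directions. Going from $\I$ to $\J$: deleting acceptable pairs can only destroy blocking pairs, never create them, so a matching stable in $\I$ is stable in $\J$. Going from $\J$ to $\I$: a matching $\mu$ stable in $\J$ could in principle be blocked in $\I$ only by the newly-reinstated pair $(a,b)$; I would rule this out by showing $(a,b)$ is not a blocking pair for $\mu$. Since $b=\worst(a)$ is $a$'s least preferred partner, and $\mu(a)$ (which exists — see below, or handle the unmatched case) is some acceptable partner of $a$ ranked at least as well as $b$, we have $f_a(\mu(a))\le f_a(b)$, in fact strictly unless $\mu(a)=b$, which is impossible in $\J$; so condition (i) for $(a,b)$ blocking fails. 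Hence $\mu$ is stable in $\I$ as well.

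For the parameter invariance, the crucial observation is that deleting $b=\worst(a)$ from $a$'s list (and $a$ from $b$'s list) changes neither $\Mopt$ nor $\Wopt$: since $(a,b)$ is used by no stable matching of $\I$, in particular $\Mopt(a)\ne b$ and $\Wopt(a)\ne b$ (and symmetrically for $b$), so both $\Mopt$ and $\Wopt$ are still matchings in $\J$; combined with $\mathrm{SM}(\J)\subseteq\mathrm{SM}(\I)$ and the fact that man-optimality/woman-optimality are determined by $\mathrm{SM}$, we get $\Mopt(\I)=\Mopt(\J)$ and $\Wopt(\I)=\Wopt(\J)$. Moreover the preference \emph{values} $f_a,f_b$ are merely restricted, not rescaled, so $f_m(w)$ is unchanged for every pair in $\Mopt$ and $f_w(m)$ unchanged for every pair in $\Wopt$. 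Therefore $\Om(\I)=\Om(\J)$ and $\Ow(\I)=\Ow(\J)$, and since $k$ is untouched, $t(\I)=k-\min\{\Om(\I),\Ow(\I)\}=k-\min\{\Om(\J),\Ow(\J)\}=t(\J)$.

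One technical wrinkle to dispatch is the possibility that $a$ or $b$ is unmatched by $\Wopt$ or $\Mopt$: the suffix definition already presupposes the relevant person is matched by the relevant optimal matching, so $a$ is matched by $\Wopt$ and (if also $a$ is in $b$'s suffix) $b$ is matched by $\Mopt$; by the Rural–Hospital Theorem (Proposition \ref{lem:rht}) this means $a$ and $b$ are matched in \emph{every} stable matching, so the ``unmatched'' cases in the blocking-pair analysis above do not arise. I expect the main obstacle to be the $\J\to\I$ direction of stability: one must be careful that the only candidate new blocking pair is exactly $(a,b)$ and then use the fact that $b$ is $a$'s worst partner together with Rural–Hospital to force $f_a(\mu(a))<f_a(b)$, killing the block. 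Everything else is bookkeeping about restricting functions and invoking Observation \ref{obs:suffix} and Propositions \ref{lem:rht} and \ref{lem:manwomanOpt}.
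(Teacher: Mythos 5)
Your proposal is correct and follows essentially the same route as the paper: show $\mathrm{SM}(\I)=\mathrm{SM}(\J)$ (forward via Observation \ref{obs:suffix} and the fact that deleting acceptable pairs cannot create blocking pairs; backward by isolating $(a,\worst(a))$ as the only possible new blocking pair and killing it with the definition of $\worst(a)$), then note the preference values are unchanged on the surviving pairs so that $\Bal$, $\Om$, $\Ow$, and hence $t$, are preserved. Your extra care about the unmatched case and about $\Mopt,\Wopt$ being preserved is a harmless repackaging of the same argument.
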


\begin{proof}
By the definition of the new preference functions, we have that for every $\mu\in\mathrm{SM}(\I)\cap\mathrm{SM}(\J)$, it holds that $\sum_{(m,w)\in \mu}f_{m}(w)=\sum_{(m,w)\in \mu}f'_{m}(w)$ and $\sum_{(m,w)\in \mu}f_{w}(m)=\sum_{(m,w)\in \mu}f'_{w}(m)$. In particular, this means that to conclude that $\Bal(\I)=\Bal(\J)$ (which implies safeness) as well as that $\Om(\I)=\Om(\J)$ and $\Ow(\I)=\Ow(\J)$ (which implies that $t(\I)=t(\J)$), it is sufficient to show that SM$(\I)=\mathrm{SM}(\J)$. For this purpose, first consider some $\mu\in\mathrm{SM}(\I)$. By Observation \ref{obs:suffix}, it holds that $(a,\worst(a))\notin\mu$. Hence, $\mu$ is a matching in $\J$. Moreover, if $\mu$ has a blocking pair in $\J$, then by the definition of the new preference functions, it is also a blocking pair in $\I$. Since $\mu$ is stable in $\I$, we have that $\mu\in\mathrm{SM}(\J)$.

In the second direction, consider some $\mu\in\mathrm{SM}(\J)$. Then, it is clear that $\mu$ is a matching in $\I$. Moreover, if $\mu$ has a blocking pair $(m,w)$ in $\I$ that is not $(a,\worst(a))$, then $(m,w)$ is an acceptable pair in $\J$, and therefore by the definition of the new preference functions, we have that $(m,w)$ is also a blocking pair in $\J$. Hence, since $\mu$ is stable in $\J$, the only pair that can block $\mu$ in $\I$ is $(a,\worst(a))$. Thus, to show that $\mu\in\mathrm{SM}(\I)$, it remains to prove that $(a,\worst(a))$ cannot block $\mu$ in $\I$. Suppose, by way of contradiction, that $(a,\worst(a))$ blocks $\mu$ in $\I$. In particular, this means that $f_{a}(\worst(a)) < f_{a}(\mu(a))$. However, this contradicts the definition of $\worst(a)$.
\end{proof}

By cleaning suffixes, we actually also accomplish the objective of cleaning prefixes, which are defined as follows.

\begin{definition}
Let $(m, w)$ denote an acceptable pair. If $m$ is matched by $\Mopt$ and $f_{m}(w) < f_{m}(\Mopt(m))$, then we say that $w$ {\em belongs to the prefix of} $m$. Similarly, if $w$ is matched by $\Wopt$ and $f_{w}(m) < f_{w}(\Wopt(w))$, then we say that $m$ belongs to the prefix of~$w$. 
\end{definition}

Let us now claim that we have indeed succeeded in cleaning prefixes.

\begin{lemma}\label{lem:prefixCleaned}
Let $\I$ be an instance of \FBSM\ on which Reduction Rules \ref{rr:End1} to \ref{rule:cleanSuffix} have been exhaustively applied. Then, there does not exist an acceptable pair $(m,w)$ such that one of it members belongs to the prefix of the other one.
\end{lemma}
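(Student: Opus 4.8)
The plan is to argue contrapositively and use the symmetry between prefixes and suffixes already exploited in Observation~\ref{obs:suffix}. Suppose, towards a contradiction, that after exhaustive application of Reduction Rules~\ref{rr:End1} through \ref{rule:cleanSuffix} there still exists an acceptable pair $(m,w)$ such that, say, $w$ belongs to the prefix of $m$; the case where $m$ belongs to the prefix of $w$ is symmetric. By definition this means $m$ is matched by $\Mopt$ and $f_m(w) < f_m(\Mopt(m))$. The key observation I would make is that ``$w$ is in the prefix of $m$'' forces ``$m$ is in the suffix of $w$'': since the pair $(m,w)$ is acceptable and $\Mopt$ is stable, the fact that $m$ strictly prefers $w$ to $\Mopt(m)$ means that $w$ cannot strictly prefer $m$ to $\Mopt(w)$ (otherwise $(m,w)$ blocks $\Mopt$); and since all preference functions are injective we get $f_w(m) > f_w(\Mopt(w))$, i.e.\ $m$ belongs to the suffix of $w$. (Here $w$ is indeed matched by $\Mopt$, since if $w$ were unmatched by $\Mopt$ then $(m,w)$ would block $\Mopt$.)

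Now I would derive the contradiction from the fact that Reduction Rule~\ref{rule:cleanSuffix} has been applied exhaustively. The natural target is $w$ itself: I want to show that $\worst(w)$ still belongs to the suffix of $w$, so the rule would still be applicable. By Proposition~\ref{lem:manwomanOpt} applied to $\Wopt$ as a particular stable matching $\mu$, every man in $\mathcal{A}(w)$ that $\Mopt$ matches to $w$ satisfies $p_w(\Mopt(w)) \le$ its rank; more usefully, $\Mopt$ gives $w$ her worst partner over all stable matchings, so $f_w(\Mopt(w)) \ge f_w(\Wopt(w))$. Anyone ranked strictly below $\Mopt(w)$ on $w$'s list — in particular $\worst(w)$, provided $\worst(w) \ne \Mopt(w)$ — belongs to the suffix of $w$. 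So I only need to rule out $\worst(w) = \Mopt(w)$. If $\worst(w)=\Mopt(w)$, then every acceptable partner of $w$ has $f_w$-value at most $f_w(\Mopt(w))$, hence nobody is strictly worse than $\Mopt(w)$; but $m$ with $f_w(m) > f_w(\Mopt(w))$ is exactly such a person (recall $m \in \mathcal{A}(w)$), a contradiction. Therefore $\worst(w) \ne \Mopt(w)$, so $\worst(w)$ belongs to the suffix of $w$, so Reduction Rule~\ref{rule:cleanSuffix} is still applicable — contradicting the assumption that it has been applied exhaustively.

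The main obstacle I anticipate is getting the ``prefix forces suffix'' step and the ``$\Mopt$ is $w$'s worst stable partner'' step phrased cleanly with respect to the exact statement of Proposition~\ref{lem:manwomanOpt}, which in the excerpt is stated with $\le$/$\ge$ comparisons that require a small argument (via injectivity of $f_w$) to conclude the strict inequality $f_w(m) > f_w(\Mopt(w))$ that I need. There is also a minor subtlety: I must be careful that $w$ is actually matched by $\Mopt$ and that $\worst(w)$ is well-defined, i.e.\ $\mathcal{A}(w) \neq \emptyset$ — both follow because $m \in \mathcal{A}(w)$ and $(m,w)$ would otherwise block $\Mopt$. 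Once these are pinned down the contradiction is immediate, and the symmetric case (a man belonging to the prefix of a woman) is handled by swapping the roles of $M$ and $W$, $\Mopt$ and $\Wopt$.
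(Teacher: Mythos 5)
Your proof is correct and takes essentially the same route as the paper: the prefix condition $f_m(w)<f_m(\Mopt(m))$ together with the stability of $\Mopt$ forces $w$ to be matched by $\Mopt$ with $f_w(m)>f_w(\Mopt(w))$, i.e.\ $m$ lies in the suffix of $w$, contradicting the exhaustive application of Reduction Rule~\ref{rule:cleanSuffix}. In fact you are slightly more careful than the paper, which leaves implicit the small step you spell out — that some acceptable partner lying in the suffix of $w$ implies $\worst(w)$ itself lies in the suffix (so the rule is indeed still applicable).
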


\begin{proof}
Suppose, by way of contradiction, that there exists an acceptable pair $(m,w)$ such that one of its members belongs to the prefix of the other one. Without loss of generality, assume that $w$ belongs to the prefix of $m$. Then, $f_m(w)<f_m(\Mopt(m))$. Since $\Mopt$ is a stable matching, it cannot be blocked by $(m,w)$, which means that $w$ is matched by $\Mopt$ and $f_w(\Mopt(w))<f_w(m)$. Thus, we have that $m$ belongs to the suffix of $w$, which contradicts the assumption that Reduction Rule \ref{rule:cleanSuffix} was applied exhaustively.
\end{proof}

As a direct result of this lemma, we have the following corollary.

\begin{corollary}\label{cor:prefixCleaned}
Let $\I$ be an instance of \FBSM\ on which Reduction Rules \ref{rr:End1} to \ref{rule:cleanSuffix} have been exhaustively applied. 
Then, for every acceptable pair $(m,w)$ in $\I$ where $m$ and $w$ are matched (not necessarily to each other) by both $\Mopt$ and $\Wopt$, it holds that $f_m(\Mopt(m))\leq f_m(w)\leq f_m(\Wopt(m))$ and $f_w(\Wopt(w))\leq f_w(m)\leq f_w(\Mopt(w))$.
\end{corollary}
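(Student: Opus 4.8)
The plan is to obtain all four inequalities as immediate consequences of two ``no bad position'' facts guaranteed by the cleaning phase: that no member of an acceptable pair lies in the \emph{prefix} of the other (Lemma~\ref{lem:prefixCleaned}), together with the analogous statement for \emph{suffixes}, which I would first extract from the exhaustive application of Reduction Rule~\ref{rule:cleanSuffix}.

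So the first step is to record the suffix counterpart of Lemma~\ref{lem:prefixCleaned}: once Reduction Rule~\ref{rule:cleanSuffix} can no longer be applied, no person $a$ has any acceptable partner $b \in \A(a)$ belonging to the suffix of $a$. This is not literally what the rule enforces (the rule only deletes $\worst(a)$), so a short argument is needed: if some $b \in \A(a)$ lay in the suffix of $a$ --- say $a$ is a man $m$ and $b$ a woman $w$, so $m$ is matched by $\Wopt$ and $f_m(w) > f_m(\Wopt(m))$ --- then $\worst(m)$ satisfies $f_m(\worst(m)) \geq f_m(w) > f_m(\Wopt(m))$, whence $\worst(m)$ itself lies in the suffix of $m$, contradicting exhaustiveness of Reduction Rule~\ref{rule:cleanSuffix}; the case $a \in W$ is symmetric.

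With these two facts in hand, fix an acceptable pair $(m,w)$ with $m$ and $w$ matched by both $\Mopt$ and $\Wopt$. If $f_m(w) < f_m(\Mopt(m))$, then (as $m$ is matched by $\Mopt$) the woman $w$ would belong to the prefix of $m$, contradicting Lemma~\ref{lem:prefixCleaned}; hence $f_m(\Mopt(m)) \leq f_m(w)$. If $f_m(w) > f_m(\Wopt(m))$, then (as $m$ is matched by $\Wopt$) the woman $w$ would belong to the suffix of $m$, contradicting the first step; hence $f_m(w) \leq f_m(\Wopt(m))$. The chain $f_w(\Wopt(w)) \leq f_w(m) \leq f_w(\Mopt(w))$ follows by repeating the same two arguments with the roles of men and women --- and of $\Mopt$ and $\Wopt$ --- interchanged, using that $w$ is matched by both $\Mopt$ and $\Wopt$. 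The only non-mechanical point is the observation in the first step that deleting just the single worst entry of each offending person suffices to empty every suffix of acceptable partners; once that is noted, the rest is pure unpacking of the definitions of prefix, suffix and $\worst$, with the structural Propositions~\ref{lem:rht} and~\ref{lem:manwomanOpt} entering only implicitly through the hypothesis that $m$ and $w$ are matched in every stable matching.
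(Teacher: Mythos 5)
Your proposal is correct and follows essentially the same route as the paper, which states the corollary as a direct consequence of Lemma~\ref{lem:prefixCleaned} together with the exhausted Reduction Rule~\ref{rule:cleanSuffix}: lower bounds from prefix-cleanness, upper bounds from suffix-cleanness. The only difference is that you make explicit the small bridging step (any acceptable partner worse than $\Wopt(a)$, resp.\ $\Mopt(a)$, would force $\worst(a)$ into the suffix of $a$) that the paper leaves implicit, which is a fair and accurate unpacking.
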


\paragraph{Perfect Matching.}
Having Corollary \ref{cor:prefixCleaned} at hand, we are able to provide a simple rule that allows us to assume that every solution matches all people. 

\begin{rr}
\label{rr:basic}
If there exists a person unmatched by $\Mopt$, then let  $M'$ and $W'$ denote the subsets of men and women, respectively, who are matched by $\Mopt$. For each $a\in M'\cup W'$, denote $\C{A'}(a)=\C{A}(a) \cap (M' \cup W')$, and define $f'_{v}=f_v\restrict{\C{A'}(v)}$. The new instance is $\C{J}=(M', W', \{f'_{m}\}_{m\in M'}, \{f'_{w}\}_{w\in W'}, k)$. 
\end{rr}

To prove the safeness of this rule, we first prove the following lemma.

\begin{lemma}\label{lem:isolateUnmatch}
Let $\I$ be an instance of \FBSM\ on which Reduction Rules \ref{rr:End1} to \ref{rule:cleanSuffix} have been exhaustively applied. 
Then, for every person $a$ not matched by $\Mopt$, it holds that $\C{A}(a)=\emptyset$.
\end{lemma}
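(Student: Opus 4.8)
The plan is to argue by contradiction: suppose some person $a$ is not matched by $\Mopt$, yet $\C{A}(a) \neq \emptyset$, so there is an acceptable partner $b \in \C{A}(a)$. By Proposition \ref{lem:rht} (the Rural–Hospital Theorem), the set of people matched is identical across all stable matchings; in particular $a$ is matched by neither $\Mopt$ nor $\Wopt$, and $\C{A}(a)$ contains only people who may or may not be matched. The first step is to observe that $b$ \emph{is} matched by $\Mopt$: otherwise $(a,b)$ (suitably oriented as a man–woman pair) would be an acceptable pair with both endpoints unmatched by the stable matching $\Mopt$, hence a blocking pair, contradicting stability of $\Mopt$. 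Without loss of generality say $a$ is a man $m$ and $b$ is a woman $w$, so $w$ is matched by $\Mopt$.

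Next I would extract the key structural contradiction. Since $(m,w)$ is acceptable and $w$ is matched by $\Mopt$ but $m$ is not, the pair $(m,w)$ does not block $\Mopt$ only because of $w$'s side, i.e. $f_w(\Mopt(w)) < f_w(m)$. This says precisely that $m$ \emph{belongs to the suffix of} $w$ in the sense of the suffix definition. But then $m = \worst(w)$ or not — more carefully, the presence of \emph{any} suffix-member in $\C{A}(w)$ means in particular that $\worst(w)$ (the worst-ranked acceptable partner of $w$) also lies in the suffix of $w$, because $f_w(\worst(w)) \geq f_w(m) > f_w(\Mopt(w))$ and $\worst(w) \neq \Mopt(w)$. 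Hence Reduction Rule \ref{rule:cleanSuffix} would still be applicable to the person $w$, contradicting the hypothesis that Rules \ref{rr:End1}–\ref{rule:cleanSuffix} have been applied exhaustively.

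I expect the main subtlety to be the bookkeeping around which side of the acceptable pair is ``unmatched'' and making sure the suffix argument is invoked for the \emph{matched} endpoint (here $w$), together with the small observation that a nonempty suffix for $w$ forces $\worst(w)$ itself into the suffix of $w$ — this is what actually triggers Reduction Rule \ref{rule:cleanSuffix}. Everything else (the blocking-pair argument forcing $b$ to be matched, the symmetric treatment of the case $a \in W$) is routine. One should also note for completeness that $a$ unmatched by $\Mopt$ is equivalent to $a$ unmatched by $\Wopt$ by Proposition \ref{lem:rht}, so there is no asymmetry hidden in the choice of $\Mopt$ in the rule's statement; but this is not strictly needed for the contradiction, which only uses $\Mopt$.
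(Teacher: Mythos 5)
Your proposal is correct and follows essentially the same route as the paper's proof: use Proposition \ref{lem:rht} and stability of $\Mopt$ to force any acceptable partner $w$ of the unmatched person to be matched with $f_w(\Mopt(w))<f_w(m)$, so that $m$ lies in the suffix of $w$, contradicting exhaustive application of Reduction Rule \ref{rule:cleanSuffix}. Your explicit remark that a nonempty suffix forces $\worst(w)$ itself into the suffix (which is what literally triggers the rule) is a small step the paper leaves implicit, but it does not change the argument.
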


\begin{proof}
Let $a$ be a person not matched by $\Mopt$. Then, by Proposition~\ref{lem:rht}, it holds that $a$ is not matched by any stable matching. Hence, we can assume w.l.o.g.~that $a$ is man $m$. First, note that $\C{A}(m)$ cannot contain any woman $w$ that is not matched by some stable matching, else $(m,w)$ would have formed a blocking pair for that stable matching. Second, we claim that $\C{A}(m)$ cannot contain a woman $w$ that is matched by some stable matching. Suppose, by way of contradiction, that this claim is false. Then, by Proposition~\ref{lem:rht}, it holds that $\C{A}(m)$ contains a woman $w$ that is matched by $\Mopt$. We have that $w$ prefers $\Mopt(w)$ over $m$, else $(m,w)$ would have formed a blocking pair for $\Mopt$, which is impossible as $\Mopt$ is a stable matching. However, this implies that $m$ belongs to the suffix of $w$, which contradicts the supposition that Reduction Rule \ref{rule:cleanSuffix} has been exhaustively applied. We thus conclude that $\C{A}(a)=\emptyset$.
\end{proof}

\begin{lemma}\label{L:perfect-matching}
Reduction Rule~\ref{rr:basic} is safe, and $t({\cal I})=t({\cal J})$.
\end{lemma}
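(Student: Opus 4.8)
The plan is to show that Reduction Rule~\ref{rr:basic} does not change the set of stable matchings in any essential way, and then to deduce equality of both $\Bal$ and the guarantees $\Om,\Ow$ across $\I$ and $\J$. The crucial ingredient is Lemma~\ref{lem:isolateUnmatch}: every person $a$ not matched by $\Mopt$ has $\C{A}(a)=\emptyset$. Hence in $\I$ the people in $(M\cup W)\setminus(M'\cup W')$ are isolated vertices of the underlying acceptability graph; they appear in no acceptable pair, are matched by no stable matching (Proposition~\ref{lem:rht}), and contribute nothing to any $\bal{\cdot}$ sum. Moreover, restricting $\C{A}(a)$ to $\C{A'}(a)=\C{A}(a)\cap(M'\cup W')$ for $a\in M'\cup W'$ actually deletes nothing, since by Lemma~\ref{lem:isolateUnmatch} no member of $M'\cup W'$ has an acceptable partner outside $M'\cup W'$ anyway; so the preference functions $f'_v$ coincide with $f_v$ on their whole domains.

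With that observation in hand, the first step is to establish $\mathrm{SM}(\I)=\mathrm{SM}(\J)$ in the following precise sense: $\mu\mapsto\mu$ is a bijection between stable matchings of $\I$ and stable matchings of $\J$ (every stable matching of $\I$ already leaves all of $(M\cup W)\setminus(M'\cup W')$ unmatched, so it is literally a matching on $M'\cup W'$, and conversely). For the forward direction, a stable matching $\mu$ of $\I$ is a matching in $\J$ because it only uses acceptable pairs, all of which lie inside $M'\cup W'$; and it remains stable because any blocking pair in $\J$ is an acceptable pair with the same preference values in $\I$, hence a blocking pair in $\I$, contradiction. For the converse, a stable matching $\mu$ of $\J$ is a matching in $\I$ (same acceptable pairs); any pair blocking it in $\I$ must be an acceptable pair of $\I$, which by Lemma~\ref{lem:isolateUnmatch} lies entirely within $M'\cup W'$ and hence is an acceptable pair of $\J$ with identical preference values, so it would block $\mu$ in $\J$ too --- contradiction. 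Note this also shows $\Mopt(\I)$ restricted to $M'\cup W'$ equals $\Mopt(\J)$ and similarly for $\Wopt$, by the uniqueness/optimality characterization.

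The second step is then routine: since $\mathrm{SM}(\I)=\mathrm{SM}(\J)$ as sets of matchings, and since for each such $\mu$ the two sums $\sum_{(m,w)\in\mu}f_m(w)$ and $\sum_{(m,w)\in\mu}f_w(m)$ are unchanged (the $f'$ agree with $f$ on all pairs actually used), we get $\bal{\mu}_{\I}=\bal{\mu}_{\J}$ for every $\mu$, hence $\Bal(\I)=\Bal(\J)$, which gives safeness (the value $k$ is untouched). The same applies to the particular matchings $\Mopt$ and $\Wopt$, yielding $\Om(\I)=\Om(\J)$ and $\Ow(\I)=\Ow(\J)$, and therefore $t(\I)=k-\min\{\Om,\Ow\}=t(\J)$. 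I do not expect any real obstacle here: all the work was front-loaded into Lemma~\ref{lem:isolateUnmatch}, which turns the seemingly ``lossy'' deletion of people into the deletion of isolated, irrelevant vertices. The only point requiring a line of care is the claim that $f'_v=f_v$ as functions (same domain), which is exactly where Lemma~\ref{lem:isolateUnmatch} is invoked a second time; once that is noted, safeness and the parameter bound are immediate.
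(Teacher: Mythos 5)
Your proof is correct and follows essentially the same route as the paper's: establish $\mathrm{SM}(\I)=\mathrm{SM}(\J)$ (using Proposition~\ref{lem:rht} for the forward direction and Lemma~\ref{lem:isolateUnmatch} to confine any potential blocking pair to $M'\times W'$ in the reverse direction), then observe that the balance sums, and hence $\Bal$, $\Om$, $\Ow$ and $t$, are unchanged. Your extra remark that the restriction is vacuous (i.e., $f'_v=f_v$ on its whole domain, by Lemma~\ref{lem:isolateUnmatch} together with symmetry of acceptability) is a valid minor simplification but not a different argument.
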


\begin{proof}By the definition of the new preference functions, we have that for every $\mu\in\mathrm{SM}(\I)\cap\mathrm{SM}(\J)$, it holds that $\sum_{(m,w)\in \mu}f_{m}(w)=\sum_{(m,w)\in \mu}f'_{m}(w)$ and $\sum_{(m,w)\in \mu}f_{w}(m)=\sum_{(m,w)\in \mu}f'_{w}(m)$. To conclude that the lemma is correct, it is thus sufficient to argue that SM$(\I)=\mathrm{SM}(\J)$. For this purpose, first consider some $\mu\in\mathrm{SM}(\I)$. By Proposition~\ref{lem:rht}, we have that $\mu$ is also a matching in $\J$.
Moreover, if $\mu$ has a blocking pair in $\J$, then by the definition of the new preference functions, it is also a blocking pair in $\I$. Since $\mu$ is stable in $\I$, we have that $\mu\in\mathrm{SM}(\J)$.

In the second direction, consider some $\mu\in\mathrm{SM}(\J)$. Then, it is clear that $\mu$ is a matching in $\I$. By Lemma \ref{lem:isolateUnmatch}, if $\mu$ has a blocking pair $(m,w)$ in $\I$, then both $m\in M'$ and $w\in W'$. However, for such a blocking pair $(m,w)$, we have that $(m,w)$ is an acceptable pair in $\J$, and therefore by the definition of the new preference functions, we have that $(m,w)$ is also a blocking pair in $\J$. Hence, since $\mu$ is stable in $\J$, we conclude that $\mu$ is also stable in $\I$.
\end{proof}

By Proposition~\ref{lem:rht}, from now onwards, we have that for the given instance, any stable matching is a perfect matching. Due to this observation, we can denote $n=|M|=|W|$, and for any stable matching $\mu$, we have the following equalities. \[\sum_{(m,w)\in \mu}f_{m}(w)=\sum_{m\in M}f_{m}(\mu(m));~~~~~~~~~~ \sum_{(m,w)\in \mu}f_{w}(m)=\sum_{w\in W}f_{w}(\mu(w)). \tag{I}\]

\paragraph{Overcoming Sadness.}
As every stable matching is a perfect matching, every person is matched by every stable matching, including the man-optimal and woman-optimal stable matchings. Thus, it is well defined to classify the people who do not have the same partner in the man-optimal and woman-optimal stable matchings as ``\bad''. That is, 
\begin{definition}\label{D:Bad}
A person $a\in M\cup W$ is  {\em \bad} if $\Mopt(a)\ne \Wopt(a)$.
\end{definition}

We let $\Mbad$ and $\Wbad$ denote the sets of \bad\ men and \bad\ women, respectively.  People who are not \bad\ are termed {\em \good}. Accordingly, we let $\Mgood$ and $\Wgood$ denote the sets of \good\  men and \good\  women, respectively. Note that $\Mbad= \emptyset$ if and only if $\Wbad= \emptyset$. Moreover, note that by the definition of $\Mopt$ and $\Wopt$, for a \happy\ person $a$ it holds that $a$ and $\Mopt(a)=\Wopt(a)$ are matched to one another by every stable matching. Let us now bound the number of \sad\ people in a \Yes-instance. 

\begin{rr}\label{rr:manybadmen} 
If $|\Mbad|>2t$ or $|\Wbad|>2t$, then return \No.
\end{rr}

\begin{lemma}\label{L:Bounding-Bad}
Reduction Rule~\ref{rr:manybadmen} is safe.
\end{lemma}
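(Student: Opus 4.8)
The plan is to show that every \sad{} person contributes a strictly positive "gap" to the balance, so that having more than $2t$ \sad{} men (or women) forces $\Bal > k$, which means returning \No{} is correct. First I would fix a stable matching $\mu$ and compare its cost for the men with $\Om$, using the fact that all previous reduction rules have been applied exhaustively, so by Corollary~\ref{cor:prefixCleaned} every acceptable pair $(m,w)$ with $m,w$ matched by both $\Mopt$ and $\Wopt$ satisfies $f_m(\Mopt(m)) \le f_m(w) \le f_m(\Wopt(m))$ and $f_w(\Wopt(w)) \le f_w(m) \le f_w(\Mopt(w))$; in particular, since every stable matching is now perfect (Reduction Rule~\ref{rr:basic}), this holds for $w = \mu(m)$ for every man $m$. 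Hence $\sum_{m\in M} f_m(\mu(m)) \ge \sum_{m\in M} f_m(\Mopt(m)) = \Om$, and symmetrically $\sum_{w\in W} f_w(\mu(w)) \ge \Ow$.

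The key quantitative step is to bound $\sum_{m\in M} f_m(\Wopt(m)) - \Om$ from below by $|\Mbad|$. Indeed, for a \happy{} man $m$ we have $\Mopt(m) = \Wopt(m)$, so the term $f_m(\Wopt(m)) - f_m(\Mopt(m))$ vanishes; for a \sad{} man $m$ we have $\Mopt(m) \ne \Wopt(m)$, and since $f_m$ is injective with integer values and $f_m(\Mopt(m)) \le f_m(\Wopt(m))$, the difference $f_m(\Wopt(m)) - f_m(\Mopt(m))$ is a positive integer, hence at least $1$. Summing over all men gives
\[
\sum_{m\in M} f_m(\Wopt(m)) \;=\; \Om + \sum_{m\in\Mbad}\bigl(f_m(\Wopt(m)) - f_m(\Mopt(m))\bigr) \;\ge\; \Om + |\Mbad|.
\]
Now $\Wopt$ is itself a stable matching, so using the displayed identities~(I) we get
\[
\bal{\Wopt} \;\ge\; \sum_{m\in M} f_m(\Wopt(m)) \;\ge\; \Om + |\Mbad|.
\]
Since $\Bal \le \bal{\Wopt}$ is false in general — rather, $\Bal$ is the minimum — I instead argue directly: if $|\Mbad| > 2t$, I want to conclude $\Bal > k$. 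We also know $\Ow \le \bal{\Wopt}$ is not what we need; the cleaner route is to bound $\bal{\mu}$ for an arbitrary $\mu\in\mathrm{SM}$ from below. For that, note $\sum_m f_m(\mu(m)) \ge \Om$ and $\sum_w f_w(\mu(w)) \ge \Ow$, so $\bal{\mu} \ge \max\{\Om,\Ow\}$, which is the weak bound already used in Reduction Rule~\ref{rr:End1} and is not enough. The sharper observation is that for the matching $\Wopt$, the men's side cost is large: $\bal{\Wopt} \ge \Om + |\Mbad| \ge \Om + 2t + 1 > \min\{\Om,\Ow\} + t = k$ when $\Om \le \Ow$; and symmetrically, using the \sad{} women, $\bal{\Mopt} \ge \Ow + |\Wbad| > k$ when $\Ow \le \Om$. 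Since $|\Mbad| > 2t \iff |\Wbad| > 2t$ is \emph{not} automatic, the rule's hypothesis is a disjunction, so I handle the two cases of the disjunction together with the two cases of which of $\Om,\Ow$ is smaller. In every case one of $\bal{\Wopt}, \bal{\Mopt}$ exceeds $k$, but that alone does not give $\Bal > k$; what it gives is that these two particular matchings are not solutions. To finish I must show \emph{every} stable matching has balance $> k$: I will show $\sum_m f_m(\mu(m)) \ge \Om + |\Mbad|$ for every stable $\mu$ by the same injectivity argument applied with $\mu$ in place of $\Wopt$ (valid since $f_m(\Mopt(m)) \le f_m(\mu(m))$ always, with strict inequality whenever $\mu(m)\ne\Mopt(m)$, and $\mu(m)\ne\Mopt(m)$ for every \sad{} $m$ only if $\mu=\Wopt$ — so this needs care), or alternatively use Proposition~\ref{lem:manwomanOpt} more carefully.

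The main obstacle is precisely this last point: a \sad{} man $m$ need not have $\mu(m)\ne\Mopt(m)$ in an arbitrary stable matching $\mu$, so I cannot directly lower-bound $\sum_m f_m(\mu(m))$ by $\Om + |\Mbad|$. The fix I expect the authors use is the dual bound: for \emph{every} stable $\mu$ and every man $m$ we have $f_m(\mu(m)) \le f_m(\Wopt(m)) \le f_m(\Mopt(w))$ by Corollary~\ref{cor:prefixCleaned}, and summing $f_w(\mu(w))$ over women, a \sad{} woman forced into her $\Mopt$-partner region contributes; combining with the symmetric statement and the observation that in any stable matching a \sad{} person is "displaced" from exactly one of its two canonical partners, one gets that for every $\mu$, either the men's total or the women's total is at least $\min\{\Om,\Ow\} + \#(\text{\sad{} people on the relevant side})$. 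I would therefore structure the proof as: (1) recall perfectness and Corollary~\ref{cor:prefixCleaned}; (2) show $f_m(\Wopt(m)) - f_m(\Mopt(m)) \ge 1$ for \sad{} $m$ and $=0$ for \happy{} $m$ by integrality and injectivity; (3) deduce $\bal{\Wopt} \ge \Om + |\Mbad|$ and $\bal{\Mopt} \ge \Ow + |\Wbad|$; (4) observe that by the minimality definition of $\Bal$ together with the structure of the stable-matching lattice, $\Bal \ge \min\{\Om + |\Mbad|,\ \Ow + |\Wbad|\}$ is \emph{not} what we get, so instead argue that if $|\Mbad| > 2t$ then in particular $|\Wbad|>2t$ fails to follow, hence we must use whichever side is large — and conclude by the case analysis on $\min\{\Om,\Ow\}$ that $\Bal > k$, so returning \No{} is safe.
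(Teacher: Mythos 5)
There is a genuine gap, and it is exactly the one you flag yourself: your quantitative bound only yields $\bal{\Wopt}\ge \Om+|\Mbad|$ (and symmetrically $\bal{\Mopt}\ge \Ow+|\Wbad|$), which rules out the two extreme stable matchings as yes-certificates but says nothing about an arbitrary $\mu\in\mathrm{SM}$ with $\bal{\mu}\le k$; safeness requires showing that \emph{no} stable matching can have balance at most $k$ when $|\Mbad|>2t$. The fix you gesture at is never carried out, and its supporting claim is false as stated: in an arbitrary stable matching a \sad\ man need not be ``displaced from exactly one of its two canonical partners''---he may be matched to neither $\Mopt(m)$ nor $\Wopt(m)$. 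Moreover, the conclusion you extract from it (``the men's total or the women's total is at least $\min\{\Om,\Ow\}$ plus the number of \sad\ people on the relevant side'') does not follow without specifying how the \sad\ men split between the two sides, and your closing step (4) --- a case analysis on which of $\Om,\Ow$ is smaller and ``use whichever side is large'' --- does not supply this; incidentally that case analysis is unnecessary, since $t=k-\min\{\Om,\Ow\}$ already gives both $\Om+t\ge k$ and $\Ow+t\ge k$ (Reduction Rule~\ref{rr:End1} having been applied).

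The missing idea, which is the heart of the paper's proof, is a partition of $\Mbad$ \emph{relative to the hypothetical solution} $\mu$, followed by pigeonhole. Fix $\mu$ with $\bal{\mu}\le k$ and split $\Mbad=M'_S\uplus M''_S$, where $M'_S$ consists of the \sad\ men $m$ whose wife $\mu(m)$ is not at her woman-optimal partner, i.e.\ $f_{\mu(m)}(m)>f_{\mu(m)}(\Wopt(\mu(m)))$, and $M''_S=\Mbad\setminus M'_S$. Each man in $M'_S$ has a distinct wife who pays at least $+1$ above her $\Wopt$ value, so if $|M'_S|>t$ then $\sum_{w\in W}f_w(\mu(w))\ge \Ow+|M'_S|>\Ow+t\ge k$. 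Each man in $M''_S$ satisfies $\mu(m)=\Wopt(m)\neq\Mopt(m)$ (using woman-optimality for the equality and sadness for the inequality), so he pays at least $+1$ above his $\Mopt$ value; if $|M''_S|>t$ then $\sum_{m\in M}f_m(\mu(m))\ge \Om+|M''_S|>\Om+t\ge k$. Since $|\Mbad|>2t$ forces one of the two parts to have size exceeding $t$, either way $\bal{\mu}>k$, a contradiction. Your proposal assembles the right ingredients (integrality of the $+1$ gaps, the optimality bounds, the role of the threshold $2t$) but never performs this partition-plus-pigeonhole step, so as written it does not prove the lemma. (The appeals to Corollary~\ref{cor:prefixCleaned} are also unnecessary; Proposition~\ref{lem:manwomanOpt} suffices.)
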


\begin{proof}We only prove that if $|\Mbad|>2t$, then $\I$ is a \No-instance, as the proof of the other case is symmetric to this one. Let us assume that $|\Mbad| > 2t$. Suppose, by way of contradiction, that $\I$ is a \Yes-instance. Then, there exists a stable matching $\mu$ such that $\bal{\mu}\leq k$. Partition $\Mbad=M'_S \uplus M''_S$ as follows. Set $M'_S$ to be the set of all $m$ in $M_S$ such that $m$ is not the partner of $\mu(m)$ in $\Wopt$.
$$M'_S=\{m\in \Mbad \,|\, f_{\mu(m)}(m)>f_{\mu(m)}(\Wopt(\mu(m)))\}.$$ Accordingly, set $M''_S=M_S\setminus M'_S$. Since $|\Mbad| > 2t$, at least one among $|M'_S|$ and $|M''_S|$ is (strictly) larger than $t$. Let us first handle the case where $|M'_S|>t$. Then,
\begin{eqnarray*}
\sum_{w \in W} f_w(\mu(w)) &= & \sum_{\{w\,|\,\mu(w) \in M'_S\}} f_w(\mu(w)) + \sum_{\{w\,|\,\mu(w) \notin M'_S\}} f_w(\mu(w))\\
   & \geq & \sum_{\{w\,|\,\mu(w) \in M'_S\}} \Tb{f_w(\Wopt{}(w)) +1} + \sum_{\{w\,|\,\mu(w) \notin M'_S\}} f_w(\Wopt{}(w))\\
   & = & \sum_{w\in W} f_w(\Wopt{}(w))  + \sum_{\{w\,|\,\mu(w) \in M'_S\}} 1 = \Ow +  |M'_S|\\
   & > & \Ow +t \geq k. 
 \end{eqnarray*}

Here, the first inequality followed directly from the definition of $M'_S$. As we have reached a contradiction, it must hold that $|M''_S|>t$. However, we now have that
\begin{eqnarray*}
\sum_{m \in M} f_m(\mu(m)) &=& \sum_{m\in M''_S} f_m(\mu(m))  +  \sum_{m\notin M''_S} f_m(\mu(m))\\
&\geq & \sum_{m\in M''_S} \Tb{f_{m}(\Mopt{}(m))+1} + \sum_{m\notin M''_S} f_m(\Mopt{}(m))\\
   & = & \sum_{m\in M} f_m(\Mopt{}(m))  + \sum_{\{m\,|\,\mu(m) \in M''_S\}} 1 = \Om +  |M''_S|\\
   & > & \Om +t \geq k. 
\end{eqnarray*}

Here, the first inequality followed from the definition of $M''_S$. Indeed, for all $m\in M''_S$, we have that $f_{\mu(m)}(m)\leq f_{\mu(m)}(\Wopt(\mu(m)))$, else $m$ would have belonged to $M'_S$. However, in this case we deduce that $m=\Wopt(\mu(m))$, and since $m\in M_S$, we have that $\mu(m)\neq \Mopt(m)$, which implies that $f_m(\mu(m))\geq f_{m}(\Mopt{}(m))+1$. As we have again reached a contradiction, we conclude the proof.
\end{proof}

\paragraph{Marrying Happy People.}
Towards the removal of \happy\ people, we first need to handle the special case where there are no \sad\ people. In this case, there is exactly one stable matching, which is the 
man-optimal stable matching (that is equal, in this case, to the woman-optimal stable matching). This immediately implies the safeness of the following rule. 

\begin{rr}\label{rr:onlygooodmen} 
If $\Mbad= \Wbad=\emptyset$, then return \Yes\ if $\bal{\Mopt} \leq k$ and \No\ otherwise.
\end{rr}

\begin{observation}\label{obs:onlygooodmen}
Reduction Rule~\ref{rr:onlygooodmen}  is safe.
\end{observation}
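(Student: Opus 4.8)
The plan is to verify the claim that Reduction Rule~\ref{rr:onlygooodmen} is safe by analyzing the structure of the set of stable matchings under the hypothesis $\Mbad = \Wbad = \emptyset$. The key fact I would invoke is that a \happy\ person $a$ satisfies $\Mopt(a) = \Wopt(a)$, and as noted in the text just before Definition~\ref{D:Bad}, for such a person $a$ is matched to $\Mopt(a) = \Wopt(a)$ by \emph{every} stable matching (this follows from the definitions of $\Mopt$ and $\Wopt$ together with Proposition~\ref{lem:manwomanOpt}: any stable matching $\mu$ has $f_a(\mu(a))$ sandwiched between $f_a(\Mopt(a))$ and $f_a(\Wopt(a))$, which coincide). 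So first I would spell out this sandwiching argument to conclude that if every person is \happy, then every stable matching agrees with $\Mopt$ on every person, hence $\mathrm{SM}(\I) = \{\Mopt\}$ (using also that by Reduction Rule~\ref{rr:basic} every stable matching is perfect, so there is no ambiguity from unmatched people).

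Given that $\mathrm{SM}(\I) = \{\Mopt\}$, the quantity $\Bal = \min_{\mu \in \mathrm{SM}}\bal{\mu}$ equals $\bal{\Mopt}$. Therefore $\Bal \le k$ holds if and only if $\bal{\Mopt} \le k$. Since $\Mopt$ and its balance can be computed in polynomial time by Proposition~\ref{lem:existenceSM}, the rule correctly returns \Yes\ precisely when $(\I,k)$ is a \Yes-instance and \No\ otherwise; interpreting the returned \Yes/\No\ as trivial \Yes/\No-instances (as described in the preliminaries) gives safeness in the formal sense.

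I do not anticipate a genuine obstacle here — the statement is essentially immediate once the structural observation about \happy\ people is made precise. The only point requiring a modicum of care is justifying that \emph{all} stable matchings collapse to $\Mopt$ when no one is \sad: one must make sure the argument applies simultaneously to every person and that, combined with perfectness, this pins down the matching uniquely. Concretely, for any $\mu \in \mathrm{SM}$ and any person $a$, Proposition~\ref{lem:manwomanOpt} (in its functional form) and the man-optimality of $\Mopt$ give $f_a(\Mopt(a)) \le f_a(\mu(a)) \le f_a(\Wopt(a)) = f_a(\Mopt(a))$, so $f_a(\mu(a)) = f_a(\Mopt(a))$, and since $f_a$ is injective, $\mu(a) = \Mopt(a)$. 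As this holds for every $a$, $\mu = \Mopt$. The rest is the one-line deduction above.
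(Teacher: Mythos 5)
Your proposal is correct and follows essentially the same route as the paper: the paper also argues that when $\Mbad=\Wbad=\emptyset$ the man-optimal and woman-optimal matchings coincide and are the unique stable matching, so $\Bal=\bal{\Mopt}$ and the rule's answer is correct. Your only addition is to spell out the sandwiching via Proposition~\ref{lem:manwomanOpt} and injectivity of the preference functions, which the paper leaves implicit in its remark that every happy person is matched to $\Mopt(a)=\Wopt(a)$ in every stable matching.
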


We now turn to discard \good\ people. When we perform this operation, we need to ensure that the balance of the instance is preserved. More precisely, we need to ensure that $\Bal(\I)=\Bal(\J)$, where $\J$ denotes the new instance resulting from the removal of some \good\ people. Towards this, we let $(m_{h},w_{h})$ denote a {\em \good\ pair}, which is simply a pair of a \good\  man and \good\  woman who are matched to each other in every stable matching. Then, we redefine the preference functions in a manner that allows us to transfer the ``contributions'' of $m_{h}$ and $w_{h}$ from 
$\Bal(\I)$ to $\Bal(\J)$ via some \bad\ man and woman. We remark that these \bad\ people exist because Reduction Rule \ref{rr:onlygooodmen} does not apply. The details are as follows. 

\begin{rr}\label{rr1}
If there exists a \good\ pair $(m_{h}, w_{h})$, then proceed as follows. Select an arbitrary \sad\ man $m_{s}$ and an arbitrary \sad\ woman $w_{s}$. Denote $M'=M\sm \{m_{h}\}$ and $W'=W\sm \{w_{h}\}$. For each person $a\in M'\cup W'$, the new preference function $f'_{a}:\A(a)\sm \{m_h,w_{h}\}\To \Nb$ is defined as follows.
\begin{itemize}
\item {\bf The preference function of $m_s$.} For each $w\in \C{A}(m_{s})\sm\{w_{h}\}$: $f'_{m_{s}}(w)= f_{m_{s}}(w)+f_{m_{h}}(w_{h})$. 
\item {\bf The preference function of $w_s$.} For each $m\in \C{A}(w_{s})\sm \{m_{h}\}$: $f'_{w_{s}}(m)= f_{w_{s}}(m)+f_{w_{h}}(m_{h})$. 
\item For each $w\in W'\sm \{w_{s}\}$: $f'_{w}= f_{w}\restrict{M'}$. 
\item For each  $m\in M'\sm \{m_{s}\}$: $f'_{m}=f_{m}\restrict{W'}$. 
\end{itemize}
The new instance is $\J=(M', W', \{f'_{m}\}_{m\in M'}, \{f'_{w}\}_{w\in W'}, k)$. 
\end{rr}

Let us first state a lemma concerning the proof of the forward direction of the safeness of Reduction Rule \ref{rr1}.

\begin{lemma}\label{lem:rr1Dir1}
Let $\mu\in\mathrm{SM}(\I)$. Then, $\mu'=\mu\setminus\{(m_h,w_h)\}$ is a stable matching in $\J$ such that $\mbal{}_{\J}(\mu')=\mbal{}_{\I}(\mu)$.
\end{lemma}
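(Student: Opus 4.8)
The plan is to verify three things in turn: that $\mu'$ is a matching in $\J$, that $\mu'$ is stable in $\J$, and that the balance is preserved. For the first point, since $(m_h,w_h)$ is a \good\ pair, $(m_h,w_h)\in\mu$ for every $\mu\in\mathrm{SM}(\I)$ (this is noted just before Reduction Rule~\ref{rr1}, following from the definition of \good), so $\mu'=\mu\setminus\{(m_h,w_h)\}$ matches exactly the people in $M'\cup W'$, and every remaining pair $(m,w)\in\mu'$ is still an acceptable pair in $\J$ because removing $m_h,w_h$ from acceptable-partner sets does not affect pairs not involving them. Hence $\mu'$ is a perfect matching on $M'\cup W'$.

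For stability, I would argue by contradiction: suppose $(m,w)$ blocks $\mu'$ in $\J$, with $m\in M'$, $w\in W'$. The key observation is that $f'$ differs from $f$ only by adding the \emph{constants} $f_{m_h}(w_h)$ (to all values $f_{m_s}(\cdot)$) and $f_{w_h}(m_h)$ (to all values $f_{w_s}(\cdot)$), and by restricting domains to exclude $m_h,w_h$. Adding a constant to every value of a single person's preference function does not change the induced total order, so $f'_a(b_1)<f'_a(b_2)\iff f_a(b_1)<f_a(b_2)$ for all persons $a\in M'\cup W'$ and all $b_1,b_2\in\A(a)\setminus\{m_h,w_h\}$. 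Therefore the preference relations among people in $M'\cup W'$ are identical in $\I$ and $\J$. Consequently $(m,w)$ blocks $\mu$ in $\I$ as well: indeed $\mu(m)=\mu'(m)$ and $\mu(w)=\mu'(w)$ since $m,w\neq m_h,w_h$, and $f_m(w)<f_m(\mu(m))$, $f_w(m)<f_w(\mu(w))$ follow from the corresponding strict inequalities in $\J$. This contradicts $\mu\in\mathrm{SM}(\I)$, so $\mu'$ is stable in $\J$.

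For the balance, I would compute $\sum_{(m,w)\in\mu'}f'_m(w)$ and compare to $\sum_{(m,w)\in\mu}f_m(w)$. Removing the pair $(m_h,w_h)$ deletes the term $f_{m_h}(w_h)$ from the men's sum; but the man $m_s$ is matched by $\mu'$ to $\mu(m_s)\in W'\setminus\{w_h\}$ (note $\mu(m_s)\neq w_h$ since $w_h$ is matched to $m_h$), and by definition $f'_{m_s}(\mu(m_s))=f_{m_s}(\mu(m_s))+f_{m_h}(w_h)$, which adds back exactly $f_{m_h}(w_h)$. All other men $m\in M'\setminus\{m_s\}$ keep $f'_m=f_m$ on $W'$, and their partners under $\mu'$ equal their partners under $\mu$. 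Hence $\sum_{(m,w)\in\mu'}f'_m(w)=\sum_{(m,w)\in\mu}f_m(w)$. The symmetric computation using $w_s$ gives $\sum_{(m,w)\in\mu'}f'_w(m)=\sum_{(m,w)\in\mu}f_w(m)$. Taking the max of the two sums yields $\mbal{}_{\J}(\mu')=\mbal{}_{\I}(\mu)$.

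The main obstacle, and the reason the reduction rule is set up the way it is, is ensuring that $m_s$ (resp.\ $w_s$) is actually matched by $\mu'$ to someone \emph{other than} $w_h$ (resp.\ $m_h$), so that the ``transferred'' constant $f_{m_h}(w_h)$ genuinely lands on a term of the men's sum for $\mu'$; this is guaranteed because $(m_h,w_h)\in\mu$ and a matching is, well, a matching. A secondary subtlety to check is that $\mu(m_s)$ lies in the domain $\A(m_s)\setminus\{w_h\}$ of $f'_{m_s}$, which again holds since $\mu(m_s)\neq w_h$ and $\mu(m_s)\in\A(m_s)$. Everything else is the routine bookkeeping of the constant-shift argument.
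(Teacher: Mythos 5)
Your proposal is correct and follows essentially the same route as the paper: it establishes that $\mu'$ is a perfect matching in $\J$ via $(m_h,w_h)\in\mu$, transfers (non-)blocking pairs between $\I$ and $\J$ using the fact that the new preference functions of $m_s$ and $w_s$ are constant shifts (hence order-preserving), and computes the balance by noting that the added constants $f_{m_h}(w_h)$ and $f_{w_h}(m_h)$ exactly compensate for the removed terms of $m_h$ and $w_h$. The only presentational difference is that the paper explicitly invokes Reduction Rule~\ref{rr:basic} to justify that $\mu$ (hence $\mu'$) is perfect and that $m_s$, $w_s$ are matched, a fact you use implicitly.
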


\begin{proof}
We first show that $\mu'\in\mathrm{SM}(\J)$. By Reduction Rule \ref{rr:basic}, it holds that $\mu$ is a perfect matching in $\I$. Since $(m_h,w_h)$ is a \happy\ pair, it is clear that $(m_h,w_h)\in\mu$, and therefore $\mu'$ is a perfect matching in $\J$. Let $(m,w)\notin \mu'$ be some acceptable pair in $\J$. Since $\mu\in\mathrm{SM}(\I)$ and it is a perfect matching, it holds that $f_{m}(w)>f_{m}(\mu(m))$ or $f_{w}(m)>f_{w}(\mu(w))$. Let us consider these two possibilities separately.
\begin{itemize}
\item Suppose that $f_{m}(w)>f_{m}(\mu(m))$. If $m\neq m_s$, then $f'_{m}(w)=f_{m}(w)$ and $f'_{m}(\mu'(m))=f_{m}(\mu(m))$, and therefore $f'_{m}(w)>f'_{m}(\mu'(m))$. Else, $f'_{m}(w)=f_{m}(w)+f_{m_{h}}(w_{h})$ and $f'_{m}(\mu'(m))=f_{m}(\mu(m))+f_{m_{h}}(w_{h})$, and therefore again $f'_{m}(w)>f'_{m}(\mu'(m))$.

\item Suppose that $f_{w}(m)>f_{w}(\mu(w))$. Analogously to the previous case, we get that $f'_{w}(m)>f'_{w}(\mu'(w))$.
\end{itemize}

Since the choice of $(m,w)$ was arbitrary, we conclude that $\mu'$ does not have a blocking pair in $\J$, and therefore $\mu'\in\mathrm{SM}(\J)$. To show that $\mbal_{\J}(\mu')=\mbal_{\I}(\mu)$, note that

\begin{eqnarray*}
\mbal_{\J}(\mu') &=& \max\{\sum_{m\in M\setminus \{m_h\}}f'_m(\mu'(m)), \sum_{w\in W\setminus \{w_h\}}f'_w(\mu'(w))\}\\
   &=& \max\{f'_{m_s}(\mu'(m_s))+\sum_{m\in M\setminus \{m_h,m_s\}}f'_m(\mu'(m)),\\
	&& \hspace{8.5em}
	f'_{w_s}(\mu'(w_s))+\sum_{w\in W\setminus \{w_h,w_s\}}f'_w(\mu'(w))\}\\
   &=& \max\{f_{m_s}(\mu(m_s))+f_{m_{h}}(w_{h})+\sum_{m\in M\setminus \{m_h,m_s\}}f_m(\mu(m)),\\
	&& \hspace{8.5em}
	f_{w_s}(\mu(w_s))+f_{w_{h}}(m_{h})+\sum_{w\in W\setminus \{w_h,w_s\}}f_w(\mu(w))\}\\
   &=& \max\{\sum_{m\in M}f_m(\mu(m)), \sum_{w\in W}f_w(\mu(w))\} = \mbal_{\I}(\mu).
\end{eqnarray*}

This concludes the proof.
\end{proof}

To prove a lemma addressing the reverse direction, we first state the following observation, whose correctness follows directly from Corollary \ref{cor:prefixCleaned}.

\begin{observation}\label{lemma:happyPairIsolated}
Let $\I$ be an instance of \FBSM\ on which Reduction Rules \ref{rr:End1} to \ref{rule:cleanSuffix} have been exhaustively applied. 
Then, for every happy pair $(m_h,w_h)$, it holds that $\C{A}(m_h)=\{w_h\}$ and $\C{A}(w_h)=\{m_h\}$.
\end{observation}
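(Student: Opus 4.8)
The plan is to read the claim off of Corollary~\ref{cor:prefixCleaned}: because a happy person has the same partner under $\Mopt$ and under $\Wopt$, the ``prefix/suffix sandwich'' provided by that corollary degenerates to an equality, which by injectivity of the preference functions pins the partner down uniquely. No new machinery is needed.

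Concretely, I would fix an arbitrary happy pair $(m_h,w_h)$. By definition of a happy person, $\Mopt(m_h)=\Wopt(m_h)$, and since $(m_h,w_h)$ is a happy pair this common value is $w_h$; symmetrically $\Mopt(w_h)=\Wopt(w_h)=m_h$. In particular $(m_h,w_h)$ is an acceptable pair, so $w_h\in\C{A}(m_h)$ and $m_h\in\C{A}(w_h)$, and it remains only to show there are no other acceptable partners. Suppose toward a contradiction that some woman $w\in\C{A}(m_h)$ satisfies $w\neq w_h$. Then $(m_h,w)$ is an acceptable pair, so in particular $\C{A}(w)\neq\emptyset$; since Reduction Rules~\ref{rr:End1} to~\ref{rule:cleanSuffix} have been applied exhaustively, Lemma~\ref{lem:isolateUnmatch} implies that $w$ is matched by $\Mopt$, and then Proposition~\ref{lem:rht} gives that both $m_h$ and $w$ are matched by $\Mopt$ and by $\Wopt$. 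Hence Corollary~\ref{cor:prefixCleaned} applies to the acceptable pair $(m_h,w)$ and yields $f_{m_h}(\Mopt(m_h))\leq f_{m_h}(w)\leq f_{m_h}(\Wopt(m_h))$, that is, $f_{m_h}(w_h)\leq f_{m_h}(w)\leq f_{m_h}(w_h)$, so $f_{m_h}(w)=f_{m_h}(w_h)$; since $f_{m_h}$ is injective, $w=w_h$, a contradiction. Therefore $\C{A}(m_h)=\{w_h\}$, and running the identical argument with the roles of men and women exchanged gives $\C{A}(w_h)=\{m_h\}$.

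I do not expect a genuine obstacle here: the whole content is encapsulated in Corollary~\ref{cor:prefixCleaned}, and the only point requiring a moment's care is verifying that the corollary is applicable to the hypothetical extra pair $(m_h,w)$ --- that is, that $w$ is itself matched by both optimal stable matchings --- which is exactly what Lemma~\ref{lem:isolateUnmatch} together with the Rural-Hospital theorem (Proposition~\ref{lem:rht}) guarantees.
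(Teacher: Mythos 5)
Your proof is correct and follows essentially the paper's own route: the paper states the observation as following directly from Corollary~\ref{cor:prefixCleaned}, and your argument is precisely that deduction spelled out (the sandwich $f_{m_h}(\Mopt(m_h))\leq f_{m_h}(w)\leq f_{m_h}(\Wopt(m_h))$ collapsing to an equality for a happy person, plus injectivity). Your additional check via Lemma~\ref{lem:isolateUnmatch} and Proposition~\ref{lem:rht} that the hypothetical extra partner is matched by both optimal matchings is a legitimate and careful verification of the corollary's applicability, not a deviation.
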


\begin{lemma}\label{lem:rr1Dir2}
Let $\mu'\in\mathrm{SM}(\J)$. Then, $\mu=\mu'\cup\{(m_h,w_h)\}$ is a stable matching in $\I$ such that $\mbal{}_{\I}(\mu)=\mbal{}_{\J}(\mu')$.
\end{lemma}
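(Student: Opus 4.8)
The plan is to prove Lemma~\ref{lem:rr1Dir2}, the reverse direction of the safeness of Reduction Rule~\ref{rr1}, by verifying two things about $\mu=\mu'\cup\{(m_h,w_h)\}$: that it is a stable matching in $\I$, and that its balance equals that of $\mu'$ in $\J$. The balance equality is essentially the same computation already carried out in the proof of Lemma~\ref{lem:rr1Dir1}, just read in the opposite direction: by Observation~\ref{obs:onlygooodmen}'s hypothesis there exist \sad\ people, $\mu'$ is a perfect matching on $M'\cup W'$ by Reduction Rule~\ref{rr:basic}, so $\mu$ is perfect on $M\cup W$; then $\sum_{m\in M}f_m(\mu(m)) = f_{m_h}(w_h) + f_{m_s}(\mu'(m_s)) + \sum_{m\in M'\setminus\{m_s\}} f_m(\mu'(m)) = f'_{m_s}(\mu'(m_s)) + \sum_{m\in M'\setminus\{m_s\}} f'_m(\mu'(m)) = \sum_{m\in M'} f'_m(\mu'(m))$, and symmetrically for the women; taking the max of the two sides gives $\mbal_{\I}(\mu)=\mbal_{\J}(\mu')$.

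The substantive part is showing $\mu\in\mathrm{SM}(\I)$. First I would dispatch the \happy\ pair: by Observation~\ref{lemma:happyPairIsolated}, $\C{A}(m_h)=\{w_h\}$ and $\C{A}(w_h)=\{m_h\}$, so no acceptable pair of $\I$ involves $m_h$ or $w_h$ except $(m_h,w_h)$ itself, which lies in $\mu$ and hence cannot block it. So it suffices to consider an acceptable pair $(m,w)$ with $m\in M'$, $w\in W'$, $(m,w)\notin\mu$; note such a pair is also an acceptable pair in $\J$ (the only deleted partners were $m_h,w_h$), and $(m,w)\notin\mu'$. Since $\mu'$ is stable in $\J$, either $f'_m(w)>f'_m(\mu'(m))$ or $f'_w(m)>f'_w(\mu'(w))$. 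In the first subcase, if $m\neq m_s$ then $f_m = f'_m$ on the relevant arguments and $\mu(m)=\mu'(m)$, so $f_m(w)>f_m(\mu(m))$; if $m=m_s$ then $f'_{m_s}$ adds the constant $f_{m_h}(w_h)$ to both $f_{m_s}(w)$ and $f_{m_s}(\mu'(m_s))$, so the strict inequality again descends to the original functions, again giving $f_m(w)>f_m(\mu(m))$. The second subcase is symmetric with $w,w_s$ in place of $m,m_s$. Either way $(m,w)$ does not block $\mu$ in $\I$, so $\mu$ is stable.

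The one point requiring a little care — and the likeliest place for a subtle gap — is making sure the $\J$-stability of $\mu'$ really covers \emph{every} candidate $\I$-blocking pair, i.e.\ that the restriction of domains in Reduction Rule~\ref{rr1} did not destroy an acceptable pair of $\I$ whose re-examination is needed. This is exactly where Observation~\ref{lemma:happyPairIsolated} is used: the only arguments removed from any preference function are $m_h$ and $w_h$, and these appear only in the (now isolated) functions $f_{m_h},f_{w_h}$; for every other person $a\in M'\cup W'$ we have $\C{A}(a)\setminus\{m_h,w_h\}=\C{A}(a)$ already, so $f'_a$ has the same domain as $f_a$ up to the (additive, order-preserving) modification on $m_s$ and $w_s$. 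Hence no acceptable pair among $M'\cup W'$ is lost, and the case analysis above is exhaustive. This completes the proof.
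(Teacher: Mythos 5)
Your proposal is correct and follows essentially the same route as the paper's proof: dispatch $m_h,w_h$ via Observation~\ref{lemma:happyPairIsolated}, use perfectness of $\mu'$ (Reduction Rule~\ref{rr:basic}) and its stability in $\J$, descend the strict inequalities through the constant shifts on $f_{m_s}$ and $f_{w_s}$, and compute the balance exactly as in Lemma~\ref{lem:rr1Dir1}. Your explicit check that no acceptable pair among $M'\cup W'$ is lost when passing from $\I$ to $\J$ is left implicit in the paper, and your appeal to Observation~\ref{obs:onlygooodmen} should really be to the fact that Reduction Rule~\ref{rr:onlygooodmen} does not apply (which is what guarantees $m_s,w_s$ exist), but neither point affects correctness.
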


\begin{proof}We first show that $\mu\in\mathrm{SM}(\I)$. By Reduction Rule \ref{rr:basic}, it holds that $\mu'$ is a perfect matching in $\J$. Since $(m_h,w_h)$ is a \happy\ pair and by Observation \ref{lemma:happyPairIsolated}, we have that $\mu$ is a perfect matching in $\I$ such that neither $m_h$ nor $w_h$ participate in any pair that blocks $\mu$ (if such a pair exists). Let $(m,w)\notin\mu'$ be some acceptable pair in $\I$ such that $m\neq m_h$ and $w\neq w_h$. Since $\mu'\in\mathrm{SM}(\J)$ and it is a perfect matching, it holds that $f'_{m}(w)>f'_{m}(\mu'(m))$ or $f'_{w}(m)>f'_{w}(\mu'(w))$. Let us consider these two possibilities separately.
\begin{itemize}
\item Suppose that $f'_{m}(w)>f'_{m}(\mu(m))$. If $m\neq m_s$, then $f_{m}(w)=f'_{m}(w)$ and $f_{m}(\mu(m))=f'_{m}(\mu'(m))$, and therefore $f_{m}(w)>f_{m}(\mu(m))$. Else, $f_{m}(w)=f'_{m}(w)-f_{m_{h}}(w_{h})$ and $f_{m}(\mu(m))=f'_{m}(\mu'(m))-f_{m_{h}}(w_{h})$, and therefore again $f_{m}(w)>f_{m}(\mu(m))$.

\item Suppose that $f'_{w}(m)>f'_{w}(\mu'(w))$. Analogously to the previous case, we get that $f_{w}(m)>f_{w}(\mu(w))$.
\end{itemize}

Since the choice of $(m,w)$ was arbitrary, we conclude that $\mu$ does not have a blocking pair in $\I$, and therefore $\mu\in\mathrm{SM}(\I)$. To show that $\mbal_{\I}(\mu)=\mbal_{\J}(\mu')$, we follow the exact same argument as the one present in the proof of Lemma~\ref{lem:rr1Dir1}.
\end{proof}

We now turn to justify the use of Reduction Rule \ref{rr1}.

\begin{lemma}\label{L:RR1-safe}
Reduction Rule \ref{rr1} is safe, and $t({\cal I})=t({\cal J})$.
\end{lemma}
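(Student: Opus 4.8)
The plan is to assemble the safeness of Reduction Rule~\ref{rr1} from the two directions already established in Lemmas~\ref{lem:rr1Dir1} and~\ref{lem:rr1Dir2}, and then to verify separately that the parameter does not change. First I would argue the equivalence of instances. If $\I$ is a \Yes-instance, pick $\mu\in\mathrm{SM}(\I)$ with $\mbal_{\I}(\mu)\le k$; by Lemma~\ref{lem:rr1Dir1}, $\mu'=\mu\setminus\{(m_h,w_h)\}\in\mathrm{SM}(\J)$ and $\mbal_{\J}(\mu')=\mbal_{\I}(\mu)\le k$, so $\J$ is a \Yes-instance. Conversely, if $\J$ is a \Yes-instance, pick $\mu'\in\mathrm{SM}(\J)$ with $\mbal_{\J}(\mu')\le k$; by Lemma~\ref{lem:rr1Dir2}, $\mu=\mu'\cup\{(m_h,w_h)\}\in\mathrm{SM}(\I)$ and $\mbal_{\I}(\mu)=\mbal_{\J}(\mu')\le k$, so $\I$ is a \Yes-instance. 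Since $k$ is literally unchanged by the rule, this gives $\Bal(\I)\le k \iff \Bal(\J)\le k$, and in fact the two lemmas together set up a balance-preserving bijection between $\mathrm{SM}(\I)$ and $\mathrm{SM}(\J)$ (the maps $\mu\mapsto\mu\setminus\{(m_h,w_h)\}$ and $\mu'\mapsto\mu'\cup\{(m_h,w_h)\}$ are mutually inverse, using that $(m_h,w_h)$ lies in every stable matching of $\I$ by Observation~\ref{lemma:happyPairIsolated} and Reduction Rule~\ref{rr:basic}), hence $\Bal(\I)=\Bal(\J)$.

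To finish I need $t(\I)=t(\J)$, i.e. $\min\{\Om(\I),\Ow(\I)\}=\min\{\Om(\J),\Ow(\J)\}$ since $k$ is fixed. The cleanest route is to apply the balance-preserving bijection above to the specific stable matchings that realize $\Om$ and $\Ow$. Concretely, $\Mopt(\I)$ is a stable matching of $\I$ containing $(m_h,w_h)$, and the corresponding stable matching $\Mopt(\I)\setminus\{(m_h,w_h)\}$ of $\J$ must be $\Mopt(\J)$: it is man-optimal because the bijection preserves, man by man, the preference value each man receives — for $m\ne m_s$ the value is literally unchanged, and for $m_s$ it shifts by the constant $f_{m_h}(w_h)$, so the ordering ``which stable matching is best for a given man'' is preserved, and since every man of $\J$ still attains his $\I$-optimal partner, $\Mopt(\J)=\Mopt(\I)\setminus\{(m_h,w_h)\}$. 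Now compute $\Om(\J)=\sum_{m\in M'}f'_m(\Mopt(\J)(m)) = \big(f_{m_s}(\Mopt(\I)(m_s))+f_{m_h}(w_h)\big)+\sum_{m\in M'\setminus\{m_s\}}f_m(\Mopt(\I)(m)) = \sum_{m\in M}f_m(\Mopt(\I)(m)) = \Om(\I)$, where the middle equality uses that $\Mopt(\I)(m_h)=w_h$. The symmetric argument with $\Wopt$, using the woman-side shift by $f_{w_h}(m_h)$, gives $\Ow(\J)=\Ow(\I)$. Hence $\min\{\Om(\J),\Ow(\J)\}=\min\{\Om(\I),\Ow(\I)\}$ and therefore $t(\J)=t(\I)$.

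The main obstacle I anticipate is not the arithmetic — that is essentially the same telescoping computation as in the display inside Lemma~\ref{lem:rr1Dir1} — but justifying rigorously that $\Mopt$ and $\Wopt$ transport correctly under the rule, i.e. that $\Mopt(\J)=\Mopt(\I)\setminus\{(m_h,w_h)\}$ and $\Wopt(\J)=\Wopt(\I)\setminus\{(m_h,w_h)\}$. One has to be a little careful here: the constant shift $f_{m_h}(w_h)$ is added only to $f_{m_s}$, and shifts to a single preference function by a common constant never change that person's induced total order over partners, nor the set of stable matchings (already shown), so man-optimality is preserved setwise; one then observes the transported matching assigns each man the same partner as $\Mopt(\I)$ (minus the removed happy pair), which characterizes $\Mopt(\J)$ by Proposition~\ref{lem:existenceSM} and the definition of man-optimal. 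An alternative, slightly slicker packaging avoids even naming $\Mopt(\J)$: since the bijection $\mu\mapsto\mu\setminus\{(m_h,w_h)\}$ changes $\sum_m f_m(\mu(m))$ by the same constant $f_{m_h}(w_h)$ for \emph{every} $\mu\in\mathrm{SM}(\I)$, minimizing the men's total over $\mathrm{SM}(\I)$ and over $\mathrm{SM}(\J)$ differ by exactly that constant; but $\min_{\mu\in\mathrm{SM}(\I)}\sum_m f_m(\mu(m))=\Om(\I)$ and likewise for $\J$, giving $\Om(\J)=\Om(\I)$ directly, and symmetrically $\Ow(\J)=\Ow(\I)$. I would present this second version in the write-up, as it sidesteps the transport-of-optimal-matching bookkeeping entirely.
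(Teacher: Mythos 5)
Your proposal is correct and follows essentially the same route as the paper: the paper's proof likewise assembles safeness and $t(\I)=t(\J)$ directly from Lemmata \ref{lem:rr1Dir1} and \ref{lem:rr1Dir2} together with the fact that $k$ is untouched, and you merely make explicit the preservation of $\Om$ and $\Ow$ (via the transport of $\Mopt$ and $\Wopt$, or the minimization argument) that the paper leaves implicit. One small wording slip in your ``slicker'' packaging: the net change of the men's total under $\mu\mapsto\mu\setminus\{(m_h,w_h)\}$ is zero, since the shift built into $f'_{m_s}$ exactly compensates the removed term $f_{m_h}(w_h)$ (as your own explicit computation shows), rather than the totals differing by the nonzero constant $f_{m_h}(w_h)$ --- otherwise the conclusion $\Om(\J)=\Om(\I)$ would not follow.
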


\begin{proof}By Lemmata \ref{lem:rr1Dir1} and \ref{lem:rr1Dir2}, it holds that $\Bal(\I)=\Bal(\J)$, and that both $\mbal_{\I}(\Mopt(\I))=\mbal{}_{\J}(\Mopt(\J))$ and $\mbal_{\I}(\Wopt(\I))=\mbal{}_{\J}(\Wopt(\J))$. As the argument $k$ remained untouched, we have that Reduction Rule \ref{rr1} is safe as well as that $t({\cal I})=t({\cal J})$.
\end{proof}

Before we proceed to examine preference functions more closely, let us take a step back and prove the following result.

\begin{lemma}\label{Cor:bad-men-women}
Given an instance $\I$\ of \pGBSM, one can exhaustively apply Reduction Rules~\ref{rr:End1} to \ref{rr1} in polynomial time to obtain an instance $\J$ such that $t(\J)\leq t(\I)$. All people in $\J$ are \sad\ and matched by every stable matching, and there exist at most $2t$ men and at most $2t$ women. 
\end{lemma}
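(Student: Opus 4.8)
The plan is to chain together the per-rule statements already established and then simply read off the structural consequences of the fact that, in the output instance $\J$, none of the six reduction rules is applicable. There is no new combinatorial content: the lemma is a consolidation of the preceding subsections.

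First I would argue that the exhaustive application terminates in polynomial time. Each of Reduction Rules~\ref{rr:End1}, \ref{rr:manybadmen} and~\ref{rr:onlygooodmen} either halts the process (returning a trivial instance) or is inapplicable, so it can fire at most once. Each of the three transforming rules---Reduction Rules~\ref{rule:cleanSuffix}, \ref{rr:basic}, \ref{rr1}---strictly decreases the nonnegative integer quantity ``$|M|+|W|$ plus the number of acceptable pairs'', which is bounded by the input size: cleaning a suffix removes one acceptable pair, while the other two delete at least one person. Hence only polynomially many rule applications occur in total. By Proposition~\ref{lem:existenceSM}, $\Mopt$ and $\Wopt$---and therefore the conditions of all rules, the sets $\Mbad,\Wbad$, and the value $t$---can be computed in polynomial time, so each step is polynomial and the whole run is polynomial.

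Next, correctness and monotonicity of $t$ follow by composition. Reduction Rules~\ref{rule:cleanSuffix}, \ref{rr:basic} and~\ref{rr1} are safe and preserve $t$ by Lemmata~\ref{lem:cleanSuffix}, \ref{L:perfect-matching} and~\ref{L:RR1-safe}, and Reduction Rules~\ref{rr:End1}, \ref{rr:manybadmen}, \ref{rr:onlygooodmen} only ever replace the instance by a correct trivial one; in particular the parameter never increases, so $t(\J)\le t(\I)$, and when no halting rule fires we in fact have $t(\J)=t(\I)$, so the threshold $2t$ denotes the same quantity throughout. If one of the halting rules fires, the algorithm has already correctly decided the instance and there is nothing further to prove; so we may assume that $\J$ is a nontrivial instance on which Reduction Rules~\ref{rr:End1} to~\ref{rr1} are all inapplicable.

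It then remains to extract the three structural claims from inapplicability. Since Reduction Rule~\ref{rr:onlygooodmen} does not apply, $\Mbad(\J)\ne\emptyset$, hence $\J$ contains a \sad\ man and, symmetrically, a \sad\ woman. Suppose $\J$ had a \happy\ man $m_h$ and set $w_h=\Mopt(m_h)=\Wopt(m_h)$; then $w_h$ is also \happy, the pair $(m_h,w_h)$ is matched by every stable matching, and so $(m_h,w_h)$ is a \happy\ pair---but then Reduction Rule~\ref{rr1} would apply, a contradiction. Hence $\J$ has no \happy\ people, so every person of $\J$ is \sad\ and $M(\J)=\Mbad(\J)$, $W(\J)=\Wbad(\J)$. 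Because Reduction Rule~\ref{rr:basic} does not apply, every person of $\J$ is matched by $\Mopt$, and then by Proposition~\ref{lem:rht} every person of $\J$ is matched by every stable matching. Finally, because Reduction Rule~\ref{rr:manybadmen} does not apply, $|M(\J)|=|\Mbad(\J)|\le 2t$ and $|W(\J)|=|\Wbad(\J)|\le 2t$, completing the proof. The only point requiring genuine care---rather than deep argument---is the interaction between the prescribed order of application and the parameter, namely that $t$ is invariant under all transforming rules so that the bound $2t$ is unambiguous; this is exactly what Lemmata~\ref{lem:cleanSuffix}, \ref{L:perfect-matching} and~\ref{L:RR1-safe} were arranged to provide.
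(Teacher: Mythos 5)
Your proposal is correct and follows essentially the same route as the paper's own proof: termination and polynomial time from the fact that each rule either halts or shrinks the instance, $t(\J)\leq t(\I)$ from Lemmata~\ref{lem:cleanSuffix}, \ref{L:perfect-matching} and~\ref{L:RR1-safe}, and the structural claims read off from the inapplicability of Reduction Rules~\ref{rr:basic}, \ref{rr1} and~\ref{rr:manybadmen}. You merely spell out the details (the decreasing measure, and the step that a \happy\ person yields a \happy\ pair triggering Reduction Rule~\ref{rr1}) more explicitly than the paper does.
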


\begin{proof}Notice that each rule among Reduction Rules~\ref{rr:End1} to \ref{rr1} can be applied in polynomial time, and it either terminates the execution of the algorithm or shrinks the size of the instance. Hence, it is clear that the instance $\J$ is obtained in polynomial time, and the claim that $t(\J)\leq t(\I)$ follows from Lemmata \ref{L:perfect-matching} and \ref{L:RR1-safe}. Due to Reduction Rules \ref{rr:basic} and \ref{rr1}, we have that all people in $\J$ are \sad\ and matched by every stable matching. Thus, due to Reduction Rule \ref{rr:manybadmen}, we also have that there exist at most $2t$ men and at most $2t$ women.
\end{proof}


\paragraph{Truncating High-Values.}
Up until now, we have bounded the number of people. However, the images of the preference functions can contain integers that are not bounded by a function polynomial in the parameter. Thus, even though the number of people is upper bounded by $4t$, the total size of the instance can be huge. Accordingly, in what follows, we need to process the images of the preference functions. Recall that we have already modified preference functions so that they would not contain irrelevant information in their {\em prefixes} and {\em suffixes}.
Our current goal is to truncate ``high-values''' of preference functions. To understand the intuition behind the rule we present next, suppose that there exists a stable matching $\mu$ and a man $m$ such that $f_{m}(\mu(m)) > t + f_{m}(\Mopt(m))$. That is, $\mu$ matches $m$ to a woman whose position is larger than the position of the woman with whom $m$ is matched by \Mopt\ by at least $t$ units. Then, $\bal{\mu}\geq \sum_{m'\in M}f_{m'}(\mu(m')) > \Om + t \geq k$. Hence, irrespective of whether or not the current instance is a \Yes-instance, we know that $\mu$ is not a yes-certificate. We thus observe that we should delete all those acceptable pairs whose presence in any 
stable matching  prevents its balance from being upper bounded by $k$. Formally,

\begin{rr}\label{rr:Truncation}\label{rr2}
If there exists an acceptable pair $(m,w)$ such that $f_{m}(w)>(k-\Om)+f_{m}(\Mopt(m))$ or $f_{w}(m)>(k-\Ow)+f_{w}(\Wopt(w))$, then define the preference functions as follows: 
\begin{itemize}
\item $f'_{m}=f_{m}\restrict{\A(m)\sm \{w\}}$. 
\item $f'_{w} = f_{w}\restrict{\A(w)\sm \{m\}}$. 
\item For all $a\in M\cup W \sm \{m, w\}$: $f'_{a}=f_{a}$. 
\end{itemize}
The new instance is $\J=(M, W, \{f'_{m'}\}_{m'\in M}, \{f'_{w'}\}_{w'\in W}, k)$. 
\end{rr}

\begin{lemma}\label{lem:trunc}
Reduction Rule \ref{rr:Truncation} is safe, and $t({\cal I})\geq t({\cal J})$.
\end{lemma}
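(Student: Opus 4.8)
The plan is to prove the two directions of safeness separately, and then verify the claim about $t$. For safeness, it suffices again to show that $\mathrm{SM}(\I) \cap \mathrm{SM}(\J)$ accounts for enough matchings, so I will aim to establish that $\mathrm{SM}(\J) \subseteq \mathrm{SM}(\I)$ always, and that the matchings in $\mathrm{SM}(\I)$ that are lost (i.e.\ those using the deleted pair $(m,w)$) have balance strictly larger than $k$, hence are irrelevant to deciding whether $\Bal(\I)\le k$.

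\medskip

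\textbf{Step 1: Deleted pair is useless in $\I$.} First I would argue that no $\mu \in \mathrm{SM}(\I)$ with $\bal{\mu}\le k$ matches $m$ with $w$. Suppose the rule triggers because $f_m(w) > (k-\Om) + f_m(\Mopt(m))$ and suppose toward a contradiction that $\mu \in \mathrm{SM}(\I)$ matches $m$ with $w$. Using equality (I) and the fact (from Corollary~\ref{cor:prefixCleaned}, after Reduction Rule~\ref{rr:basic}, all stable matchings are perfect) that $f_{m'}(\mu(m'))\ge f_{m'}(\Mopt(m'))$ for every man $m'$, we get
\[
\sum_{m'\in M} f_{m'}(\mu(m')) \ge f_m(w) + \sum_{m'\in M\sm\{m\}} f_{m'}(\Mopt(m')) > (k-\Om) + \Om = k,
\]
so $\bal{\mu} > k$, a contradiction. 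The symmetric computation with the women's side handles the case $f_w(m) > (k-\Ow) + f_w(\Wopt(w))$. Hence every yes-certificate of $\I$ avoids $(m,w)$ and is therefore a matching in $\J$.

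\medskip

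\textbf{Step 2: $\mathrm{SM}(\I)$ avoiding $(m,w)$ equals $\mathrm{SM}(\J)$.} Any $\mu\in\mathrm{SM}(\J)$ is a matching in $\I$; since the preference functions are unchanged except for the two domain restrictions and blocking pairs of $\mu$ in $\I$ involving the removed pair $(m,w)$ cannot occur (because $\mu$ matches neither $m$ to $w$, and any blocking pair is an acceptable pair in $\J$, hence also blocks in $\J$), stability transfers: $\mathrm{SM}(\J)\subseteq\mathrm{SM}(\I)$. Conversely, if $\mu\in\mathrm{SM}(\I)$ and $\mu$ does not match $m$ with $w$, then $\mu$ is a matching in $\J$ and a blocking pair of $\mu$ in $\J$ would also block in $\I$, so $\mu\in\mathrm{SM}(\J)$. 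Combining with Step~1: $\Bal(\I)\le k$ iff $\I$ has a stable matching of balance $\le k$ iff (by Step~1 such a matching avoids $(m,w)$, and by the above it lies in $\mathrm{SM}(\J)$, noting balances agree because the sums in equality (I) are unaffected by restricting domains to pairs not used) $\Bal(\J)\le k$. This gives safeness.

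\medskip

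\textbf{Step 3: The parameter does not increase.} Here I need $t(\J)\le t(\I)$, i.e.\ $k - \min\{\Om(\J),\Ow(\J)\} \le k - \min\{\Om(\I),\Ow(\I)\}$, equivalently $\min\{\Om(\J),\Ow(\J)\}\ge \min\{\Om(\I),\Ow(\I)\}$. The subtlety is that deleting an acceptable pair can change $\Mopt$ and $\Wopt$. However, by Step~1 the deleted pair $(m,w)$ is in no stable matching of $\I$ of balance $\le k$; in particular, invoking Observation~\ref{obs:suffix}/Proposition~\ref{lem:manwomanOpt} together with the triggering inequality, the pair $(m,w)$ lies outside the ``relevant window'' $[f_m(\Mopt(m)), f_m(\Wopt(m))]$ (resp.\ the women's window), so it is not used by \emph{any} stable matching of $\I$ at all; hence $\mathrm{SM}(\J)=\mathrm{SM}(\I)$, which forces $\Mopt$, $\Wopt$, $\Om$ and $\Ow$ to be identical in $\I$ and $\J$, giving $t(\I)=t(\J)$ and a fortiori $t(\I)\ge t(\J)$.

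\medskip

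\textbf{Main obstacle.} The delicate point is Step~3: one must be careful that deleting $(m,w)$ truly removes a pair unused by \emph{every} stable matching, not merely by low-balance ones, because otherwise the man/woman-optimal matchings (and thus $\Om,\Ow$, and $t$) could in principle shift. I expect this to follow cleanly from Corollary~\ref{cor:prefixCleaned} (which pins down, for every acceptable pair between people matched by both $\Mopt$ and $\Wopt$, that $f_m(\Mopt(m))\le f_m(w)\le f_m(\Wopt(m))$) combined with the triggering inequality $f_m(w) > (k-\Om) + f_m(\Mopt(m)) \ge f_m(\Wopt(m))$ whenever $k-\Om \ge f_m(\Wopt(m)) - f_m(\Mopt(m))$ — but pinning down exactly why $f_m(w)$ exceeds the top of the window (using that $k\ge\max\{\Om,\Ow\}$ by Reduction Rule~\ref{rr:End1} being inapplicable, together with $\bal{\Wopt}\le \ldots$) is where the argument has to be made precise.
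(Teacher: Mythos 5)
Your Step 1 matches the paper's forward direction, but Steps 2 and 3 contain genuine gaps. In Step 2, the unconditional claim $\mathrm{SM}(\J)\subseteq\mathrm{SM}(\I)$ is not justified by your argument: the deleted pair $(m,w)$ is an acceptable pair in $\I$ but \emph{not} in $\J$, so the assertion that ``any blocking pair is an acceptable pair in $\J$, hence also blocks in $\J$'' fails exactly for $(m,w)$, and the fact that $\mu$ does not match $m$ with $w$ does not prevent $(m,w)$ from blocking $\mu$ in $\I$. The paper closes this hole differently: if $(m,w)$ blocked some $\mu\in\mathrm{SM}(\J)$ in $\I$, then $f_m(\mu(m))>f_m(w)>(k-\Om)+f_m(\Mopt(m))$, and summing over all men forces $\mbal{}_{\J}(\mu)>k$. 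So only matchings of balance exceeding $k$ can be lost, which is enough for safeness of the ``$\Bal\le k$'' question, but it does \emph{not} give $\mathrm{SM}(\J)\subseteq\mathrm{SM}(\I)$, let alone set equality.

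This undermines Step 3, whose conclusion is in fact false in general. The truncation threshold $(k-\Om)+f_m(\Mopt(m))$ need not exceed $f_m(\Wopt(m))$ (there is no reason that $k-\Om\geq f_m(\Wopt(m))-f_m(\Mopt(m))$ for the particular man $m$), so the rule can delete pairs strictly inside the window of Corollary~\ref{cor:prefixCleaned} --- even the pair $(m,\Wopt(m))$ used by the woman-optimal stable matching of $\I$. In that case $\mathrm{SM}(\I)\neq\mathrm{SM}(\J)$, the woman-optimal matching shifts, and $\Ow$ can strictly increase; this is precisely why the lemma only claims $t(\I)\geq t(\J)$ rather than equality. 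The correct argument, as in the paper, is asymmetric: since $f_m(w)>f_m(\Mopt(m))$, the matching $\Mopt(\I)$ survives in $\J$ and, by Corollary~\ref{cor:prefixCleaned}, $\Mopt(\I)=\Mopt(\J)$, so $\Om(\I)=\Om(\J)$; and for the women one shows $\Ow(\I)\leq\Ow(\J)$ because $\Wopt(\J)$ still matches everyone (Rural Hospital plus $\Mopt(\I)=\Mopt(\J)$) and, again by Corollary~\ref{cor:prefixCleaned}, no woman's best remaining acceptable partner in $\J$ is better for her than $\Wopt(\I)(w)$. Together these give $\min\{\Om(\J),\Ow(\J)\}\geq\min\{\Om(\I),\Ow(\I)\}$, hence $t(\I)\geq t(\J)$; your hoped-for claim that the deleted pair is unused by \emph{any} stable matching of $\I$ cannot be salvaged.
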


\begin{proof}
Without loss of generality, suppose that $f_{m}(w)>(k-\Om(\I))+f_{m}(\Mopt(m))$. Due to Reduction Rule \ref{rr:End1}, we have that $k-\Om(\I)\geq 0$, and therefore $f_m(w)>f_{m}(\Mopt(m))$, which implies that $w\neq \Mopt(m)$. We thus have that $\Mopt(\I)$ is also a matching in $\J$, and due to Corollary \ref{cor:prefixCleaned}, we deduce that $\Mopt(\I)=\Mopt(\J)$. 
First, we would like to show that $t(\I)\geq t(\J)$. For this purpose, it is sufficient to show that $\Om(\I)\leq \Om(\J)$ and $\Ow(\I)\leq \Ow(\J)$. Since $\Mopt(\I)=\Mopt(\J)$, it is clear that $\Om(\I)=\Om(\J)$. By Reduction Rule \ref{rr:basic}, $\Mopt(\I)$ matches all people in $\I$.
 Thus, by Proposition \ref{lem:rht} and since $\Mopt(\I)=\Mopt(\J)$, we have that $\Wopt(\J)$ matches all people in $\J$, and hence all people in $\I$. By Corollary \ref{cor:prefixCleaned}, for any woman $w$ and the man $m$ most preferred by $w$ in $\J$, it holds that $w$ does not prefer $m$ over $\Wopt(w)$ in $\I$. Thus, by the definition of the preference functions, we have that $\Ow(\I)\leq \Ow(\J)$.

To show that the rule is safe, we need to show that $\Bal(\I)\leq k$ if and only if $\Bal(\J)\leq k$. For this purpose, let us first suppose that $\Bal(\I)\leq k$. Then, there exists $\mu\in\mathrm{SM}(\I)$ such that $\mbal_{\I}(\mu)\leq k$. Notice that if $\mu(m) = w$, then since $f_{m}(w)>(k-\Om)+f_{m}(\Mopt(m))$ and by the equations in ``Perfect Matching'', we have that
\[\begin{array}{ll}
k&\geq {\mbal}_{\I}(\mu)\\
&\geq \displaystyle{\sum_{m'\in M}f_{m'}(\mu(m'))}\\
&\geq (k-\Om(\I)) + \displaystyle{\sum_{m'\in M}f_{m'}(\Mopt(m'))} > (k-\Om(\I)) + \Om(\I),
\end{array}\]
which is a contradiction. Hence, $(m,w)\notin \mu$. Therefore, $\mu$ is a matching in $\J$. By the definition of the new preference functions, if $\mu$ has a blocking pair in $\J$, then this pair also blocks $\mu$ in $\I$. Since $\mu$ is stable in $\I$, we have that $\mu$ is also stable in $\J$. Now, by our definition of the new preference functions, we have that $\mbal_{\I}(\mu)=\mbal_{\J}(\mu)$. Since $\mbal_{\I}(\mu)\leq k$, we thus conclude that $\Bal(\J)\leq k$.

In the second direction, suppose that $\Bal(\J)\leq k$. Then, there exists $\mu\in\mathrm{SM}(\J)$ such that $\mbal_{\J}(\mu)\leq k$. Clearly, $\mu$ is also a matching in $\I$.  Moreover, by the definition of the preference functions, every acceptable pair in $\I$ that is also present in $\J$ cannot block $\mu$ in $\I$, else it would have also blocked $\mu$ in $\J$. Thus, if $\mu$ has a blocking pair in $\I$, then this pair must be $(m,w)$. We claim that $(m,w)$ cannot block $\mu$ in $\I$, which would imply that $\mu\in\mathrm{SM}(\I)$. Suppose, by way of contradiction, that this claim is not true. Recall that we have already proved that $\Mopt(\I)=\Mopt(\J)$. Let us denote $\Mopt=\Mopt(\I)$.
We have that $\mu$ matches $m$, which implies that $f_m(\mu(m))>f_m(w)$. Since $f_{m}(w)>(k-\Om(\I))+f_{m}(\Mopt(m))$, we deduce that $f_m(\mu(m))>(k-\Om(\I))+f_{m}(\Mopt(m))$. Furthermore, since $f'_m(\mu(m))=f_m(\mu(m))$, $f'_{m}(\Mopt(m))=f_{m}(\Mopt(m))$ and $\Om(\I)=\Om(\J)$, we get that $f'_m(\mu(m))>(k-\Om(\J))+f'_{m}(\Mopt(m))$. However, we then have that
\[\begin{array}{ll}
k&\geq {\mbal}_{\J}(\mu)\\
&\geq \displaystyle{\sum_{m'\in M}f'_{m'}(\mu(m'))}\\
& \geq (k-\Om(\J)) + \displaystyle{\sum_{m'\in M}f'_{m'}(\Mopt(m'))} > (k-\Om(\J)) + \Om(\J),
\end{array}\]
which is a contradiction. Therefore, $\mu\in\mathrm{SM}(\I)$. The definition of the new preference functions imply that $\mbal_{\I}(\mu)=\mbal_{\J}(\mu)$. Since $\mbal_{\J}(\mu)\leq k$, we conclude that $\Bal(\I)\leq k$.
\end{proof}

\paragraph{Shrinking Gaps.}
Currently, there might still exist a man $m$ or a woman $w$ such that $f_{m}(\Mopt(m))>1$ or $f_{w}(\Wopt(w))>1$, respectively.
In the following rule, we would like to decrease some values assigned by the preference functions of such men and women in a manner that preserves equivalence.

\begin{rr}\label{rr:PrefixShrinking}
If there exist $m\in M$ and $w\in W$ such that $f_{m}(\Mopt(m)) >1$ and $f_{w}(\Wopt(w))>1$, then define the preference functions as follows.
\begin{itemize} 
\item {\bf The preference function of $m$.} For all $w'\in \A(m)$: $f'_{m}(w')=f_{m}(w')-1$. 
\item {\bf The preference function of $w$.} For all $m'\in \A(w)$: $f'_{w}(m')=f_{w}(m')-1$.
\item For all $a\in M\cup W\sm \{m, w\}$: $f'_{a}=f_{a}$.
\end{itemize}
The new instance is $\J=(M,W,\{f'_{m'}\}_{m' \in M}, \{f'_{w'}\}_{w'\in W},k-1)$.
\end{rr}

\begin{lemma}\label{L:PrefixShrinking-safe}
Reduction Rule \ref{rr:PrefixShrinking} is safe, and $t({\cal I})=t({\cal J})$. 
\end{lemma}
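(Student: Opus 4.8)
The plan is to show that Reduction Rule~\ref{rr:PrefixShrinking} produces an instance $\J$ with the exact same set of stable matchings as $\I$, and then to track precisely how the balance, $\Om$, $\Ow$, and $k$ all change, so as to conclude both safeness and $t(\I)=t(\J)$. The first and main task is to argue $\mathrm{SM}(\I)=\mathrm{SM}(\J)$. Here the key observation is that the transformation subtracts $1$ from \emph{every} value of $f_m$ and \emph{every} value of $f_w$, and leaves all other preference functions untouched. Consequently, for any matching $\nu$, the relative order of any person's acceptable partners is unchanged, so an acceptable pair $(m',w')$ blocks $\nu$ in $\I$ if and only if it blocks $\nu$ in $\J$: for a person other than $m$ or $w$ the preference function is literally the same, and for $m$ (resp.\ $w$) the comparison $f_m(w')<f_m(\nu(m))$ is equivalent to $f'_m(w')<f'_m(\nu(m))$ since both sides dropped by $1$. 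Since the rule requires $f_m(\Mopt(m))>1$ and $f_w(\Wopt(w))>1$ (so in particular all values of $f_m$ and $f_w$ are at least $2$ by injectivity into $\N$ starting from a cleaned prefix, and certainly remain positive after subtracting $1$), the new functions are still well-defined injective preference functions. Hence $\mathrm{SM}(\I)=\mathrm{SM}(\J)$, and in particular $\Mopt(\I)=\Mopt(\J)$ and $\Wopt(\I)=\Wopt(\J)$.

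Next I would compute the effect on the balance. Fix any stable matching $\mu$ (which by Reduction Rule~\ref{rr:basic} is perfect, so the equations~(I) apply). The man-side sum $\sum_{m'\in M}f'_{m'}(\mu(m'))$ equals $\sum_{m'\in M}f_{m'}(\mu(m'))-1$, because exactly the single term corresponding to $m$ decreased by $1$ and all other men's functions are unchanged. Symmetrically, the woman-side sum decreases by exactly $1$. Therefore $\mbal_{\J}(\mu)=\max\{A-1,B-1\}=\mbal_{\I}(\mu)-1$, where $A,B$ are the two $\I$-side sums. Taking the minimum over $\mathrm{SM}(\I)=\mathrm{SM}(\J)$ gives $\Bal(\J)=\Bal(\I)-1$. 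Since the target in $\J$ is $k-1$, we get $\Bal(\I)\leq k \iff \Bal(\I)-1\leq k-1 \iff \Bal(\J)\leq k-1$, which is exactly safeness.

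Finally I would check $t(\I)=t(\J)$. Applying the balance computation above to the particular stable matchings $\Mopt$ and $\Wopt$ (both perfect, both unchanged between $\I$ and $\J$), we get $\Om(\J)=\mbal_{\J}$-type man-sum of $\Mopt$ in $\J$ $=\Om(\I)-1$ and likewise $\Ow(\J)=\Ow(\I)-1$; more directly, $\Om(\J)=\sum_{(m',w')\in\Mopt}f'_{m'}(w')=\Om(\I)-1$ since only $m$'s term drops. Hence $\min\{\Om(\J),\Ow(\J)\}=\min\{\Om(\I),\Ow(\I)\}-1$, and $t(\J)=(k-1)-\min\{\Om(\J),\Ow(\J)\}=(k-1)-(\min\{\Om(\I),\Ow(\I)\}-1)=k-\min\{\Om(\I),\Ow(\I)\}=t(\I)$, as claimed. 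The only subtle point, and thus the main thing to get right, is the bookkeeping that \emph{exactly one} term changes in each of the man-side and woman-side totals (so each total drops by precisely $1$, not more), together with verifying that $f'_m$ and $f'_w$ remain legitimate injective functions into $\N$; once those are pinned down the rest is immediate from $\mathrm{SM}(\I)=\mathrm{SM}(\J)$.
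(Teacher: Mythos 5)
Your proposal is correct and follows essentially the same route as the paper: establish $\mathrm{SM}(\I)=\mathrm{SM}(\J)$ from the preserved relative order, then observe that exactly one term on each side drops by $1$, so every stable matching's balance, $\Om$, and $\Ow$ each decrease by exactly $1$, which together with the decrease of $k$ by $1$ yields both safeness and $t(\I)=t(\J)$. The only addition is your explicit check (via cleaned prefixes) that the new functions still map into $\N$, a minor well-definedness point the paper leaves implicit.
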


\begin{proof}
Let us first observe that the set of acceptable pairs in $\I$ is equal to the set of acceptable pairs of $\J$. Furthermore, for every person $a$, including the cases where this person is either $m$ or $w$, any two acceptable partners $b$ and $b'$ of $a$ that satisfy $f_a(b)<f_a(b')$ also satisfy $f'_a(b)<f'_a(b')$, and vice versa. Indeed, this observation follows directly from our definition of the new preference functions. In other words, if a person prefers some person over another in $\I$, then this person also has the same preference order in $\J$, and vice versa. We thus deduce that SM$(\I)=\mathrm{SM}(\J)$.

We proceed by claiming that for all $\mu\in\mathrm{SM}(\I)$, we have that $\mbal_{\I}(\mu)=\mbal_{\J}(\mu)+1$. Indeed, by the definition of the new preference functions and the equations in ``Perfect Matching'', we have that

\[\begin{array}{ll}
{\mbal}_{\I}(\mu)  &= \displaystyle{\max\{\sum_{m'\in M}f_{m'}(\mu(m')), \sum_{w'\in W} f_{w'}(\mu(w'))\}}\\

 &= \max\{f_{m}(\mu(m)) + \displaystyle{\sum_{m'\in M\setminus\{m\}}f_{m'}(\mu(m')), f_{w}(\mu(w)) + \sum_{w'\in W\setminus\{w\}} f_{w'}(\mu(w'))\}}\\

&= \max\{f'_{m}(\mu(m))+1 + \displaystyle{\sum_{m'\in M\setminus\{m\}}f'_{m'}(\mu(m')), f'_{w}(\mu(w))+1 + \sum_{w'\in W\setminus\{w\}} f'_{w'}(\mu(w'))\}}\\

&= \max\{\displaystyle{\sum_{m'\in M}f'_{m'}(\mu(m')),\sum_{w'\in W} f'_{w'}(\mu(w'))\}}+1 ={\mbal}_{\J}(\mu)+1.
\end{array}
\]

Furthermore, the arguments above also show that $\Om(\I)=\Om(\J)+1$ and $\Ow(\I)=\Ow(\J)+1$. Hence, we have that $\Bal(\I)=\Bal(\J)+1$. Since $k$ was decreased by $1$, we conclude that the rule is safe and that $t(\I)=t(\J)$.
\end{proof}

The usefulness of Reduction Rule~\ref{rr:PrefixShrinking} lies in the observation that after the exhaustive application of this rule, at least one of the two parties does not have any member without a person assigned 1 by his/her preference function. More precisely,

\begin{observation}\label{obs:funcGap}
Let $\I$\ be an instance of \pGBSM\ that is reduced  with respect to Reduction Rules~\ref{rr:End1} to \ref{rr:PrefixShrinking}. Then, either {\bf (i)} for every $m\in M$, we have that $f_{m}(\Mopt(m))=1$, or {\bf (ii)} for every $w\in W$, we have that $f_{w}(\Wopt(w))=1$. In particular, either {\bf (i)} $\Om=|M|$ or {\bf (ii)} $\Ow=|W|$.
\end{observation}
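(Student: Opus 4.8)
\textbf{Proof plan for Observation~\ref{obs:funcGap}.}
The plan is to prove the contrapositive of the dichotomy: if there exists a man $m^\star$ with $f_{m^\star}(\Mopt(m^\star)) > 1$ \emph{and} there exists a woman $w^\star$ with $f_{w^\star}(\Wopt(w^\star)) > 1$, then Reduction Rule~\ref{rr:PrefixShrinking} would apply to the pair $(m^\star, w^\star)$, contradicting the assumption that $\I$ is reduced with respect to that rule. This immediately gives that at least one of the two conditions \textbf{(i)} or \textbf{(ii)} holds. So the substantive content is entirely in the ``In particular'' sentence, which must be deduced from the first part together with earlier reduction rules.

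For the ``In particular'' clause, suppose case \textbf{(i)} holds, i.e.\ $f_m(\Mopt(m)) = 1$ for every $m \in M$. Since $\I$ is reduced with respect to Reduction Rule~\ref{rr:basic}, Lemma~\ref{L:perfect-matching} (and Proposition~\ref{lem:rht}) guarantee that $\Mopt$ is a perfect matching, so every man $m$ is matched by $\Mopt$ and contributes exactly $f_m(\Mopt(m)) = 1$ to $\Om = \sum_{m \in M} f_m(\Mopt(m))$. Hence $\Om = |M|$. The case \textbf{(ii)} is symmetric, yielding $\Ow = |W|$. Note that one must invoke the perfect-matching reduction to ensure $\Om$ is genuinely a sum over all of $M$ and not just over the matched men; this is the one place where the earlier rules are needed rather than just Rule~\ref{rr:PrefixShrinking} itself.

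I do not expect any real obstacle here: the argument is essentially a one-line observation plus an appeal to already-established facts (the safeness of the shrinking rule's applicability condition, and the perfect-matching guarantee). If anything, the only subtlety worth stating carefully is that the negation of the rule's precondition ``there exist $m$ and $w$ with $f_m(\Mopt(m)) > 1$ and $f_w(\Wopt(w)) > 1$'' is exactly ``for all $m$, $f_m(\Mopt(m)) = 1$, or for all $w$, $f_w(\Wopt(w)) = 1$'' (using that preference values are positive integers, so ``not $> 1$'' means ``$= 1$''), which is precisely the claimed dichotomy.
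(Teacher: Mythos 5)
Your proposal is correct and matches the paper's (implicit) justification: the paper treats this as an observation whose dichotomy follows immediately from the exhaustive application of Reduction Rule~\ref{rr:PrefixShrinking}, with the ``in particular'' clause following because Reduction Rule~\ref{rr:basic} ensures $\Mopt$ (and hence, by Proposition~\ref{lem:rht}, every stable matching) is perfect, so $\Om=\sum_{m\in M}f_m(\Mopt(m))$ and $\Ow=\sum_{w\in W}f_w(\Wopt(w))$. Your note that negating the rule's precondition over positive integer values yields exactly the claimed dichotomy is the same one-line argument the paper relies on.
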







This concludes the description of our reduction rules. We are now ready to prove Lemma~\ref{lem:kernelFBSM}. 



\begin{proof}[Proof of Lemma~\ref{lem:kernelFBSM}] Given an instance $\I$ of \FBSM, our kernelization algorithm exhaustively applies Reduction Rules \ref{rr:End1} to \ref{rr:PrefixShrinking}, after which it outputs the resulting instance, $\J$, as the kernel. Notice that each rule among Reduction Rules~\ref{rr:End1} to \ref{rr:PrefixShrinking} can be applied in polynomial time, and it either terminates the execution of the algorithm or shrinks the size of the instance. Hence, it is clear that the instance $\J$ is obtained in polynomial time. The claims that $\J$ and $\I$ are equivalent and that $t(\J)\leq t(\I)$ follow directly from the lemmata that prove the safeness of each rule as well as argue with respect to the parameter. By Lemma \ref{Cor:bad-men-women}, we also have that the instance contains at most $2t$ (sad) men and at most $2t$ (sad) women.

It remains to show that the image of the preference function of each person is a subset of $\{1,2,\ldots,t+1\}$. By Reduction Rule \ref{rr2}, every acceptable pair $(m,w)$ satisfies $f_{m}(w)\leq (k-\Om)+f_{m}(\Mopt(m))$ and $f_{w}(m)\leq (k-\Ow)+f_{w}(\Wopt(w))$. Moreover, for every acceptable pair $(m,w)$, it holds that $f_{m}(\Mopt(m))\leq \Om-(|M|-1)$ and $f_{w}(\Wopt(w))\leq \Ow-(|W|-1)$. Thus, every acceptable pair $(m,w)$ satisfies $f_{m}(w)\leq k-(|M|-1)$ and $f_{w}(m)\leq k-(|W|-1)$. By Reduction Rule \ref{rr:basic} and Observation \ref{obs:funcGap}, we have that $t=k-\min\{\Om,\Ow\}=k-|M|=k-|W|$. Hence, we further conclude that every acceptable pair $(m,w)$ satisfies $f_{m}(w)\leq t+1$ and $f_{w}(m)\leq t+1$.
\end{proof}

\subsection{Balanced Stable Marriage}

Having proved Lemma \ref{lem:kernelFBSM}, we have a kernel for \pGBSM. We would like to employ this kernelization algorithm to design one for \pBSM. For this purpose, we need to remove gaps from preference functions. Once we do this, we can view preference functions as preference lists and obtain the desired kernel. In what follows, we describe our kernelization algorithm for \pBSM.

Let ${\cal K}=(M',W',{\cal L}_{M'},{\cal L}_{W'},k')$ be the input instance, which is an instance of \pBSM. Our algorithm begins by applying the reduction given by Observation \ref{lem:basic} to translate ${\cal K}$ into an instance ${\I}'=(M',W',{\cal F}_{M'},{\cal F}_{W'},k')$ of \pGBSM. Then, our algorithm applies the kernelization algorithm given by Lemma \ref{lem:kernelFBSM} to ${\I}'$, obtaining a reduced instance ${\I}=(M,W,{\cal F}_M,{\cal F}_W,k)$ of \pGBSM. By Lemma \ref{lem:kernelFBSM}, this instance has at most $2t$ men, at most $2t$ women, and the image of the preference function of each person is a subset of $\{1,2,\ldots,t+1\}$.  To eliminate ``gaps'' in the preference functions, the algorithm proceeds as described below. Note that we no longer apply any reduction rule from Section \ref{sec:functional} (even if its condition is satisfied), as we currently give a {\em new} kernelization procedure rather than an extension of the previous one. Let us first formally define the notion of a gap.

\begin{definition}\label{D:gaps}
Let $a\in M\cup W$, and $i$ be positive integer outside the image of $f$. If there exists an integer $j>i$ that belongs to the image of $f$, then $f_{a}$ is said to have a {\em gap} at $i$.
\end{definition}


\paragraph{Inserting Dummies.}
We have ensured that the largest number in the image of any preference function is at most $t+1$. As every person is \sad, it has at least two acceptable partners, and hence it has at most $t-1\leq t$ gaps. To handle the gaps of {\em all} people, we create a set of $t$ dummy men and $t$ dummy women. Our objective is to introduce these dummy people as acceptable partners for people who have gaps in their preference functions, such that the function values of the dummy people would fill the gaps. In the context of the following rule, note that currently there are no \happy\ people, and hence this rule would be applied (only~once).

\begin{rr}\label{rr:add-dummies}
If there do not exist \happy\ people, then let $X=\{x_1, x_2, \hdots, x_{t}\}$ denote a set of new (dummy) men, and  $Y=\{y_1, y_2, \hdots, y_{t}\}$ denote a set of new (dummy) women. For each $i \in \{1,2,\ldots,t\}$, initialize $\A(x_{i})=\{y_{i}\}$, $\A(y_{i})=\{x_{i}\}$ and $f_{x_i}(y_i) = f_{y_i}(x_i) = 1$. The new instance is $\J=(M\cup X, W\cup Y, \{f_{m}\}_{m\in M\cup X}, \{f_{w}\}_{w\in W\cup Y}, k+t)$.
\end{rr}

We note that for all $i\in\{1,2,\ldots,t\}$, it holds that $(x_{i}, y_{i})$ is a \happy\ pair.

\begin{lemma}\label{lem:add-dummies}
Reduction Rule \ref{rr:add-dummies} is safe, and $t(\I)=t(\J)$.
\end{lemma}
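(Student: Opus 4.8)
The plan is to verify that Reduction Rule~\ref{rr:add-dummies} preserves both the answer and the parameter. Since the rule only adds the isolated \happy\ pairs $(x_i,y_i)$ with $f_{x_i}(y_i)=f_{y_i}(x_i)=1$, and these pairs are forced (each $x_i$ and $y_i$ has a single acceptable partner), the stable matchings of $\J$ are exactly the stable matchings of $\I$ augmented by the fixed set $D=\{(x_1,y_1),\ldots,(x_t,y_t)\}$. Formally, I would first argue that $\mathrm{SM}(\J)=\{\mu\cup D : \mu\in\mathrm{SM}(\I)\}$: given $\mu\in\mathrm{SM}(\I)$, the matching $\mu\cup D$ is perfect in $\J$, and no new pair can block it because every dummy person is matched to its unique acceptable partner and the preference functions of the original people are unchanged; conversely, given $\mu'\in\mathrm{SM}(\J)$, it must match each $x_i$ to $y_i$ (otherwise $x_i$ is unmatched, contradicting that $(x_i,y_i)$ would block), so $\mu'\setminus D$ is a stable matching of $\I$ by the same argument on blocking pairs.

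Next I would compute the balance. For any $\mu\in\mathrm{SM}(\I)$, writing $\mu'=\mu\cup D$, each side's total dissatisfaction in $\J$ is the corresponding total in $\I$ plus $\sum_{i=1}^t f_{x_i}(y_i)=t$ on the men's side and $\sum_{i=1}^t f_{y_i}(x_i)=t$ on the women's side. Hence $\sum_{(m,w)\in\mu'}f_m(w)=\big(\sum_{(m,w)\in\mu}f_m(w)\big)+t$ and likewise for the women, so $\mbal_{\J}(\mu')=\mbal_{\I}(\mu)+t$. Taking the minimum over all stable matchings gives $\Bal(\J)=\Bal(\I)+t$. Since the target in $\J$ is $k+t$, we get $\Bal(\J)\le k+t \iff \Bal(\I)\le k$, which is exactly safeness.

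Finally, for the parameter I would apply the same computation to $\Mopt$ and $\Wopt$. The man-optimal stable matching of $\J$ is $\Mopt(\I)\cup D$ and the woman-optimal one is $\Wopt(\I)\cup D$ (this follows from the characterization of $\mathrm{SM}(\J)$ together with the optimality conditions, since appending the fixed pairs $D$ with identical contributions cannot change which original matching is man/woman-optimal). Therefore $\Om(\J)=\Om(\I)+t$ and $\Ow(\J)=\Ow(\I)+t$, so $\min\{\Om(\J),\Ow(\J)\}=\min\{\Om(\I),\Ow(\I)\}+t$, and $t(\J)=(k+t)-\min\{\Om(\J),\Ow(\J)\}=k-\min\{\Om(\I),\Ow(\I)\}=t(\I)$.

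The only mildly delicate point is confirming that $\Mopt(\J)$ and $\Wopt(\J)$ really are the augmentations of $\Mopt(\I)$ and $\Wopt(\I)$ rather than some other stable matchings; but since $\mathrm{SM}(\J)$ is in bijection with $\mathrm{SM}(\I)$ via $\mu\mapsto\mu\cup D$ and this bijection preserves each person's partner among the original people while assigning every dummy its unique partner, the man-optimality (resp. woman-optimality) conditions transfer directly across the bijection. Everything else is the routine blocking-pair bookkeeping sketched above, essentially identical in flavour to the proofs of Lemmata~\ref{lem:cleanSuffix} and~\ref{L:perfect-matching}.
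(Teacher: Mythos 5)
Your proposal is correct and follows essentially the same route as the paper: establish the bijection $\mu\mapsto\mu\cup\{(x_1,y_1),\ldots,(x_t,y_t)\}$ between $\mathrm{SM}(\I)$ and $\mathrm{SM}(\J)$, then observe that every relevant quantity ($\Bal$, $\Om$, $\Ow$) shifts by exactly $t$, which together with the new target $k+t$ yields safeness and $t(\I)=t(\J)$. The paper states these shifts more tersely, while you spell out the transfer of man/woman-optimality across the bijection; the substance is identical.
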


\begin{proof}
For all $i\in\{1,2,\ldots,t\}$, it holds that $(x_{i}, y_{i})$ is a \happy\ pair, and therefore it is present in every stable matching in $\J$. By our definition of the new preference functions, it is clear that if $\mu$ is a stable matching in $\I$, then $\mu'=\mu\cup\{(x_1,y_1),\ldots,(x_t,y_t)\}$ is a stable matching in $\J$. Moreover, if $\mu'$ is a stable matching in $\J$, then $\mu=\mu'\setminus\{(x_1,y_1),\ldots,(x_t,y_t)\}$ is a stable matching in $\I$. Hence, since for all $i\in\{1,2,\ldots,t\}$, it holds that $f_{x_i}(y_i) = f_{y_i}(x_i) = 1$, our definition of the new preference functions directly implies that $\Bal(\I)+t=\Bal(\J)$, $\Om(\I)+t=\Om(\J)$ and $\Ow(\I)+t=\Ow(\J)$. Hence, $t(\I)=t(\J)$, which concludes the proof.
\end{proof}

\begin{rr}\label{rr:Fill-Dummy}{\rm [Male version]}\label{rr6}
If there exists $m\in M$ such that $f_{m}$ has a gap at some $j$, then select some $y_{i}\in Y \sm \A(m)$, and set $\A'(m)=\A(m)\cup \{y_{i}\}$ and $\A'(y_{i})=\A(y_{i}) \cup \{m\}$. The preference functions are defined as follows.
\begin{itemize}
\item {\bf The preference function of $m$.} $f'_{m}(y_{i})=j$, and for all $a\in\A(m)$, $f'_{m}(a)=f_m(a)$.
\item {\bf The preference function of $y_i$.} $\displaystyle{f'_{y_{i}}(m)=\max_{m' \in \C{A}(y_i)}(f_{y_{i}}(m')+1})$, and for all $a\in\A(y_i)$, $f'_{y_i}(a)=f_{y_i}(a)$.
\item For all $a\in (M\cup W)\setminus\{m,y_i\}$: $f'_a=f_a$.
\end{itemize}
The new instance is $\J=(M, W, \{f'_{m'}\}_{m'\in M}, \{f'_{w'}\}_{w'\in W}, k)$.
\end{rr}

\begin{lemma}\label{lem:dummyfilling}
Reduction Rule \ref{rr:Fill-Dummy} is safe, and $t(\I)=t(\J)$.
\end{lemma}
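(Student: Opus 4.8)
The plan is to show that passing from $\I$ to $\J$ changes neither the set of stable matchings nor the preference values on the edges that can occur in a stable matching, so that $\mathrm{SM}(\I)=\mathrm{SM}(\J)$ and $\mbal$ agrees on this common set; since the integer $k$ is untouched, safeness and $t(\I)=t(\J)$ then follow. I would first record two structural facts. The only new acceptable pair of $\J$ is $(m,y_i)$, and $f'$ agrees with $f$ on every other (person, partner) combination --- in particular $f'_m(y_i)=j$ is placed at a value outside the image of $f_m$, and $f'_{y_i}(m)$ is placed strictly above all of $y_i$'s previous values, so both new preference functions remain injective and every person keeps its preference order over its old partners. Moreover, since $f_{x_i}(y_i)=f_{y_i}(x_i)=1$ were set by Reduction Rule~\ref{rr:add-dummies} and never modified, $x_i$ ranks $y_i$ first and $y_i$ ranks $x_i$ first; hence $(x_i,y_i)\in\mu$ for \emph{every} stable matching $\mu$ of $\I$ and of $\J$ (otherwise both $x_i$ and $y_i$ would prefer each other to their current status and $(x_i,y_i)$ would block). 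In particular, no stable matching of $\J$ contains $(m,y_i)$.

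Next I would prove $\mathrm{SM}(\I)=\mathrm{SM}(\J)$. For the inclusion $\mathrm{SM}(\I)\subseteq\mathrm{SM}(\J)$, let $\mu\in\mathrm{SM}(\I)$; by Reduction Rules~\ref{rr:basic} and~\ref{rr:add-dummies} every stable matching is perfect, and as $\I$ and $\J$ have the same people, $\mu$ is perfect in $\J$. An acceptable pair of $\J$ other than $(m,y_i)$ is acceptable in $\I$ with the same preference comparisons (here one uses $\mu(y_i)=x_i$ and $\mu(m)\ne y_i$), so it blocks $\mu$ in $\J$ iff it blocks $\mu$ in $\I$, hence it does not; and $(m,y_i)$ does not block $\mu$ since $\mu(y_i)=x_i$ and $f'_{y_i}(m)>f'_{y_i}(x_i)$. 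Thus $\mu\in\mathrm{SM}(\J)$. For the reverse inclusion, let $\mu\in\mathrm{SM}(\J)$; since $(m,y_i)\notin\mu$, the matching $\mu$ uses only pairs present in $\I$ and is a perfect matching there; a pair blocking $\mu$ in $\I$ would be acceptable in $\J$ and distinct from $(m,y_i)$ (which is not acceptable in $\I$), with the same comparisons, hence would block $\mu$ in $\J$ --- a contradiction. So $\mu\in\mathrm{SM}(\I)$.

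To conclude, every $\mu\in\mathrm{SM}(\I)=\mathrm{SM}(\J)$ uses only pairs on which $f$ and $f'$ coincide, whence $\sum_{m'}f_{m'}(\mu(m'))=\sum_{m'}f'_{m'}(\mu(m'))$ and similarly on the women's side, so $\mbal_{\I}(\mu)=\mbal_{\J}(\mu)$; therefore $\Bal(\I)=\Bal(\J)$, and since the sets of stable matchings and the men's (resp. women's) preference orders coincide, the same matching is man-optimal (resp. woman-optimal) in both instances, giving $\Om(\I)=\Om(\J)$ and $\Ow(\I)=\Ow(\J)$. As $k$ is unchanged, $\Bal(\I)\le k$ iff $\Bal(\J)\le k$, which is safeness, and $t(\I)=k-\min\{\Om(\I),\Ow(\I)\}=k-\min\{\Om(\J),\Ow(\J)\}=t(\J)$. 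The female version of the rule is symmetric.

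The step I expect to require the most care is verifying that $(x_i,y_i)$ stays a happy pair --- equivalently, that no stable matching of $\J$ uses the freshly inserted edge $(m,y_i)$; this is exactly where the ``reserved'' dummy partner $x_i$ of $y_i$, ranked first, is needed, and it is what keeps the new edge inert: never in a stable matching, never affecting a balance. The remaining details are routine bookkeeping confirming that preference values, preference orders, and injectivity are preserved on everything that matters.
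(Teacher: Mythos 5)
Your proof is correct and takes essentially the same route as the paper: both rest on the observation that $f_{x_i}(y_i)=f_{y_i}(x_i)=1$ forces $(x_i,y_i)$ into every stable matching of both instances, so the inserted pair $(m,y_i)$ neither occurs in nor blocks any stable matching, whence $\mathrm{SM}(\I)=\mathrm{SM}(\J)$ with identical balances, $\Om$, and $\Ow$, and $k$ unchanged. You simply spell out the verification that the paper compresses into ``it is clear.''
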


\begin{proof}The only modifications that are performed are the insertion of $m$ intro the set of acceptable partners of $y_i$ as the least preferred person, and the insertion of $y_i$ into the set of acceptable partners of $m$ in a location that previosuly contained a gap. Let us first observe that since $f_{x_i}(y_i) = f_{y_i}(x_i) = 1$ and $f'_{x_i}(y_i) = f'_{y_i}(x_i) = 1$, it holds that $(x_i,y_i)$ is a \happy\ pair in both $\I$ and $\J$. Hence, it is clear that SM$(\I)=\mathrm{SM}(\J)$, $\Om(\I)=\Om(\J)$, $\Ow(\I)=\Ow(\J)$, and that the balance of any stable matching in $\I$ is equal to its balance in $\J$. We thus conclude that the rule is safe and that $t(\I)=t(\J)$.
\end{proof}

Analogously, we have a female version of Reduction Rule \ref{rr:Fill-Dummy} where we fill a gap in the preference function of some woman $w\in W$. We do not repeat our arguments again, and straightaway state the following result, which follows directly from the safeness of Reduction Rule \ref{rr:add-dummies} and the male and female versions of Reduction Rule \ref{rr6}. 

\begin{lemma}\label{Cor:Kernel-pGBSM-without-gaps}
\pGBSM\ admits a kernel that has at most $3t$ men among whom at most $2t$ are \sad, at most $3t$ women among whom at most $2t$ are \sad, and such that each \happy\ person has at most $2t+1$ acceptable partners and each \sad\ person has at most $t+1$ acceptable partners.
\end{lemma}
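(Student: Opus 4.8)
The plan is to combine the functional kernel from Lemma~\ref{lem:kernelFBSM} with the dummy-insertion rules (Reduction Rule~\ref{rr:add-dummies} and the male and female versions of Reduction Rule~\ref{rr6}) and then argue that the resulting instance, once every gap has been filled, can be read as an instance of \pBSM\ with the claimed bounds. First I would start from an instance ${\cal K}=(M',W',{\cal L}_{M'},{\cal L}_{W'},k')$ of \pBSM, translate it via Observation~\ref{lem:basic} into an equivalent instance ${\I}'$ of \pGBSM, and apply the kernelization algorithm of Lemma~\ref{lem:kernelFBSM}. This yields an instance ${\I}=(M,W,{\cal F}_M,{\cal F}_W,k)$ with at most $2t$ (sad) men, at most $2t$ (sad) women, every person sad, and the image of every preference function contained in $\{1,2,\ldots,t+1\}$. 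Since each person is sad it has at least two acceptable partners, so it can have at most $t-1\le t$ gaps.

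Next I would apply Reduction Rule~\ref{rr:add-dummies} once (legitimate, since after Lemma~\ref{lem:kernelFBSM} there are no happy people), introducing $t$ dummy men $X$ and $t$ dummy women $Y$, each dummy pair $(x_i,y_i)$ being a happy pair assigned value $1$ on both sides; by Lemma~\ref{lem:add-dummies} this is safe and preserves $t$. Then I would apply the male and female versions of Reduction Rule~\ref{rr6} exhaustively to fill every gap: each application consumes one dummy of the appropriate gender to plug one gap in one real person's preference function, is safe, and preserves $t$ by Lemma~\ref{lem:dummyfilling}. I would note that there are enough dummies: each of the at most $2t$ sad men has at most $t$ gaps, but a single dummy woman $y_i$ can be used to fill the $j$-th position in several different men's lists simultaneously (the rule only requires $y_i\in Y\setminus{\cal A}(m)$, and $y_i$ is added as the least-preferred partner of $m$), so $t$ dummy women suffice to fill all male-side gaps, and symmetrically $t$ dummy men suffice for female-side gaps. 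After exhaustive application, no preference function has a gap, so each image is exactly an initial segment $\{1,2,\ldots,\ell\}$, and the functions can be reinterpreted as strict preference lists, producing an equivalent instance of \pBSM\ by the equivalence noted in ``Known Results''.

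Finally I would tally the bounds. The instance has $|M|+|X|\le 2t+t=3t$ men, of whom at most $2t$ are sad, and symmetrically at most $3t$ women with at most $2t$ sad. A sad person originally had at most $t+1$ acceptable partners (image inside $\{1,\ldots,t+1\}$, with no gaps after filling), and this count is unchanged by the dummy-filling rules from that person's perspective except that gaps get filled — but filling a gap at position $j\le t+1$ does not increase the number of acceptable partners beyond $t+1$, since after filling, the image is a subset of $\{1,\ldots,t+1\}$ with no gaps, hence has size at most $t+1$. A happy person is one of the dummies $x_i$ or $y_i$; it starts with a single acceptable partner and then acquires one extra acceptable partner for each real person whose gap it fills, and since a real person has at most $t$ gaps... here I would be careful: a single dummy may be invoked by many real people, so I instead bound it by observing that dummy $y_i$ becomes acceptable to a man $m$ only when filling some gap $j$ of $m$, and each such man contributes exactly one acceptable pair to $y_i$; since there are at most $2t$ sad men, this naively gives up to $2t$, so to get the stated $2t+1$ I would refine the choice in Reduction Rule~\ref{rr6} to spread gap-filling evenly (e.g., use $y_i$ with $i\equiv$ the gap index, or round-robin), guaranteeing each dummy is used at most $2t$ times plus its own original partner, i.e.\ at most $2t+1$ acceptable partners. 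This even-spreading argument is the main obstacle: the version of Reduction Rule~\ref{rr6} as literally stated only says ``select some $y_i\in Y\setminus{\cal A}(m)$,'' so to nail the $2t+1$ bound I must argue (or strengthen the selection to) a balanced assignment of the at most $2t\cdot t$ total gaps across $t\times 2 = 2t$ dummies, wait — across $t$ dummy women for the $\le 2t$ men each with $\le t$ gaps... the counting that actually matters is: each dummy woman $y_i$ acquires one acceptable man per man that uses it; with $t$ dummy women available and gaps spread so no man uses the same dummy twice and dummies are used round-robin, each dummy woman ends up acceptable to at most $2t$ men, hence has at most $2t+1$ acceptable partners. Assembling these counts gives exactly Lemma~\ref{Cor:Kernel-pGBSM-without-gaps}, and polynomial running time follows since each rule runs in polynomial time and strictly makes progress (fills a gap or is applied once).
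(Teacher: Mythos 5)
Your proposal is correct and follows essentially the same route as the paper: apply the kernel of Lemma~\ref{lem:kernelFBSM}, insert the $t$ dummy pairs via Reduction Rule~\ref{rr:add-dummies}, fill all gaps with the male and female versions of Reduction Rule~\ref{rr6}, and count. The only remark is that the ``even-spreading''/round-robin refinement you treat as the main obstacle is unnecessary: since Reduction Rule~\ref{rr6} only ever selects $y_i\in Y\setminus{\cal A}(m)$, a fixed real man can appear at most once in a dummy woman's acceptability set, so her acceptable partners are contained in $\{x_i\}$ together with the at most $2t$ real men, giving the $2t+1$ bound for every happy person no matter how the selections are made (and symmetrically for dummy men).
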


Finally, we translate the kernel for \pGBSM\ to an instance of \pBSM\ as follows. For all $a\in M\cup W$ and $b\in{\cal A}(a)$, we set $p_a(b)=f_a(b)$. The new instance is $\J=(M, W, \{p_{m}\}_{m\in M}, \{f_{w}\}_{w\in W}, k)$. Clearly, we thus obtain an equivalent instance, which leads us to the following somewhat stronger version of Theorem \ref{thm:kernelBSM}, relevant to Appendix~\ref{sec:alg}.

\begin{lemma}\label{lem:kernelBSM}
\minBSM\ admits a kernel that has at most $3t$ men among whom at $2t$ are \sad, at most $3t$ women among whom at most $2t$ are \sad, and such that each \happy\ person has at most $2t+1$ acceptable partners and each \sad\ person has at most $t+1$ acceptable partners. Moreover, every stable matching in the kernel is a perfect matching.
\end{lemma}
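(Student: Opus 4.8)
The plan is to chain the \pGBSM\ kernel with the elementary back-translation from preference functions to preference lists. Starting from an arbitrary instance $\mathcal{K}=(M',W',\mathcal{L}_{M'},\mathcal{L}_{W'},k')$ of \minBSM, I would first apply Observation~\ref{lem:basic} to obtain an equivalent instance of \pGBSM\ with the same value of $t$, and then run the kernelization algorithm of Lemma~\ref{Cor:Kernel-pGBSM-without-gaps}. In polynomial time this yields an equivalent instance $\mathcal{I}=(M,W,\mathcal{F}_M,\mathcal{F}_W,k)$ of \pGBSM\ with $t(\mathcal{I})\le t(\mathcal{K})$ and with exactly the cardinality guarantees listed in Lemma~\ref{Cor:Kernel-pGBSM-without-gaps}: at most $3t$ men, at most $2t$ of whom are \sad; at most $3t$ women, at most $2t$ of whom are \sad; each \happy\ person with at most $2t+1$ acceptable partners; each \sad\ person with at most $t+1$. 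I would also record that, because Reduction Rule~\ref{rr:basic} is invoked inside that kernelization and the dummy pairs inserted by Reduction Rule~\ref{rr:add-dummies} are \happy\ (hence contained in every stable matching, while any stable matching of the kernel, after deleting the dummy pairs, restricts to a stable matching of the instance on which Reduction Rule~\ref{rr:basic} was exhaustively applied, which is perfect), every stable matching of $\mathcal{I}$ is perfect.

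The step that makes the back-translation legitimate is the absence of gaps in $\mathcal{I}$. The male and female versions of Reduction Rule~\ref{rr6} are applied exhaustively, and by Definition~\ref{D:gaps} a ``gap at $i$'' already subsumes every positive integer $i$ lying strictly below the minimum of the image of a preference function (take $j$ to be that minimum). Consequently, once no version of Reduction Rule~\ref{rr6} applies, the image of $f_a$ is exactly $\{1,2,\ldots,|\mathcal{A}(a)|\}$ for every person $a$. Hence setting $p_a(b)=f_a(b)$ for all $a\in M\cup W$ and all $b\in\mathcal{A}(a)$ defines a legitimate preference list for every $a$, producing an instance $\mathcal{J}=(M,W,\{p_m\}_{m\in M},\{p_w\}_{w\in W},k)$ of \minBSM.

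Finally, I would observe that $\mathcal{J}$ inherits everything from $\mathcal{I}$: since positions in $\mathcal{J}$ are numerically equal to preference-function values in $\mathcal{I}$, the two instances induce the same acceptable pairs, the same blocking pairs, the same set of stable matchings, and therefore the same quantities $\Bal$, $\Om$, $\Ow$, and $t$. In particular $\mathcal{J}$ is equivalent to $\mathcal{I}$, hence to $\mathcal{K}$; $t(\mathcal{J})=t(\mathcal{I})\le t(\mathcal{K})$; and $\mathcal{J}$ is obtained in polynomial time as a composition of polynomial-time procedures, so it is a kernel. Its cardinality bounds are those of $\mathcal{I}$, and every stable matching of $\mathcal{J}$, being a stable matching of $\mathcal{I}$, is perfect. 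The only point that genuinely needs care is the one highlighted in the second paragraph, namely verifying that exhaustive gap-filling leaves every preference function with image an initial segment of $\Nb$; I expect this to be the main (and essentially only) obstacle, since the size bookkeeping has already been carried out in Lemma~\ref{Cor:Kernel-pGBSM-without-gaps} and all remaining claims are immediate consequences of the identity $p=f$.
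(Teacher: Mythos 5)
Your proposal is correct and follows essentially the same route as the paper: translate via Observation~\ref{lem:basic}, invoke the gap-free \pGBSM\ kernel of Lemma~\ref{Cor:Kernel-pGBSM-without-gaps}, and then read the preference functions as preference lists through the identity $p_a(b)=f_a(b)$. You simply spell out the two points the paper treats as immediate --- that exhaustive gap-filling makes every image an initial segment $\{1,\ldots,|\C{A}(a)|\}$ (so the lists are legitimate), and that perfectness of stable matchings survives the dummy insertions --- and both of these arguments are sound.
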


This concludes the proof of Theorem \ref{thm:kernelBSM}.

\section{Parameterized Algorithm}\label{sec:alg}

In this section, we design a parameterized algorithm for \minBSM. More precisely, we prove the following theorem. 

\setcounter{thm}{1}
\begin{theorem}\label{thm:alg}
\minBSM\ can be solved in time $\OO^*(8^t)$.
\end{theorem}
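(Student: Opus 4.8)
The plan is to build directly on the kernel from Lemma~\ref{lem:kernelBSM} and then branch. After kernelization we have an equivalent instance with $n\le 3t$ men and $n\le 3t$ women, where every stable matching is a perfect matching, and crucially (by Observation~\ref{obs:funcGap}) one side --- say the men --- satisfies $f_m(\Mopt(m))=1$ for all $m$, so that $\Om=|M|=n$ and $t=k-n$. Thus a stable matching $\mu$ has $\bal{\mu}\le k$ iff $\sum_m f_m(\mu(m))\le k$ and $\sum_w f_w(\mu(w))\le k$; since $\sum_m f_m(\mu(m))\ge \Om = n = k-t$, the ``excess'' of the men's side over its optimum is at most $t$. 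The idea is to exploit this: for each man $m$, $\mu(m)$ is either $\Mopt(m)$ (cost $1$, contributing $0$ to the excess) or some woman strictly worse for $m$ (contributing at least $1$ to the men's excess). Hence at most $t$ men can be matched non-optimally, which is what we will search for.

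First I would set up a bounded search tree that guesses, one acceptable pair at a time, whether it is ``forced in'' or ``forbidden''. A cleaner route: process the men in some order and, for each, decide which of its acceptable partners it gets. Since each \sad\ man has at most $t+1$ acceptable partners (Lemma~\ref{lem:kernelBSM}) this is too large a branching factor on its own, so instead I would branch on a single bit --- whether a given man is matched to $\Mopt(m)$ or not. If we only ever ``spend'' a branch going to the non-optimal side when we can charge it to the men's excess budget of $t$, the search tree has depth $O(t)$ along root-to-leaf paths on that side. More precisely, maintain a partial assignment; repeatedly pick a man $m$ not yet decided; branch: (a) $\mu(m)=\Mopt(m)$, which consumes no budget; (b) $\mu(m)\neq\Mopt(m)$, decreasing a budget counter initialized at $t$, and then further branch over the (at most $t+1$, but we can do better) choices for $\mu(m)$ among women worse than $\Mopt(m)$ for $m$. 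To keep the base of the exponential at $8$, I expect one needs to be careful: the factor $8=2^3$ suggests splitting the $t$ ``budget'' into contributions and absorbing the enumeration of partners into a $2^{O(1)}$ per-unit-of-budget cost. A natural accounting: each non-optimally-matched man $m$ contributes $f_m(\mu(m))-1\ge 1$ to the budget; charging $\log(\text{number of choices at that step})$ against its contribution, and similarly handling the women's side symmetrically (the women's excess over $\Ow$ is also bounded once we impose $\sum_w f_w(\mu(w))\le k$), should yield an $8^t$ bound. Once a candidate perfect matching is produced at a leaf, check in polynomial time that it is stable and that its balance is $\le k$; accept if some leaf succeeds.

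An alternative, and perhaps the intended route, is to phrase the problem as searching over the \emph{lattice of stable matchings}. It is classical that $\mathrm{SM}$ forms a distributive lattice, and that one can move between stable matchings by ``rotations''; bounding the number of relevant rotations or the depth of rotation sequences needed by $O(t)$ and branching $2^{O(1)}$ per rotation would again give $\OO^*(c^t)$. I would first try the direct partner-assignment search since it interfaces most transparently with the kernel's structural guarantees (perfect matching, one side at its optimum, small acceptable-partner lists), and fall back to rotations only if the branching-factor bookkeeping refuses to close at base $8$.

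The main obstacle I anticipate is precisely this bookkeeping: getting the branching factor down to base $8$ rather than some larger constant like $(t+1)^{O(1)}$-per-step or $2^{O(t\log t)}$. The structural facts give depth $O(t)$ essentially for free, but naive branching over ``which worse partner does $m$ get'' costs $\log(t+1)$ per non-optimal man, which would blow the bound up to quasi-polynomial in $t$ in the exponent. The fix must be an amortized argument where a man matched to his $j$-th choice ($j\ge 2$) pays $j-1$ units of budget and we only branch $O(2^{j})$ or so ways attributable to him, keeping the per-unit base at $2$ and combining the two sides (men's excess and women's excess, each $\le t$, but in fact coupled) into the $2^{3t}=8^t$ total. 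I would spend most of the effort making this charging scheme precise and verifying that the two sides' budgets do not have to be summed (which would give $16^t$) but can be handled so that only $3t$ ``bits'' of choice are ever made.
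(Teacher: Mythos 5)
Your overall plan --- kernelize via Lemma~\ref{lem:kernelBSM}, then search over which men deviate from their $\Mopt$-partners and to whom, using the men's excess $k-\Om\le t$ as a budget (note only this inequality is needed, not $\Om=|M|$ exactly), and verify stability and balance at each leaf --- is exactly the skeleton of the paper's algorithm. However, the step you yourself identify as the crux, namely the bookkeeping that brings the base down to a constant, is left unresolved, and the accounting you sketch for it is not the right one. There is no need to ``handle the women's side symmetrically,'' and no risk of two budgets summing to $16^t$: once the partners of the men are fixed (happy pairs and the men kept at their $\Mopt$-partners included), the entire candidate matching is determined, so the women's total satisfaction is simply checked at the leaf in polynomial time; only the men's side is ever branched on. Likewise, charging $\log(\text{number of choices})$ per deviating man, or ``$O(2^j)$ ways attributable to him,'' is not how the constant base arises; no logarithmic or per-man exponential factors appear in the correct analysis.

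The paper closes the argument with two concrete devices that are missing from your write-up. First, it guesses the set $M'\subseteq M_S$ of \sad\ men to be matched away from their $\Mopt$-partners; since the kernel guarantees at most $2t$ \sad\ men, this outer enumeration costs $2^{2t}=4^t$ (happy people are forced and never enter the branching). Second, for a fixed guess it processes the men of $M'$ one by one, branching for the current man only over the $r$ women he prefers most among those strictly worse than his $\Mopt$-partner, where $r=k-\Om\le t$; any woman beyond the $r$-th would overshoot the budget by herself, and completeness rests on Proposition~\ref{lem:manwomanOpt} (in any stable matching no man does better than in $\Mopt$). Choosing the $j$-th such woman decreases the remaining budget by at least $j$, so each node has branching vector $(1,2,\ldots,r)$, whose root is below $2$ (since $\sum_{j=1}^{r}2^{-j}<1$), giving at most $2^{r}\le 2^{t}$ leaves per guess and $\OO^*(4^t\cdot 2^t)=\OO^*(8^t)$ overall. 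Your single-tree variant, in which ``$\mu(m)=\Mopt(m)$'' is a cost-free branch, can be made to work, but then the tree depth is governed by the number of \sad\ men rather than by the budget, and one must count leaves as (choice of deviating set) times (compositions of the budget into the costs $j\ge 1$); that count again hinges on the kernel's $2t$ bound on \sad\ men and on the cost-$j$-for-the-$j$-th-choice observation, neither of which your amortization pins down, so as written the $8^t$ bound is not established.
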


As our algorithm is based on the method of bounded search trees, we first give a brief description of this technique.

\subsection{Bounded Search Tree}\label{sec:bounded}

The running time of an algorithm that uses bounded search trees can be analyzed as follows (see, e.g.,~\cite{ParamAlgorithms15b}). Suppose that the algorithm executes a branching rule which has $\ell$ branching options (each leading to a recursive call with the corresponding parameter value), such that, in the $i^\mathrm{th}$ branch option, the current value of the parameter decreases by $b_i$. Then, $(b_1,b_2,\ldots,b_{\ell})$ is called the {\em branching vector} of this rule. We say that $\alpha$ is the {\em root} of $(b_1,b_2,\ldots,b_{\ell})$ if it is the (unique) positive real root of $x^{b^*} = x^{b^*-b_1} + x^{b^*-b_2} + \cdots + x^{b^*-b_{\ell}}$, where $b^* = \max\{b_1,b_2,\ldots,b_{\ell}\}$. If $r>0$ is the initial value of the parameter, and the algorithm (a) returns a result when (or before) the parameter is negative, and (b) only executes branching rules whose roots are bounded by a constant $c >0$, then its running time is bounded by $\OO^*(c^r)$.

\subsection{Description of the Algorithm}\label{sec:describe}

Given an instance $\widehat{\I}=(\widehat{M},\widehat{W},\widehat{\cal L}_M,\widehat{\cal L}_W,\widehat{k})$ of \minBSM, we begin by using the procedure given by Lemma~\ref{lem:kernelBSM} to obtain (in polynomial time) a kernel $\I=(M,W,{\cal L}_M,{\cal L}_W,k)$ of \minBSM\ such that $\I$ has at most $3t$ men among whom at most $2t$ are \sad, at most $3t$ women among whom at most $2t$ are \sad.
Let us denote the \happy\ pairs in $\I$ by $(x_1,y_1),(x_2,y_2),\ldots,(x_h,y_h)$ for the appropriate $h\leq t$, and the set of \sad\ men by $M_S$. Note that $|M_S|\leq 2t$.

We proceed by executing a loop where each iteration corresponds to a different subset $M'\subseteq M_S$. For a specific iteration, our goal is to determine whether there exists a stable matching $\mu$ such that the following conditions are satisfied:
\begin{itemize}
\item $\bal{\mu}\leq k$.
\item For all $m\in M'$: $\mu(m)\neq \Mopt(m)$.
\item For all $m\in M_S\setminus M'$: $\mu(m)=\Mopt(m)$.
\end{itemize}

A stable matching satisfying the conditions above (in the context of the current iteration) is said to be {\em valid}. We denote $r=k-\Om$, and observe that $r\leq t$. 


 Let us now consider some specific iteration. To determine whether there exists a valid stable matching, our plan is to execute a branching procedure, called {\ttfamily Branch}, which outputs every set $S$ of pairs of a man in $M'$ and a woman, such that the following conditions are satisfied.
\begin{enumerate}
\item\label{item1} Every man $m$ in $M'$ participates in exactly one pair $(m,w)$ of $S$, and for that unique pair, it holds that $w\in{\cal A}(m)$ and $p_{m}(w)>p_m(\Mopt(m))$.
\item\label{item2} $\displaystyle{\sum_{m\in M'}(p_m(\mu(m))-p_m(\Mopt(m)))\leq r}$.
\end{enumerate}

The description of this procedure is given in the following subsection. Here, let us argue that by having this procedure, we can conclude the proof of the correctness of the algorithm. In the current iteration, we examine each set $S$ in the outputted family of sets. Then, we check whether the pairs in $S$, together with $(x_1,y_1),(x_2,y_2),\ldots,(x_h,y_h)$ and every pair in $\{(m,\Mopt(m)): m\in  M'\}$ form a stable matching whose balance is at most $k$. If the answer is positive, then we terminate the execution and accept, which is clearly correct.

At the end, if we did not accept in any iteration, we reject. To see why this decision is correct, suppose that there exists a stable matching $\mu$ whose balance is at most $k$. In this case, due to our exhaustive search, there exists some iteration in which $\mu$ is also valid. In that iteration, associated with some $M'\subseteq M-S$, observe that the set of pairs $\{(m,\mu(m)): m\in  M'\}$ is one of the outputted sets of pairs. Indeed, the satisfaction of Condition \ref{item1} follows from the fact that $\mu$ is a stable matching satisfying the last two conditions of validity, and Condition \ref{item2} follows from the fact that $\mu$ satisfies the first condition of validity.

Let us denote by $T$ the running of the procedure {\ttfamily Branch}. Then, the total running time of our algorithm is bounded by $\OO^*(2^{|M_S|}\cdot T)=\OO^*(4^t\cdot T)$, and therefore to derive the running time in Theorem \ref{thm:alg}, we will need to ensure that $T=\OO^*(2^t)$. For this purpose, it is sufficient to ensure that $T=\OO^*(2^r)$.

\subsection{The Branching Procedure}

We now present the description of the procedure {\ttfamily Branch} in the context of some set $M'\subseteq M_S$. Let us denote $M'=\{m_1,m_2,\ldots,m_p\}$ for the appropriate $p$.

 Each call to our procedure is of the form {\ttfamily Branch}$(i,S)$ where $i\in\{0,1,\ldots,p+1\}$ and $S$ is a set of pairs of a man in $\{m_1,m_2,\ldots,m_i\}$ and a woman, such that the following conditions are satisfied.
\begin{enumerate}
\item\label{item11} Every man $m$ in $\{m_1,m_2,\ldots,m_i\}$ participates in exactly one pair $(m,w)$ of $S$, and for that unique pair, it holds that $w\in{\cal A}(m)$ and $p_{m}(w)>p_m(\Mopt(m))$. We define $\mu_S$ as the {\em function} whose domain is $\{m_1,m_2,\ldots,m_i\}$ and which assigns to each man $m$ in its domain the unique woman $w$ such that $(m,w)\in S$. 
\item\label{item22} Define $\bal{S}=\displaystyle{\sum_{m\in \{m_1,m_2,\ldots,m_i\}}(p_m(\mu_S(m))-p_m(\Mopt(m)))}$. Then, $\bal{S}\leq r$.
\end{enumerate}

Note that in case $i=0$, we have that $S=\emptyset$. The call to {\ttfamily Branch} that is performed by the algorithm given in Section \ref{sec:describe} is precisely with these arguments, that is, {\ttfamily Branch}$(0,\emptyset)$. The objective of a call {\ttfamily Branch}$(i,S)$ is to return a family ${\cal F}$ of sets, where each set $F\in{\cal F}$ is a set of pairs of a man in $\{m_{i+1},m_{i+2},\ldots,m_p\}$ and a woman, such that the following conditions are satisfied.
\begin{enumerate}
\item\label{item1a} Every man $m$ in $\{m_{i+1},m_{i+2},\ldots,m_p\}$ participates in exactly one pair $(m,w)$ of $F$, and for that unique pair, it holds that $w\in{\cal A}(m)$ and $p_{m}(w)>p_m(\Mopt(m))$. We define $\mu_F$ as the {\em function} whose domain is $\{m_{i+1},m_{i+2},\ldots,m_p\}$ and which assigns to each man $m$ in its domain the unique woman $w$ such that $(m,w)\in F$. 
\item\label{item2a} $\bal{S}+\displaystyle{\sum_{m\in \{m_{i+1},m_{i+2},\ldots,m_p\}}(p_m(\mu_F(m))-p_m(\Mopt(m)))\leq r}$.
\end{enumerate}

Clearly, by accomplishing this goal, we also achieve the objective posed in Section \ref{sec:describe}. The measure that we employ to analyze our procedure is $(r-\bal{S})$ (recall that $\bal{S}$ is defined in Condition \ref{item22} of the specification of a call), which is initially equal to $r$. Hence, according to the method of bounded search trees (see Section \ref{sec:bounded}), to derive the running time $\OO^*(2^r)$, it is sufficient to ensure that {\ttfamily Branch} {\bf (a)} returns a result when (or before) the measure $r$ is negative, and {\bf (b)} only executes branching rules whose roots are bounded by 2. When the measure $r$ is negative, we should simply return ${\cal F}=\emptyset$, as there does not exist a set $F$ satisfying the conditions above. Otherwise, when $i=p+1$, we can simply return ${\cal F}=\{\emptyset\}$.

Let us now consider a call {\ttfamily Branch}$(i,S)$ where $r\geq 0$ and $i\leq p$. Denote $\widetilde{W}=\{w\in{\cal A}(m_{i+1}): p_{m_{i+1}}(w)>p_{m_{i+1}}(\Mopt({m_{i+1}}))\}$. We further refine $\widetilde{W}$ be letting $W^*$ denote the set of those $r$ women in $\widetilde{W}$ who are the most preferred by $m_{i+1}$. In case there are no such $r$ women since $|\widetilde{W}|<r$, we simply denote $W^*=\widetilde{W}$. Let us also denote $W^*=\{w_1,w_2,\ldots,w_q\}$ for the appropriate $q\leq r$. Then, our procedure executes $r$ branches. At the $j^{th}$ branch, {\ttfamily Branch} calls itself recursively with $(i+1,S\cup\{(m_{i+1},w_j)\})$. Eventually, {\ttfamily Branch} returns $\bigcup_{j=1}^q\{\{(m_{i+1},w_j)\}\cup F_j: F\in{\cal F}_j\}$ where for all $j\in\{1,2\ldots,q\}$, we set ${\cal F}_j$ to be the family of sets of pairs that was returned by the recursive call of the $j^{th}$ branch.
The correctness follows from the observation that the process that we execute is an exhaustive search. More precisely, if there exists a set $F$ satisfying Conditions \ref{item1a} and \ref{item2a}, then it must include exactly one of the pairs in $\{(m_{i+1},w_j): j\in\{1,2,\ldots,q\}\}$. Now, let us observe that at the $j^{th}$ branch, the measure changes from $r-\bal{S}$ to $r-\bal{S\cup\{({m_{i+1}},w_j)\}}$. By our definition of $w_j$, we have that $p_{m_{i+1}}(w_j)-p_{m_{i+1}}(\Mopt({m_{i+1}})))=j$. Hence, at the worst case, the branching vector is $(1,2,\ldots,r)$.
Since the root of such a branching vector upper bounded by $2$, our proof is complete.

\section{Hardness}\label{sec:hardness}

In this section, we prove the following theorem. 

\begin{theorem}\label{thm:hardness}
\maxBSM\ is \WOH.
\end{theorem}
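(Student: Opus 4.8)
The plan is to give a parameterized reduction from a known $W[1]$-hard problem—the natural candidate being \textsc{Multicolored Clique} (equivalently \textsc{Multicolored Independent Set}) parameterized by the number of color classes—to \maxBSM. Recall that in \maxBSM\ the parameter is $t = k - \max\{\Om,\Ow\}$, so the reduction must produce an instance in which one side (say the men) can achieve total dissatisfaction essentially equal to its optimum $\Om$ ``for free'', while the whole difficulty is pushed onto making the \emph{other} side (the women) pay only a small, parameter-bounded excess over $\Ow$. In other words, we want a gadget construction where $\max\{\Om,\Ow\}$ is large (comparable to the instance size) but a stable matching of balance $\le k$ with $k-\max\{\Om,\Ow\}$ small exists if and only if the \textsc{Multicolored Clique} instance is a yes-instance. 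This asymmetry is exactly what separates \maxBSM\ from \minBSM: if both guarantees had to be nearly tight we would be in the \minBSM\ regime, which Theorem~\ref{thm:kernelBSM} shows is \FPT.

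First I would set up selection gadgets, one per color class $i \in \{1,\dots,\kappa\}$, each of which in any stable matching forces a ``rotation-type'' choice encoding which vertex of color $i$ is selected; the standard tool here is to exploit the lattice structure of stable matchings and the notion of rotations, so that traversing a rotation in gadget $i$ corresponds to picking a vertex. The men in these gadgets should be arranged so that their contribution to $\sum f_m(\mu(m))$ is (nearly) constant regardless of the choice—this keeps $\Om$ both large and insensitive to selections—while the women's contribution varies: picking vertex $v$ costs the women an amount that encodes $v$ (say via a weight-balanced numbering or a padding trick with dummy people so that the per-gadget cost is uniform across choices but the \emph{bookkeeping} of which vertex is chosen is recoverable). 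Then I would add verification gadgets, one per pair of color classes $\{i,j\}$ (or per required edge), that are satisfiable within the women's budget only when the vertices selected in gadgets $i$ and $j$ are adjacent in $G$; the classic way is an ``incidence'' gadget whose stable configurations are blocked unless the two selections are consistent with some edge of $G[\text{color } i, \text{color } j]$. Finally I would pad with \happy\-pair–like dummy structures to tune $\Om$, $\Ow$ and $k$ so that $t = k - \max\{\Om,\Ow\}$ is bounded by a function of $\kappa$ only (polynomial in $\kappa$, say $O(\kappa^2)$), which is what a parameterized reduction requires.

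The correctness argument then splits into the two standard directions. For the forward direction, given a multicolored clique, I would build the stable matching that realizes the ``correct'' rotation choice in every selection gadget and the consistent configuration in every verification gadget, and verify by direct computation that its balance is at most $k$: the men's side hits (close to) $\Om$ by design, and the women's side exceeds $\Ow$ only by the bounded amount budgeted for the $\kappa$ selections and $\binom{\kappa}{2}$ verifications. For the reverse direction, given a stable matching $\mu$ with $\bal{\mu}\le k$, I would argue that the budget constraint on the women's side forces $\mu$ to restrict, in each gadget, to one of the few ``canonical'' configurations (any deviation overshoots $\Ow$ by more than the allotted slack, using Corollary~\ref{cor:prefixCleaned}-style monotonicity facts), that these configurations encode a well-defined vertex per color, and that stability of $\mu$ across the verification gadgets forces every pair of chosen vertices to be adjacent—yielding a clique. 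Throughout, I would lean on Propositions~\ref{lem:existenceSM}–\ref{lem:manwomanOpt} to control $\Mopt$, $\Wopt$, and the structure of intermediate stable matchings.

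The main obstacle I expect is the simultaneous control of three quantities under the rigid constraints of stable matchings: (1) keeping the men's optimum $\Om$ (and hence $\max\{\Om,\Ow\}$) large while making the men's cost in every stable matching essentially \emph{independent} of the encoded choices, so that all the combinatorial signal lives on the women's side; (2) making the women's excess over $\Ow$ scale only with $\kappa$, not with the number of vertices, which requires the per-selection cost to be uniform across all vertices of a color yet still leave enough ``room'' in stability constraints to propagate consistency; and (3) ensuring that the \emph{only} stable matchings within budget are the intended encodings—ruling out clever mixed configurations that stay cheap on the women's side by exploiting slack in one gadget to overspend in another. Handling (3) is where the reduction becomes ``quite technical'': it will likely require a careful analysis of the rotation poset of the constructed instance and an argument that any budget-respecting stable matching is a \emph{downset} (or upset) of rotations that decomposes gadget-by-gadget. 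I would expect the bulk of the proof, and its length, to be consumed by precisely defining the preference lists of the verification gadgets and proving this decomposition lemma.
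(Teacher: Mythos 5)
Your proposal is a strategy outline rather than a proof: no preference lists are constructed, the parameter bound is never verified, and the two steps you yourself flag as the crux (the verification gadgets and the ``decomposition lemma'' for budget-respecting stable matchings over the rotation poset) are left entirely open. The paper's actual reduction is from {\sc Clique} and is far more direct: it introduces two men and two women per vertex and per edge, a special man $m^*$ whose preference list is padded with $\delta$ dummy women so that $\Om=|M|+\delta$ is inflated while $\Ow=|W|$ stays minimal, and it sets $\widehat{k}=|M|+\delta+6(k+\frac{k(k-1)}{2})$, so that $t=\widehat{k}-\max\{\Om,\Ow\}$ depends only on $k$. Selection is encoded by ``flipping'' the pair of men of a vertex or edge to their fourth-choice partners; no rotation machinery is used, only explicit blocking-pair arguments plus a counting argument.

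The conceptual gap is in where you place the enforcement. You insist that the men's cost be ``(nearly) constant regardless of the choice'' and that the women ``pay only a small, parameter-bounded excess over $\Ow$''; neither is how the constraints can bite. In the paper both sums are active at the threshold: the men's excess over $\Om$ is exactly the parameter budget $6(k+\frac{k(k-1)}{2})$ (each flip costs the men $+3$), and this is precisely what caps the number of selected vertices and edges, yielding $|U|+|S|\leq k+\frac{k(k-1)}{2}$; meanwhile the women's sum in any stable matching is enormous (positions $|E|+2$ and $|V||E|+2$ under the man-optimal configuration) unless many flips occur, and the asymmetric weights ($|E|+1$ per vertex-flip versus $|V||E|+1$ per edge-flip) force $|S|\geq\frac{k(k-1)}{2}$ and hence $|U|\leq k$, with stability forcing the endpoints of selected edges to be selected vertices. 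In particular the women's excess over $\Ow$ in a solution is on the order of $\delta$, i.e.\ polynomial in the instance, not in the parameter --- the women's constraint is useful only because their \emph{baseline} cost exceeds $\widehat{k}$, not because deviations exceed a small slack over $\Ow$. By making the men's side insensitive to selections you discard the only parameter-bounded lever available, and you would then have to re-create both the cardinality control and the vertex/edge composition control purely through women's-side costs and stability, which your sketch does not do; as written, the reverse direction (``any deviation overshoots $\Ow$ by more than the allotted slack'') does not go through, since the allotted slack on the women's side is $\widehat{k}-\Ow$, which is not bounded by the parameter.
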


For this purpose, we consider the {\sc Clique} problem, which is defined as follows.

\defparproblemMy{{\sc Clique}}{A graph $G=(V,E)$, and a positive integer $k$.}{$k$.}{Does $G$ contain a clique on $k$ vertices?}

The {\sc Clique} problem is known to be \WOH\ \cite{clique}. Thus, to prove Theorem \ref{thm:hardness}, it is sufficient to prove the following result.

\begin{lemma}\label{lemma:hardness}
Given an instance ${\cal I}=(G=(V,E),k)$ of {\sc Clique}, an equivalent instance $\widehat{\cal I}=(M,W,{\cal L}_M,{\cal L}_W,\widehat{k})$ of \maxBSM\ such that $t=6(k+\displaystyle{\frac{k(k-1)}{2}})$ can be constructed in time $\OO(f(k)\cdot|{\cal I}|^{\OO(1)})$ for some function $f$.
\end{lemma}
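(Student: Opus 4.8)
The plan is to construct a reduction from \textsc{Clique} that encodes the choice of $k$ vertices and $\binom{k}{2}$ edges of a clique into the structure of stable matchings, while forcing the balance to behave asymmetrically between the two sides so that the ``above-max'' parameter $t = k' - \max\{\Om,\Ow\}$ stays linear in $k + \binom{k}{2}$. The key design idea: for each vertex of $G$ and each edge of $G$ we build a small ``selector gadget'' whose stable matchings offer a spectrum of trade-offs --- one side (say the men) can be made cheap while the other (women) becomes expensive, parametrized by which vertex/edge is ``chosen'' in that gadget. Because $\Bal$ takes the $\max$ of the two total dissatisfactions, a stable matching with balance $\leq \widehat k$ must keep \emph{both} side-sums bounded; we rig the gadgets so this is possible exactly when the vertex-gadgets and edge-gadgets make mutually consistent choices, i.e.\ the chosen edges all lie between chosen vertices --- a clique of size $k$.

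\textbf{Key steps, in order.} First I would set up $k$ vertex-selector gadgets $V_1,\dots,V_k$ (each able to ``pick'' any $v\in V$) and $\binom{k}{2}$ edge-selector gadgets $E_{i,j}$ (each able to ``pick'' any $e\in E$), with each gadget contributing a small constant number of men and women and using preference values bounded so that the total range stays controlled. Second, I would introduce \emph{consistency gadgets} that connect each edge-gadget $E_{i,j}$ to the two vertex-gadgets $V_i,V_j$, using blocking-pair constraints: if $E_{i,j}$ picks an edge $\{a,b\}$ but $V_i$ does not pick $a$ (nor $b$), a forbidden configuration arises that either destabilizes the matching or inflates one of the two side-sums past the budget. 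Third, I would carefully compute $\Om$ and $\Ow$ for the assembled instance --- these come from the man-optimal and woman-optimal matchings, which within each gadget correspond to the two extreme (cheapest-for-one-side) choices --- and then choose the target $\widehat k$ so that $\widehat k - \max\{\Om,\Ow\}$ equals exactly $6(k + \binom{k}{2})$, which likely means each gadget is allotted a ``slack'' of $6$ over the max-guarantee. Fourth, I would prove the two directions of equivalence: a $k$-clique yields consistent gadget choices whose induced stable matching has both side-sums within budget, hence $\Bal \leq \widehat k$; conversely, any stable matching with balance $\leq \widehat k$ forces each gadget into a single clean ``choice'' (the slack is too small to afford a mixed state), and the consistency gadgets force the chosen edges to span the chosen vertices, giving a clique. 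Finally, polynomial-time constructibility and the parameter bound are immediate from the gadget count.

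\textbf{Main obstacle.} The hard part will be the \emph{consistency gadgets} and the exact arithmetic of the budget. We need a mechanism where an inconsistent choice is not merely suboptimal but provably pushes $\max\{\sum f_m, \sum f_w\}$ strictly above $\widehat k$ \emph{or} introduces a blocking pair --- and this must be true simultaneously for all $\binom{k}{2}$ edge-gadgets without the slacks from different gadgets being ``borrowed'' to cover one bad gadget. This is precisely why the slack per gadget must be a fixed small constant (here $6$) and why the reduction is described as ``quite technical'': one must verify that the stable-matching structure (via the Rural-Hospital theorem, Proposition~\ref{lem:rht}, and the man/woman-optimal bounds, Proposition~\ref{lem:manwomanOpt}) genuinely confines each gadget's contribution, ruling out clever global rebalancing. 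A secondary subtlety is ensuring the whole instance still admits \emph{some} stable matching realizing the intended $\Om,\Ow$ values and that cleaning of prefixes/suffixes doesn't collapse the gadgets; I would handle this by making all gadget preference lists already ``clean'' by construction.
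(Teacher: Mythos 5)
Your blueprint departs from the paper's construction in a way that leaves a genuine gap. The paper does not use $k$ vertex-selector and $\binom{k}{2}$ edge-selector gadgets; it creates one pair of men and women per vertex of $G$ and per edge of $G$, each pair having exactly two stable states (``swap'' or ``don't swap''), and it forces the right number of swaps by a \emph{global} counting argument rather than by per-gadget slack allocation. Crucially, the paper first makes the two guarantees wildly unequal: a special man $m^*$ together with $\delta$ dummy couples inflates $\Om$ to $|M|+\delta$ while $\Ow=|W|$, so that $\max\{\Om,\Ow\}=\Om$ and only the men's side has a tight budget ($t=6(k+\binom{k}{2})$ above $\Om$), whereas the women's side has an enormous but precisely calibrated slack. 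The men's budget bounds the total number of selected vertices plus edges by $k+\binom{k}{2}$ (each selection costs exactly $+6$ on the men's side, independent of which vertex or edge it is), and the women's side, where unselected vertices cost $2(|E|+1)$ and unselected edges cost the much heavier $2(|V||E|+1)$, forces at least $\binom{k}{2}$ edge selections; the weight disparity is exactly what prevents the ``borrowing'' you worry about. Consistency (selected edges have selected endpoints) is then a one-line blocking-pair argument, not a separate gadget.

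Your plan, as stated, has two problems that I do not see how to repair within your framework. First, you assert that a matching of balance at most $\widehat k$ ``must keep both side-sums bounded'' with a fixed slack of $6$ per gadget over the max-guarantee. If your instances really had both $\sum p_m(\mu(m))-\Om$ and $\sum p_w(\mu(w))-\Ow$ bounded by $O(k^2)$, then the Above-Min parameter would also be $O(k^2)$ and the paper's own FPT algorithm for \minBSM\ would decide them in FPT time, so such a reduction could not establish W[1]-hardness unless $\mathrm{FPT}=\mathrm{W[1]}$. Any correct reduction must, like the paper's, engineer a large gap between $\Om$ and $\Ow$ and do the real combinatorial work on the loose side; this ingredient is absent from your sketch. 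Second, your selector gadgets must realize $|V|$ (resp.\ $|E|$) distinct stable states using constantly many people while contributing a cost above the man-optimal state that is \emph{independent} of which vertex (edge) is chosen --- otherwise the identity of the chosen vertex leaks into the tight side-sum and $t$ is no longer a function of $k$ alone. You give no mechanism for this, nor for forcing the $k$ selectors to pick distinct vertices. The paper avoids both issues by selecting a subset of the existing vertex/edge gadgets (uniform cost $+6$ per selection on the tight side) and pushing all identity-dependent cost to the women's side, where the slack is large and only aggregate weights matter.
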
 

The rest of this section focuses on the proof of Lemma~\ref{lemma:hardness}. To this end, we let ${\cal I}=(G=(V,E),k)$ be some instance of {\sc Clique}. In Section \ref{sec:reduction}, we construct (in ``\FPT\ time'') an instance $\widehat{\cal I}=(M,W,{\cal L}_M,{\cal L}_W,\widehat{k})$ of \maxBSM. This section also contains an informal explanation of the intuition underlying the construction. Then, in Section \ref{sec:t}, we verify that the parameter $t$ associated with $\widehat{\cal I}$ is equal to $6(k+\displaystyle{\frac{k(k-1)}{2}})$. Finally, in Section \ref{sec:correctness}, we prove that the input instance $\cal I$ of {\sc Clique} and our instance $\widehat{\cal I}$ of \maxBSM\ are equivalent. In what follows, we select arbitrary orders on $V$ and $E$, according to which we denote $V=\{v_1,v_2,\ldots,v_{|V|}\}$ and $E=\{e_1,e_2,\ldots,e_{|E|}\}$.

\subsection{Reduction}\label{sec:reduction}

First, to construct the sets $M$ and $W$, we define three pairwise-disjoint subsets of $M$, called $M_V, M_E$ and $\widetilde{M}$, and three pairwise-disjoint subsets of $W$, called $W_V, W_E$ and $\widetilde{W}$. Then, we set $M=M_V\cup M_E\cup \widetilde{M}\cup\{m^*\}$ and $W=W_V\cup W_E\cup \widetilde{W}\cup\{w^*\}$, where $m^*$ and $w^*$ denote a new man and a new woman (which do not belong to the six sets defined previously), respectively. 

\begin{itemize}
\item $M_V=\{m^i_v: v\in V, i\in\{1,2\}\}$; $W_V=\{w^i_v: v\in V, i\in\{1,2\}\}$.
\medskip
\item $M_E=\{m^i_e: e\in E, i\in\{1,2\}\}$; $W_E=\{w^i_e: e\in E, i\in\{1,2\}\}$.
\medskip
\item Let $\delta=2(|V|+|E|+|V||E|+|V||E|^2)-k(4+4k+2|E|+(k-1)|V||E|)$.\\
Then, $\widetilde{M}=\{\widetilde{m}^i: i\in\{1,2,\ldots,\delta\}\}$ and $\widetilde{W}=\{\widetilde{w}^i: i\in\{1,2,\ldots,\delta\}\}$.
\end{itemize}

Note that $|M|=|W|$. We remark that in what follows, we assume w.l.o.g.~that $\delta\geq 0$ and $|V|>k+\displaystyle{\frac{k(k-1)}{2}}$, else the size of the input instance $\cal I$ of {\sc Clique} is bounded by a function of $k$ and can therefore, by using brute-force, be solved in FPT time.

Before we proceed, let us discuss the intuition behind the definition of the subsets of men and women above, and the definitions of the preference lists that will follow. Roughly speaking, each pair of men, $m^1_v$ and $m^2_v$, represents a vertex, and we aim to ensure that either both men will be matched to their best partners (in the man-optimal stable matching) or both men will be matched to other partners (where there would be only {\em one} choice for these other partners that preserves stability). Accordingly, we will guarantee that the choice of matching these two men to their best partners translates to {\em not} choosing the vertex they represent into the clique, and the other choice translates to choosing this vertex into the clique.

Now, having just the set $M_V$, we can encode selection of vertices into the clique, but we cannot ensure that the vertices we select indeed form a clique. For this purpose, we also have the set $M_E$ which, in a manner similar to $M_V$, encodes selection of edges into the clique. By designing the instance in a way that the situation of the men in the man-optimal stable matching is significantly worse than that of the women in the women-optimal stable matching, we are able to ensure that at most $\displaystyle{2(k+\frac{k(k-1)}{2})}$ men in $M_V\cup M_E$ will not be assigned their best partners (here, we exploit the condition that $\bal{\mu}\leq \widehat{k}$ for a solution $\mu$). We remark that here the man $m^*$ plays a crucial role---by using dummy men and women (in the sets $\widetilde{M}$ and $\widetilde{W}$) that prefer each other over all other people, we ensure that the situation of $m^*$ is always ``extremely bad'' (from his viewpoint), while the situation of his partner, $w^*$, is always ``excellent'' (from her viewpoint).

At this point, we first need to ensure that the edges that we select indeed connect the vertices that we select. For this purpose, we carefully design our reduction so that when a pair of men representing some edge $e$ obtain partners worse than those they have in the man-optimal stable matching, it must be that the men representing the endpoints of $e$ have also obtained partners worse than those they have in the man-optimal stable matching, else stability will not be preserved---the partners of the men represting the endpoints of $e$ will form blocking pairs together with the men representing~$e$.

Finally we observe that we still need to ensure that among our $\displaystyle{2(k+\frac{k(k-1)}{2})}$ distinguished men in $M_V\cup M_E$, which are associated with $\displaystyle{k+\frac{k(k-1)}{2}}$ selected elements (vertices and edges), there will be exactly $2k$ distinguished men from $M_V$ and exactly $k(k-1)$ distinguished men from $M_E$, which would mean we have chosen $k$ vertices and $\displaystyle{\frac{k(k-1)}{2}}$ edges. For this purpose, we construct an instance where for the women, it is only somewhat ``beneficial'' that the men in $M_V$ will not be matched to their best partners, but it is extremely beneficial that the men in $M_E$ will not be matched to their best partners. This objective is achieved by carefully placing dummy men (from $\widehat{M}$) in the preference lists of women in $W_E$. By again exploiting the condition that $\bal{\mu}\leq \widehat{k}$ for a solution $\mu$, we are able to ensure that there would be at least $k(k-1)$ distinguished men from~$M_E$.

Next, we proceed with the formal presentation of our reduction by defining $p_m$ for every man $m\in M$, and thus constructing ${\cal L}_M$.

\begin{itemize}
\item For all $m^1_v\in M_V$: ${\cal A}(m^1_v)=\{w^1_v,\widetilde{w}^1,\widetilde{w}^2,w^2_v\}$.
	\begin{itemize}
	\item $p_{m^1_v}(w^1_v)=1$; $p_{m^1_v}(\widetilde{w}^1)=2$; $p_{m^1_v}(\widetilde{w}^2)=3$; $p_{m^1_v}(w^2_v)=4$.
	\end{itemize}
\medskip
\item For all $m^2_v\in M_V$: ${\cal A}(m^2_v)=\{w^2_v,w^1,w^2,w^1_v\}$.
	\begin{itemize}
	\item $p_{m^2_v}(w^2_v)=1$; $p_{m^2_v}(\widetilde{w}^1)=2$; $p_{m^2_v}(\widetilde{w}^2)=3$; $p_{m^2_v}(w^1_v)=4$.
	\end{itemize}
\medskip
\item For all $m^1_{\{u,v\}}\in M_E$ where $u<v$:\footnote{Recall that we have defined an order on $V$.} ${\cal A}(m^1_{\{u,v\}})=\{w^1_{\{u,v\}},w^1_u,w^1_v,w^2_{\{u,v\}}\}$.
	\begin{itemize}
	\item $p_{m^1_{\{u,v\}}}(w^1_{\{u,v\}})=1$; $p_{m^1_{\{u,v\}}}(w^1_u)=2$; $p_{m^1_{\{u,v\}}}(w^1_v)=3$; $p_{m^1_{\{u,v\}}}(w^2_{\{u,v\}})=4$.
	\end{itemize}
\medskip
\item For all $m^2_{\{u,v\}}\in M_E$ where $u<v$: ${\cal A}(m^2_{\{u,v\}})=\{w^2_{\{u,v\}},w^2_u,w^2_v,w^1_{\{u,v\}}\}$.
	\begin{itemize}
	\item $p_{m^2_{\{u,v\}}}(w^2_{\{u,v\}})=1$; $p_{m^2_{\{u,v\}}}(w^2_u)=2$; $p_{m^2_{\{u,v\}}}(w^2_v)=3$; $p_{m^2_{\{u,v\}}}(w^1_{\{u,v\}})=4$.
	\end{itemize}
\medskip
\item For all $\widetilde{m}^i\in \widetilde{M}$ such that $i\leq |V||E|$: ${\cal A}(\widetilde{m}^i)=\{\widetilde{w}^i\}\cup W_E\cup\{w^i_v\in M_V: \widetilde{m}^i\in\C{A}(w^i_v)\}$, where for all $w^i_v\in W_V$, the set $\C{A}(w^i_v)$ is determined later.
	\begin{itemize}
	\item $p_{\widetilde{m}^i}(\widetilde{w}^i)=1$.
	\item For all $j\in\{1,2,\ldots,|E|\}$: $p_{\widetilde{m}^i}(w^1_{e_j})=j+1$; $p_{\widetilde{m}^i}(w^2_{e_j})=|E|+j+1$.
	\item Denote $X=\{w^i_v\in M_V: \widetilde{m}^i\in\C{A}(w^i_v)\}$. Let $f: X\rightarrow |X|$ be an arbitrarily chosen bijection. Then, for all $w^i_v\in X$: $p_{\widetilde{m}^i}(w^i_v)=2|E|+f(w^i_v)+1$.
	\end{itemize} 
\item For all $\widetilde{m}^i\in \widetilde{M}$ such that $i>|V||E|$: ${\cal A}(\widetilde{m}^i)=\{\widetilde{w}^i\}$.
		\begin{itemize}
		\item $p_{\widetilde{m}^i}(\widetilde{w}^i)=1$.
		\end{itemize}	
\medskip
\item For $m^*$: ${\cal A}(m^*)=\widetilde{W}\cup\{w^*\}$.
	\begin{itemize}
	\item For all $i\in\{1,2,\ldots,\delta\}$: $p_{m^*}(\widetilde{w}^i)=i$.
	\item $p_{m^*}(w^*)=\delta+1$.
	\end{itemize}
\end{itemize}

Accordingly, we define $p_w$ for every woman $w\in W$, and thus construct ${\cal L}_W$.

\begin{itemize}
\item For all $w^1_v\in W_V$: ${\cal A}(w^1_v)=\{m^1_e\in M_E: v\in e\}\cup\{m^2_v,m^1_v\}\cup\{\widetilde{m}^i\in\widetilde{M}: i\leq |E|-\mathrm{degree}_G(v)\}$.
	\begin{itemize}
	\item $p_{w^1_v}(m^2_v)=1$
	\item For all $i\in\{1,2,\ldots,|E|\}$: $p_{w^1_v}(m^1_{e_i})=i+1$.
	\item $p_{w^1_v}(m^1_v)=|E|+2$.
	\end{itemize}
\medskip
\item For all $w^2_v\in W_V$: ${\cal A}(w^2_v)=\{m^2_e\in M_E: v\in e\}\cup\{m^1_v,m^2_v\}\cup\{\widetilde{m}^i\in\widetilde{M}: i\leq |E|-\mathrm{degree}_G(v)\}$.
	\begin{itemize}
	\item $p_{w^2_v}(m^1_v)=1$.
	\item For all $i\in\{1,2,\ldots,|E|\}$: $p_{w^2_v}(m^2_{e_i})=i+1$.
	\item $p_{w^2_v}(m^2_v)=|E|+2$.
	\end{itemize}
\medskip
\item For all $w^1_e\in W_E$: ${\cal A}(w^1_e)=\{\widetilde{m}^i\in\widetilde{M}: i\in\{1,2,\ldots,|V||E|\}\}\cup\{m^2_e,m^1_e\}$.
	\begin{itemize}
	\item $p_{w^1_e}(m^2_e)=1$.
	\item For all $i\in\{1,2,\ldots,|V||E|\}$: $p_{w^1_e}(\widetilde{m}^i)=i+1$.
	\item $p_{w^1_e}(m^1_e)=|V||E|+2$.
	\end{itemize}
\medskip
\item For all $w^2_e\in W_E$: ${\cal A}(w^2_e)=\{\widetilde{m}^i\in\widetilde{M}: i\in\{1,2,\ldots,|V||E|\}\}\cup\{m^1_e,m^2_e\}$.
	\begin{itemize}
	\item $p_{w^2_e}(m^1_e)=1$.
	\item For all $i\in\{1,2,\ldots,|V||E|\}$: $p_{w^2_e}(\widetilde{m}^i)=i+1$.
	\item $p_{w^2_e}(m^2_e)=|V||E|+2$.
	\end{itemize}
\medskip
\item For all $\widetilde{w}^i\in \widetilde{W}$ such that $i\in\{1,2\}$: ${\cal A}(\widetilde{w}^i)=\{\widetilde{m}^i\}\cup\{m^*\}\cup M_V$.
	\begin{itemize}
		\item $p_{\widetilde{w}^i}(\widetilde{m}^i)=1$; $p_{\widetilde{w}^i}(m^*)=2$.
		\item For all $j\in\{1,2,\ldots,|V|\}$: $p_{\widetilde{w}^i}(m^1_{v_j})=j+2$; $p_{\widetilde{w}^i}(m^2_{v_j})=|V|+j+2$.
		\end{itemize}
	\item For all $\widetilde{w}^i\in \widetilde{W}$ such that $i>2$: ${\cal A}(\widetilde{w}^i)=\{\widetilde{m}^i\}\cup\{m^*\}$.
		\begin{itemize}
		\item $p_{\widetilde{w}^i}(\widetilde{m}^i)=1$; $p_{\widetilde{w}^i}(m^*)=2$.
		\end{itemize}
\medskip
\item For $w^*$: ${\cal A}(w^*)=\{m^*\}$.
	\begin{itemize}
	\item $p_{w^*}(m^*)=1$.
	\end{itemize}
\end{itemize}

Finally, we define $\widehat{k}=|M|+\delta+6(k+\displaystyle{\frac{k(k-1)}{2}})$. It is clear that the entire construction (under the assumptions that $\delta\geq 0$ and $|V|>k+\displaystyle{\frac{k(k-1)}{2}}$) can be performed in polynomial time.

\subsection{The Parameter}\label{sec:t}

Our current objective is to verify that $t$ is indeed bounded by a function of $k$. For this purpose, we first observe that for all $i\in\{1,2,\ldots,\delta\}$, it holds that $p_{\widetilde{m}^i}(\widetilde{w}^i)=p_{\widetilde{w}^i}(\widetilde{m}^i)=1$. Therefore, for all $\mu\in\mathrm{SM}(\widehat{\cal I})$ and $i\in\{1,2,\ldots,\delta\}$, we have that $\mu(\widetilde{m}^i)=\widetilde{w}^i$, else $(\widetilde{m}^i,\widetilde{w}^i)$ would have formed a blocking pair for $\mu$ in~$\widehat{\cal I}$.

\begin{observation}\label{obs:dummy1}
For all $\mu\in SM(\widehat{\cal I})$ and $i\in\{1,2,\ldots,\delta\}$, it holds that $\mu(\widetilde{m}^i)=\widetilde{w}^i$.
\end{observation}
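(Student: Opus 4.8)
The statement to prove is Observation~\ref{obs:dummy1}, which asserts that in every stable matching $\mu$ of $\widehat{\cal I}$, each dummy man $\widetilde{m}^i$ (for $i\in\{1,\ldots,\delta\}$) is matched to the corresponding dummy woman $\widetilde{w}^i$. The key fact, already noted in the paragraph immediately preceding the statement, is that $p_{\widetilde{m}^i}(\widetilde{w}^i)=1$ and $p_{\widetilde{w}^i}(\widetilde{m}^i)=1$: each of $\widetilde{m}^i$ and $\widetilde{w}^i$ ranks the other as their single most preferred partner. This immediately makes $(\widetilde{m}^i,\widetilde{w}^i)$ an acceptable pair that is a ``forced'' edge in any stable matching.

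The plan is a short argument by contradiction. Suppose $\mu\in\mathrm{SM}(\widehat{\cal I})$ and $\mu(\widetilde{m}^i)\neq\widetilde{w}^i$ for some $i\in\{1,\ldots,\delta\}$. I would first invoke the Rural-Hospital Theorem (Proposition~\ref{lem:rht}): since $(\widetilde{m}^i,\widetilde{w}^i)$ is an acceptable pair and both are clearly matched in some stable matching (indeed the man-optimal one matches $\widetilde{m}^i$ to $\widetilde{w}^i$ because this is the top choice of each; more simply one can just observe both have nonempty acceptability sets and are matchable), both $\widetilde{m}^i$ and $\widetilde{w}^i$ are matched by $\mu$. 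Actually the cleanest route avoids even needing that: if $\widetilde{m}^i$ is unmatched by $\mu$, then since $\widetilde{w}^i\in{\cal A}(\widetilde{m}^i)$, the pair $(\widetilde{m}^i,\widetilde{w}^i)$ blocks unless $\widetilde{w}^i$ strictly prefers $\mu(\widetilde{w}^i)$ to $\widetilde{m}^i$; but $\widetilde{m}^i$ is $\widetilde{w}^i$'s top choice, so no such preferred partner exists, a contradiction (note an unmatched person is worse than any acceptable partner). The same applies symmetrically if $\widetilde{w}^i$ is unmatched. So both are matched, and by the assumption $\mu(\widetilde{m}^i)\neq\widetilde{w}^i$, we have $\mu(\widetilde{m}^i)$ some other woman and $\mu(\widetilde{w}^i)$ some other man.

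Then I would check that $(\widetilde{m}^i,\widetilde{w}^i)$ is a blocking pair for $\mu$: it is an acceptable pair; since $\widetilde{w}^i\neq\mu(\widetilde{m}^i)$ and $p_{\widetilde{m}^i}(\widetilde{w}^i)=1$ is the smallest possible value, we have $p_{\widetilde{m}^i}(\widetilde{w}^i)<p_{\widetilde{m}^i}(\mu(\widetilde{m}^i))$ (any other acceptable woman has a strictly larger preference value, since the function is injective and $1$ is the minimum of its image); symmetrically, $p_{\widetilde{w}^i}(\widetilde{m}^i)=1<p_{\widetilde{w}^i}(\mu(\widetilde{w}^i))$. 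Hence conditions (i) and (ii) of the definition of a blocking pair are satisfied, contradicting the stability of $\mu$. Therefore $\mu(\widetilde{m}^i)=\widetilde{w}^i$, as claimed.

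This observation is essentially immediate and carries no real obstacle; the only thing to be slightly careful about is the edge case where $\widetilde{m}^i$ or $\widetilde{w}^i$ is left unmatched by $\mu$, which is handled by recalling the convention (stated in the introduction) that every person prefers any acceptable partner to being unmatched, so that an unmatched endpoint still yields a blocking pair. I would write the proof in two or three sentences exactly along these lines, without needing to reference the specific structure of the rest of the reduction — only the two preference values $p_{\widetilde{m}^i}(\widetilde{w}^i)=p_{\widetilde{w}^i}(\widetilde{m}^i)=1$ matter.
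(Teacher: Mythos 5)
Your proof is correct and is essentially the paper's argument: since $p_{\widetilde{m}^i}(\widetilde{w}^i)=p_{\widetilde{w}^i}(\widetilde{m}^i)=1$, any stable matching not pairing them would be blocked by $(\widetilde{m}^i,\widetilde{w}^i)$. Your extra handling of the unmatched case is fine but not a different route, just a more explicit writing of the same one-line blocking-pair argument.
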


Now, note that ${\cal A}(m^*)=\widetilde{W}\cup\{w^*\}$. Thus, by Observation \ref{obs:dummy1}, we have that for all $\mu\in SM(\widehat{\cal I})$, either $m^*$ is unmatched or $\mu(m^*)=w^*$. However, ${\cal A}(w^*)=\{m^*\}$, which implies that in the former case, $(m^*,w^*)$ forms a blocking pair. Thus, we also have the following observation.

\begin{observation}\label{obs:dummy2}
For all $\mu\in SM(\widehat{\cal I})$, it holds that $\mu(m^*)=w^*$.
\end{observation}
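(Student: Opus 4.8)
The final statement to prove is Observation~\ref{obs:dummy2}: in every stable matching $\mu$ of $\widehat{\cal I}$, we have $\mu(m^*)=w^*$. The plan is to combine the structural fact just established in Observation~\ref{obs:dummy1} with the acceptability lists of $m^*$ and $w^*$, and then invoke the no-blocking-pair property of stable matchings.

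First I would recall the relevant acceptability sets from the construction: ${\cal A}(m^*)=\widetilde{W}\cup\{w^*\}$ and ${\cal A}(w^*)=\{m^*\}$. By Observation~\ref{obs:dummy1}, for every $i\in\{1,2,\ldots,\delta\}$ the woman $\widetilde{w}^i$ is matched by $\mu$ to $\widetilde{m}^i$, hence $\mu(\widetilde w^i)\neq m^*$. Since the only acceptable partners of $m^*$ are the women in $\widetilde{W}=\{\widetilde w^1,\ldots,\widetilde w^\delta\}$ together with $w^*$, and all of $\widetilde W$ are already taken by the corresponding $\widetilde m^i$'s, the man $m^*$ is, under $\mu$, either unmatched or matched to $w^*$. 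This dichotomy is the first key step.

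Next I would rule out the case that $m^*$ is unmatched. Suppose $\mu(m^*)$ is undefined. Consider the pair $(m^*,w^*)$: it is an acceptable pair since $w^*\in{\cal A}(m^*)$ and $m^*\in{\cal A}(w^*)$. Because ${\cal A}(w^*)=\{m^*\}$, the woman $w^*$ has no acceptable partner other than $m^*$, so in any matching $\mu$ either $w^*$ is unmatched or $\mu(w^*)=m^*$; but $\mu(w^*)=m^*$ would force $\mu(m^*)=w^*$, contradicting the supposition, so $w^*$ must be unmatched too. Then both $m^*$ and $w^*$ are unmatched by $\mu$, and by the definition of a blocking pair (part (i) with $m^*$ unmatched, part (ii) with $w^*$ unmatched) the acceptable pair $(m^*,w^*)$ blocks $\mu$, contradicting stability. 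Hence $m^*$ cannot be unmatched, and combined with the previous step we conclude $\mu(m^*)=w^*$.

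I do not anticipate any real obstacle here: the only subtlety is making sure that Observation~\ref{obs:dummy1} genuinely exhausts the set $\widetilde W$ so that no $\widetilde w^i$ is free to be matched with $m^*$, which it does since it covers all $i\in\{1,\ldots,\delta\}$ and $\widetilde W$ has exactly $\delta$ elements. The argument is a short case analysis, essentially the one already sketched in the paragraph preceding the observation in the excerpt, so the write-up is a couple of sentences formalizing that sketch.
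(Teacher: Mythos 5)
Your proof is correct and follows essentially the same route as the paper: use Observation~\ref{obs:dummy1} together with ${\cal A}(m^*)=\widetilde{W}\cup\{w^*\}$ to conclude that $m^*$ is either unmatched or matched to $w^*$, then use ${\cal A}(w^*)=\{m^*\}$ to show the unmatched case produces the blocking pair $(m^*,w^*)$. Your write-up merely spells out the intermediate step that $w^*$ is also unmatched in that case, which the paper leaves implicit.
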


Let us proceed by identifying the man-optimal $\mu_M$ and the woman-optimal $\mu_W$ stable matchings. For this purpose, we first define a matching $\mu'_M$ as follows.
\begin{itemize}
\setlength{\itemsep}{-1pt}
\item For all $m^i_v\in M_V$: $\mu'_M(m^i_v)=w^i_v$.
\item For all $m^i_e\in M_E$: $\mu'_M(m^i_e)=w^i_e$.
\item For all $\widetilde{m}^i\in \widetilde{M}$: $\mu'_M(\widetilde{m}^i)=\widetilde{w}^i$.
\item $\mu'_M(m^*)=w^*$.
\end{itemize}

\begin{lemma}\label{lem:manOpt}
$\mu_M=\mu'_M$.
\end{lemma}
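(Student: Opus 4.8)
\textbf{Proof plan for Lemma~\ref{lem:manOpt}.} The plan is to show that $\mu'_M$ is a stable matching and that it is at least as good for every man as any other stable matching; since the man-optimal stable matching is unique, this forces $\mu_M = \mu'_M$. First I would verify that $\mu'_M$ is a perfect matching: by construction every man and every woman appears in exactly one pair, using the fact that the index sets for $M_V,M_E,\widetilde M$ match those of $W_V,W_E,\widetilde W$ and that $m^*$ is paired with $w^*$. The main work is checking stability, i.e.\ that no acceptable pair blocks $\mu'_M$. This is a case analysis over the types of men. For each man $m$, $\mu'_M$ assigns him the partner he ranks \emph{first} (inspecting the preference lists: $p_{m^i_v}(w^i_v)=1$, $p_{m^i_e}(w^i_e)=1$, $p_{\widetilde m^i}(\widetilde w^i)=1$, and $m^*$ gets $w^*$, his \emph{last} choice). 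For every man except $m^*$, since he is already matched to his top choice, he can belong to no blocking pair at all---condition (i) in the definition of a blocking pair fails for him. Hence the only possible blocking pairs involve $m^*$; but $m^*$'s acceptable set is $\widetilde W \cup \{w^*\}$, and each $\widetilde w^i$ is matched by $\mu'_M$ to $\widetilde m^i$, whom she ranks first (over $m^*$, whom she ranks second), so no $\widetilde w^i$ would accept $m^*$. Thus $\mu'_M$ has no blocking pair and is stable.

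Next I would argue man-optimality. The clean route is: for \emph{every} man $m \neq m^*$, $\mu'_M(m)$ is $m$'s most preferred acceptable partner, so trivially $p_m(\mu'_M(m)) = 1 \leq p_m(\mu(m))$ for any matching $\mu$ whatsoever, in particular for every $\mu\in\mathrm{SM}(\widehat{\cal I})$. For $m^*$, Observation~\ref{obs:dummy2} states that every stable matching matches $m^*$ to $w^*$, so $\mu(m^*) = w^* = \mu'_M(m^*)$ for all $\mu\in\mathrm{SM}(\widehat{\cal I})$. Combining these two facts, every stable matching $\mu$ satisfies $p_m(\mu'_M(m)) \leq p_m(\mu(m))$ for every man $m$, which is exactly the defining property of the man-optimal stable matching. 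Since $\mu'_M$ is itself stable (shown above), by uniqueness of $\mu_M$ we conclude $\mu_M = \mu'_M$.

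The step I expect to require the most care is the stability check, but as sketched it actually collapses almost entirely: the reduction is engineered so that in $\mu'_M$ every man other than $m^*$ sits at the top of his own list, which instantly rules out his participation in any blocking pair, and the handful of pairs through $m^*$ are killed by the women's lists (each dummy woman prefers her dummy man to $m^*$). So the only genuine bookkeeping is confirming the index arithmetic that makes $\mu'_M$ a perfect matching and double-checking, from the explicit preference lists given in Section~\ref{sec:reduction}, that the rank-$1$ partner of each man is indeed the one $\mu'_M$ assigns. I would present the stability argument in these two bullet-free cases (men $\neq m^*$; then $m^*$) and invoke Observation~\ref{obs:dummy2} for the man-optimality of $m^*$'s assignment.
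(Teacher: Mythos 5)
Your proposal is correct and follows essentially the same route as the paper: stability is checked by noting that every man other than $m^*$ is matched to his top choice and that $m^*$'s only alternatives are dummy women who prefer their dummy partners, and man-optimality then follows since $m^*$ is matched to $w^*$ in every stable matching. The only cosmetic difference is that you invoke Observation~\ref{obs:dummy2} for $m^*$ where the paper cites Observation~\ref{obs:dummy1} (from which Observation~\ref{obs:dummy2} is derived), which is an equally valid and slightly more direct way to finish.
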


\begin{proof}
Since for all $i\in\{1,2,\ldots,\delta\}$, it holds that $\mu'_M(\widetilde{m}^i)=\widetilde{w}^i$ and $p_{\widetilde{m}^i}(\widetilde{w}^i)=p_{\widetilde{w}^i}(\widetilde{m}^i)=1$, we have that there cannot exist a blocking pair with at least one person from $\widetilde{M}\cup\widetilde{W}$. Now, notice that for every $m\in M$, including $m^*$, the woman most preferred by $m$ who is outside $\widetilde{W}$ is also the one with whom it is matched. Therefore, there cannot exist any blocking pair for $\mu'_M$, and by Observation \ref{obs:dummy1}, we further conclude that indeed $\mu_M=\mu'_M$.
\end{proof}

Now, we define a matching $\mu'_W$ as follows.

\begin{itemize}
\setlength{\itemsep}{-1pt}
\item For all $w^i_v\in W_V$: $\mu'_W(w^i_v)=m^{3-i}_v$.
\item For all $w^i_e\in W_E$: $\mu'_W(w^i_e)=m^{3-i}_e$.
\item For all $\widetilde{w}^i\in \widetilde{W}$: $\mu'_W(\widetilde{w}^i)=\widetilde{m}^i$.
\item $\mu'_W(w^*)=m^*$.
\end{itemize}

\begin{lemma}\label{lem:womanOpt}
$\mu_W=\mu'_W$.
\end{lemma}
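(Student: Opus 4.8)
The plan is to prove that $\mu'_W$ is a stable matching and that it is woman-optimal, i.e.\ that every woman does at least as well in $\mu'_W$ as in any stable matching. I would mirror the structure of the proof of Lemma~\ref{lem:manOpt}, but with the extra wrinkle that $\mu'_W$ swaps the two ``copies'' at each vertex and each edge.

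First I would verify that $\mu'_W$ is a matching at all: each $m^i_v$ is acceptable to $w^{3-i}_v$ (by the definitions of ${\cal A}(w^1_v)$ and ${\cal A}(w^2_v)$, which contain $m^1_v$ and $m^2_v$), each $m^i_e$ is acceptable to $w^{3-i}_e$, the dummy pairs $(\widetilde{m}^i,\widetilde{w}^i)$ are mutually acceptable, and $(m^*,w^*)$ is acceptable; moreover every person is matched, so $\mu'_W$ is perfect. Next I would check stability by a case analysis over the possible types of a putative blocking pair $(m,w)$. The pairs involving $\widetilde{M}\cup\widetilde{W}$ or $m^*$ or $w^*$ are ruled out exactly as in Observations~\ref{obs:dummy1} and~\ref{obs:dummy2} and Lemma~\ref{lem:manOpt}: each $\widetilde{w}^i$ is matched to her top choice $\widetilde{m}^i$, and $w^*$ to her only choice $m^*$, so none of them can be in a blocking pair. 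The remaining cases are blocking pairs among $M_V\cup M_E$ and $W_V\cup W_E$. For a man $m^1_v$, his partner under $\mu'_W$ is $w^2_v$ with $p_{m^1_v}(w^2_v)=4$, so he would block with anyone he ranks higher, namely $w^1_v$ (rank $1$) or the dummies $\widetilde w^1,\widetilde w^2$ (ranks $2,3$); but $w^1_v$ under $\mu'_W$ has partner $m^2_v$, whom she ranks first, so she will not block; the dummy women are handled above. The symmetric argument works for $m^2_v$, and for $m^1_e,m^2_e$ one checks that their $\mu'_W$-partner is the copy-$e$ woman of rank $4$ while the women they would prefer ($w^1_{\{u,v\}}$ etc.) are matched to partners they strictly prefer. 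This shows $\mu'_W\in\mathrm{SM}(\widehat{\cal I})$.

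Finally I would argue woman-optimality. Having already computed $\mu_M=\mu'_M$ in Lemma~\ref{lem:manOpt}, I would invoke Proposition~\ref{lem:manwomanOpt} (the man-optimal/woman-optimal comparison): in $\mu_M=\mu'_M$ every vertex-woman $w^i_v$ is matched to $m^i_v$ and every edge-woman $w^i_e$ to $m^i_e$, and from the preference lists $p_{w^1_v}(m^1_v)=|E|+2$ while $p_{w^1_v}(m^2_v)=1$, so $w^1_v$ strictly prefers her $\mu'_W$-partner $m^2_v$ to her $\mu_M$-partner $m^1_v$; the same holds for $w^2_v$, $w^1_e$, $w^2_e$. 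Since $\mu'_W$ is a stable matching in which \emph{every} woman is matched to the partner she prefers over her $\mu_M$-partner (and the dummy women and $w^*$ are matched to their unique/top choice, hence cannot do better in any stable matching), and since by Proposition~\ref{lem:manwomanOpt} no stable matching can give any woman a partner she ranks strictly above her $\mu_M$-partner unless... --- more carefully: I would instead show directly that $\mu'_W$ dominates every stable matching for the women. Given any $\mu\in\mathrm{SM}(\widehat{\cal I})$, the dummy women and $w^*$ get their top choice in $\mu'_W$ so $\mu'_W$ is (weakly) better for them; for each $w^i_v$ and $w^i_e$, I claim $p_{w}(\mu'_W(w))\le p_w(\mu(w))$, because $\mu'_W(w^1_v)=m^2_v$ is $w^1_v$'s first choice, $\mu'_W(w^1_e)=m^2_e$ is $w^1_e$'s first choice, and likewise for the superscript-$2$ women; a first choice cannot be beaten. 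Hence $\mu'_W$ is at least as good as every stable matching for every woman, so by the definition of $\mu_W$ (and its uniqueness) $\mu_W=\mu'_W$.

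The main obstacle I anticipate is the stability check for the pairs inside $M_V\cup M_E$ and $W_V\cup W_E$: one must carefully track, for each of the finitely many man-types, which women he ranks above his $\mu'_W$-partner, and confirm each such woman is matched in $\mu'_W$ to someone she ranks strictly higher. This is a bounded but fiddly case analysis driven entirely by the explicit preference values in Section~\ref{sec:reduction}; in particular the key numerical facts are that every vertex-woman and every edge-woman receives her \emph{first}-ranked partner under $\mu'_W$, which immediately kills both the blocking-pair cases and the optimality argument. Once that observation is in place, the rest is routine.
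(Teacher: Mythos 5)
Your proposal is correct and ultimately rests on exactly the observation the paper uses: every woman (vertex-women, edge-women, dummies, and $w^*$) is matched under $\mu'_W$ to her first-ranked partner, which simultaneously rules out any blocking pair and shows no stable matching can do better for any woman, so $\mu_W=\mu'_W$. The additional man-by-man blocking-pair case analysis and the abandoned detour through Proposition~\ref{lem:manwomanOpt} are superfluous but harmless; the paper states the same argument in one line.
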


\begin{proof}
In the matching $\mu'_W$, every woman is matched with the man she prefers the most. Thus, it is immediate that $\mu_W=\mu'_W$.
\end{proof}

As a corollary to Lemmata \ref{lem:manOpt} and \ref{lem:womanOpt}, we obtain the following result.

\begin{corollary}\label{cor:OmOw}
$\Om=|M|+\delta$ and $\Ow=|W|$.
\end{corollary}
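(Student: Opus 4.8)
The plan is to establish \Corollary\ \ref{cor:OmOw} as a direct consequence of Lemmata \ref{lem:manOpt} and \ref{lem:womanOpt} together with the explicit formula for $\delta$ fixed in Section \ref{sec:reduction}. We know $\mu_M = \mu'_M$ and $\mu_W = \mu'_W$, so it suffices to compute the two quantities $\Om = \sum_{(m,w)\in\mu'_M} p_m(w)$ and $\Ow = \sum_{(m,w)\in\mu'_W} p_w(m)$ by plugging the matched pairs of $\mu'_M$ and $\mu'_W$ into the preference lists constructed above.

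For $\Ow$ the computation is immediate: in $\mu'_W$ every woman is matched with the man she ranks first (this is exactly what was observed in the proof of Lemma \ref{lem:womanOpt}). Concretely, $p_{w^i_v}(m^{3-i}_v)=1$ for $w^i_v\in W_V$, $p_{w^i_e}(m^{3-i}_e)=1$ for $w^i_e\in W_E$, $p_{\widetilde{w}^i}(\widetilde{m}^i)=1$, and $p_{w^*}(m^*)=1$. Summing over the $|W|$ women gives $\Ow=|W|$.

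For $\Om$ I would go through the four groups of men in $\mu'_M$. For $m^i_v\in M_V$ we have $\mu'_M(m^i_v)=w^i_v$ and $p_{m^1_v}(w^1_v)=p_{m^2_v}(w^2_v)=1$, contributing $|M_V|$. For $m^i_e\in M_E$ we have $\mu'_M(m^i_e)=w^i_e$ and $p_{m^1_e}(w^1_e)=p_{m^2_e}(w^2_e)=1$, contributing $|M_E|$. For $\widetilde{m}^i\in\widetilde{M}$ we have $\mu'_M(\widetilde{m}^i)=\widetilde{w}^i$ and $p_{\widetilde{m}^i}(\widetilde{w}^i)=1$, contributing $|\widetilde{M}|=\delta$. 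Finally $p_{m^*}(w^*)=\delta+1$. Adding these, $\Om = |M_V|+|M_E|+\delta+(\delta+1) = (|M|-\delta-1)+\delta+(\delta+1)=|M|+\delta$, where I used $|M|=|M_V|+|M_E|+|\widetilde{M}|+1=|M_V|+|M_E|+\delta+1$. So both identities follow.

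The only mildly delicate point — and the one I would be careful about — is the claim $\Om=|M|+\delta$ rather than something more complicated: one has to check that in $\mu'_M$ every man other than $m^*$ is genuinely matched to his \emph{top} choice, which is true precisely because each man's acceptable list was written with his $\mu'_M$-partner in position $1$ (and $m^*$'s list places $w^*$ last, at position $\delta+1$, with the $\delta$ dummy women ranked ahead of her). There is no real obstacle here; it is a bookkeeping verification against the preference lists of Section \ref{sec:reduction}. Note also that the explicit value of $\delta$ is not needed for this corollary — only its role as $|\widetilde{M}|=|\widetilde{W}|$ — and it will instead be used in Section \ref{sec:t} when bounding $t = \widehat{k}-\max\{\Om,\Ow\}$.
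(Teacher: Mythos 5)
Your proposal is correct and follows essentially the same route as the paper: both simply evaluate $\Om$ and $\Ow$ by direct summation over $\mu_M=\mu'_M$ and $\mu_W=\mu'_W$, noting that every man except $m^*$ (who contributes $\delta+1$) and every woman is matched to a partner in position $1$. Your group-by-group bookkeeping for $\Om$ is just a more itemized version of the paper's one-line computation $(\delta+1)+(|M|-1)=|M|+\delta$.
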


\begin{proof}
First, note that
\[\begin{array}{ll}
O_M & = \displaystyle{\sum_{(m,w)\in \mu_M} p_m(w)}\\
    & = p_{m^*}(\mu_M(m^*)) + \displaystyle{\sum_{m\in M\setminus\{m^*\}} p_m(\mu_M(m))} = (\delta+1)+(|M|-1) = |M|+\delta.
\end{array}\]

Second, note that
\[\begin{array}{ll}
O_W & = \displaystyle{\sum_{(m,w)\in \mu_W} p_w(m)} = \displaystyle{\sum_{w\in W} p_w(\mu_W(w))} = |W|.
\end{array}\]
\end{proof}

We are now ready to bound $t$.

\begin{lemma}\label{lem:t}
The parameter $t$ associated with $\widehat{\cal I}$ is equal to $6(k+\displaystyle{\frac{k(k-1)}{2}})$.
\end{lemma}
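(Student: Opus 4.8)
The plan is to compute $t = \widehat{k} - \max\{\Om,\Ow\}$ directly from the values already established. By Corollary~\ref{cor:OmOw}, we have $\Om = |M| + \delta$ and $\Ow = |W|$, and since $|M| = |W|$ (as noted right after the definition of $\widetilde{M}$ and $\widetilde{W}$), it follows that $\Om = |M| + \delta \geq |W| = \Ow$ because $\delta \geq 0$ by our standing assumption. Hence $\max\{\Om,\Ow\} = \Om = |M| + \delta$.

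Now I would simply substitute the definition $\widehat{k} = |M| + \delta + 6\left(k + \frac{k(k-1)}{2}\right)$ fixed in Section~\ref{sec:reduction}. This gives
\[
t = \widehat{k} - \max\{\Om,\Ow\} = \left(|M| + \delta + 6\left(k + \frac{k(k-1)}{2}\right)\right) - (|M| + \delta) = 6\left(k + \frac{k(k-1)}{2}\right),
\]
as claimed. There is essentially no obstacle here: the lemma is a bookkeeping check that the target bound $\widehat{k}$ was set so that the above-max parameter collapses exactly to the intended function of $k$. The only things being used are the identifications of $\mu_M$ and $\mu_W$ (Lemmata~\ref{lem:manOpt} and~\ref{lem:womanOpt}), their consequence Corollary~\ref{cor:OmOw}, the equality $|M| = |W|$, and the assumption $\delta \geq 0$ (which is harmless, since otherwise the {\sc Clique} instance is tiny and solvable by brute force in FPT time, as remarked after the definition of $\delta$).

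If one wanted to be cautious, I would double-check the non-negativity direction: we assumed $\delta \geq 0$ precisely so that $\Om \geq \Ow$ and the $\max$ picks out $\Om$; without this the formula for $t$ would change, but that regime is explicitly excluded. The main (minor) point worth stating explicitly in the write-up is therefore just the inequality $\Om \geq \Ow$, after which the arithmetic is immediate.
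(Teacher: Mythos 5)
Your proposal is correct and follows essentially the same route as the paper: substitute $\widehat{k}=|M|+\delta+6(k+\frac{k(k-1)}{2})$ and the values from Corollary~\ref{cor:OmOw} into $t=\widehat{k}-\max\{\Om,\Ow\}$. The only difference is that you spell out explicitly why $\max\{\Om,\Ow\}=|M|+\delta$ (using $|M|=|W|$ and $\delta\geq 0$), a step the paper leaves implicit, which is a harmless and indeed slightly more careful presentation.
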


\begin{proof}
By the definition of $t$, we have that
\[\begin{array}{ll}
t & = \widehat{k}-\max\{\Om,\Ow\}\\
  & = |M|+\delta+6(k+\displaystyle{\frac{k(k-1)}{2}}) - \max\{|M|+\delta, |W|\}= 6(k+\displaystyle{\frac{k(k-1)}{2}}).
\end{array}\]
\end{proof}

\subsection{Correctness}\label{sec:correctness}

First, from Lemma \ref{lem:manOpt} and Proposition \ref{lem:rht} we derive the following useful observation.

\begin{observation}\label{obs:matchAll}
Every $\mu\in SM(\widehat{\cal I})$ matches all people in $M\cup W$.
\end{observation}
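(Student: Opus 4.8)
The statement to prove is Observation~\ref{obs:matchAll}: every stable matching $\mu$ of $\widehat{\cal I}$ matches all people in $M\cup W$. The plan is to invoke the Rural-Hospital Theorem (Proposition~\ref{lem:rht}) together with the explicit identification of the man-optimal stable matching already established in Lemma~\ref{lem:manOpt}. Since Proposition~\ref{lem:rht} asserts that the set of matched people is the same across all stable matchings, it suffices to exhibit one stable matching that matches everybody, and the natural candidate is $\mu_M = \mu'_M$.

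Concretely, I would argue as follows. By Lemma~\ref{lem:manOpt}, the man-optimal stable matching $\mu_M$ equals $\mu'_M$, whose definition assigns a partner to every man in $M_V \cup M_E \cup \widetilde{M} \cup \{m^*\} = M$ and, correspondingly, to every woman in $W_V \cup W_E \cup \widetilde{W} \cup \{w^*\} = W$ (this is immediate from reading off the four bullet points defining $\mu'_M$, using that $|M|=|W|$ and that the assignment is a bijection between $M$ and $W$). Hence $\mu_M$ is a perfect matching. Now let $\mu$ be an arbitrary stable matching in $\mathrm{SM}(\widehat{\cal I})$. By Proposition~\ref{lem:rht} (the Rural-Hospital Theorem, which holds for {\sc Functional Stable Marriage} and hence for {\sc Stable Marriage} instances like $\widehat{\cal I}$), the set of people matched by $\mu$ coincides with the set of people matched by $\mu_M$, which is all of $M\cup W$. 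Therefore $\mu$ matches every person, as claimed.

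There is essentially no obstacle here: this is a short corollary-style observation, and the only thing to be careful about is making sure that $\mu'_M$ as defined genuinely covers all of $M\cup W$ — i.e.\ that no person was omitted from the case analysis defining it — which is a direct inspection of the definition of the sets $M_V, M_E, \widetilde M, M_V$'s counterparts among the women, and the singletons $m^*, w^*$. Alternatively, one could bypass Lemma~\ref{lem:manOpt} and instead quote the standing assumption (footnoted in the introduction) that every stable matching is perfect; but the cleanest self-contained route in this section is the one above, since $\mu'_M$ is concretely written out. I would therefore present the two-line argument: $\mu'_M=\mu_M$ is perfect by Lemma~\ref{lem:manOpt}, and by Proposition~\ref{lem:rht} every stable matching matches exactly the same set of people, namely all of $M\cup W$.
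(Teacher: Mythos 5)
Your proof is correct and matches the paper's own argument exactly: the paper also derives this observation from Lemma~\ref{lem:manOpt} (which shows $\mu_M=\mu'_M$ is a perfect matching by construction) combined with the Rural-Hospital Theorem (Proposition~\ref{lem:rht}). No issues.
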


Next, we proceed to state the first direction necessary to conclude that the input instance $\cal I$ of {\sc Clique} and our instance $\widehat{\cal I}$ of \maxBSM\ are equivalent.

\begin{lemma}\label{lem:hardFirst}
If $\cal I$ is a \Yes-instance, then $\widehat{\cal I}$ is a \Yes-instance.
\end{lemma}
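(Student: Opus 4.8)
The plan is to start from a clique $C=\{v_{i_1},\dots,v_{i_k}\}$ in $G$, let $E_C=\{e\in E: e\subseteq C\}$ be its $\frac{k(k-1)}{2}$ edges, and build an explicit stable matching $\mu$ of $\widehat{\cal I}$ by ``flipping'' exactly the gadgets corresponding to $C$ and $E_C$ relative to $\mu_M$. Concretely, I would set $\mu(m^1_v)=w^2_v$ and $\mu(m^2_v)=w^1_v$ for every $v\in C$ (using the acceptable pairs of rank $4$ in those men's lists, and rank $1$ in those women's lists), set $\mu(m^1_e)=w^2_e$ and $\mu(m^2_e)=w^1_e$ for every $e\in E_C$ (again the rank-$4$/rank-$1$ ``swap'' inside the edge-gadget), and leave every other vertex- and edge-gadget as in $\mu_M$, i.e.\ $m^i_v\mapsto w^i_v$, $m^i_e\mapsto w^i_e$. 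For the dummies I keep $\mu(\widetilde m^i)=\widetilde w^i$ and $\mu(m^*)=w^*$, which is forced anyway by Observations \ref{obs:dummy1} and \ref{obs:dummy2}. This is clearly a perfect matching, consistent with Observation \ref{obs:matchAll}.

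Next I would verify stability by going through the possible blocking pairs gadget by gadget. The only men whose partner got worse are the $m^i_v$ for $v\in C$ and the $m^i_e$ for $e\in E_C$; for each such man I must check that none of the women he now prefers to $\mu$ but did not get would accept him. For $m^1_v$ ($v\in C$), the women strictly better than $w^2_v$ are $w^1_v,\widetilde w^1,\widetilde w^2$: the dummy women are matched to their rank-$1$ partners so they never block, and $w^1_v$ is matched to $m^2_v$, whom she ranks $1$ — better than $m^1_v$ — so no block. Symmetrically for $m^2_v$. For $m^1_{\{u,v\}}$ with $\{u,v\}\in E_C$, the women strictly better than $w^2_{\{u,v\}}$ are $w^1_{\{u,v\}},w^1_u,w^1_v$: $w^1_{\{u,v\}}$ is matched to $m^2_{\{u,v\}}$ whom she ranks $1$; and $w^1_u,w^1_v$ — here is the crucial point — are matched to $m^2_u,m^2_v$ (since $u,v\in C$), whom they rank $1$, hence strictly above any $m^1_e$; so no block. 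This is exactly where the clique hypothesis is used: it guarantees that the endpoints of every selected edge are themselves selected, so the endpoint women have already ``flipped'' and outrank the edge-men. I would also note the women $w^i_v$, $w^i_e$ for $v\notin C$, $e\notin E_C$ are matched to their man-optimal partners and their preferred men ($m^{3-i}_v$, resp.\ the top dummies) are all matched to rank-$1$ partners, and that $m^*$ and the dummies cannot block; so $\mu\in\mathrm{SM}(\widehat{\cal I})$.

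Finally I would bound $\bal{\mu}\le\widehat k$. On the men's side, $\sum_m p_m(\mu(m)) = O_M + \sum_{m}\big(p_m(\mu(m))-p_m(\mu_M(m))\big)$, and the only nonzero increments are $3$ for each of the $2k$ flipped vertex-men (rank $4$ instead of $1$) and $3$ for each of the $k(k-1)$ flipped edge-men, giving $O_M + 6k + 3k(k-1) = O_M + 6\big(k+\frac{k(k-1)}{2}\big) = |M|+\delta+6\big(k+\frac{k(k-1)}{2}\big) = \widehat k$ by Corollary \ref{cor:OmOw}. On the women's side, the flipped vertex-women and edge-women now sit at rank $1$ instead of wherever $\mu_W$ would have put them — in any case $\sum_w p_w(\mu(w))$ only ever differs from $\Ow=|W|$ by the women who moved \emph{away} from their rank-$1$ $\mu_W$-partner, and a short count of those (the $w^i_v$ for $v\notin C$ that are still at $\mu_M$, etc.) shows this total is comfortably below $\widehat k$, since $\widehat k = |M|+\delta+6(k+\tfrac{k(k-1)}{2})$ dwarfs $|W|$ plus the small corrections. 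Hence $\bal{\mu}=\max\{\widehat k,\ \le\widehat k\}=\widehat k$, so $\Bal(\widehat{\cal I})\le\widehat k$ and $\widehat{\cal I}$ is a \Yes-instance.

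\textbf{Main obstacle.} The delicate part is the exhaustive stability check — in particular confirming that no woman in a flipped gadget (edge-women $w^i_e$ or vertex-women $w^i_v$) prefers a man from an \emph{unflipped} gadget over her current partner, which forces one to track precisely the relative ranks across the three families $M_V$, $M_E$, $\widetilde M$ in each woman's list; and on the counting side, making sure the women's total dissatisfaction under $\mu$ genuinely stays $\le\widehat k$ rather than merely ``morally small'', which requires being careful about exactly which women are displaced from their $\mu_W$-optima and by how much. I expect the write-up to lean on the size assumptions $\delta\ge 0$ and $|V|>k+\frac{k(k-1)}{2}$ to absorb these corrections.
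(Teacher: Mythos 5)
Your construction of $\mu$, your stability analysis (including the identification of the one place where the clique hypothesis is needed, namely that the endpoint women $w^1_u,w^1_v$ of a flipped edge are themselves flipped and hence rank their current partners $m^2_u,m^2_v$ first), and your men's-side computation all coincide with the paper's proof; the men's total is indeed exactly $O_M+6\bigl(k+\frac{k(k-1)}{2}\bigr)=\widehat k$.

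However, your women's-side argument has a genuine gap: the justification you give is based on a false picture of the magnitudes. Under $\mu$ the women's total is \emph{not} ``$\Ow=|W|$ plus small corrections.'' Every unflipped vertex-woman $w^i_v$ with $v\notin U$ is matched to $m^i_v$ at rank $|E|+2$, and every unflipped edge-woman $w^i_e$ with $e$ outside the clique is matched to $m^i_e$ at rank $|V||E|+2$; there are $2(|V|-k)$ and $2\bigl(|E|-\frac{k(k-1)}{2}\bigr)$ of these, respectively, so the women's total is
\[
|M| \;+\; 2(|V|-k)(|E|+1) \;+\; 2\Bigl(|E|-\tfrac{k(k-1)}{2}\Bigr)(|V||E|+1),
\]
a quantity of order $|V||E|^2$ --- the same order as $\widehat k$ itself. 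The claim that ``$\widehat k$ dwarfs $|W|$ plus the small corrections'' is therefore not a valid reason for the bound; the inequality holds only because $\delta$ (and hence $\widehat k=|M|+\delta+6(k+\frac{k(k-1)}{2})$) was engineered so that the displayed sum equals $\widehat k$ \emph{exactly}, with no slack at all. The paper closes this step by carrying out precisely this computation and checking the algebraic identity against the definition of $\delta$; your write-up defers it with a heuristic that, taken literally, is wrong, and had the reduction's constants been even slightly different the ``comfortably below'' conclusion would fail. So the step is fillable, but as written it is not established, and the size assumptions $\delta\ge 0$, $|V|>k+\frac{k(k-1)}{2}$ you invoke play no role in absorbing it --- only the exact identity does.
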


\begin{proof}Suppose that $\cal I$ is a \Yes-instance, and let $U$ be the vertex-set of a clique on $k$ vertices in $G$. We denote $M^U_V=\{m^i_v\in M_V: v\in U\}$ and $M^U_E=\{m^i_{\{u,v\}}\in M_E: u,v\in U\}$. Then, we define a matching $\mu$ as follows.
\begin{itemize}
\item For all $m^i_{v}\in M_V$:
	\begin{itemize}
	\item If $m^i_{v}\in M^U_V$: $\mu(m^i_v)=w^{3-i}_v$.
	\item Else: $\mu(m^i_v)=w^i_v$.
	\end{itemize}
\item For all $m^i_e\in M_E$:
	\begin{itemize}
	\item If $m^i_e\in M^U_E$: $\mu(m^i_e)=w^{3-i}_e$.
	\item Else: $\mu(m^i_e)=w^i_e$.
	\end{itemize}
\item For all $\widetilde{m}^i\in\widetilde{M}$: $\mu(\widetilde{m}^i)=\widetilde{w}^i$.
\item $\mu(m^*)=w^*$.
\end{itemize}

We claim that $\mu\in SM(\widehat{\cal I})$ and $\bal{\mu}\leq \widehat{k}$, which would imply that $\widehat{\cal I}$ is a \Yes-instance. To this end, we first show that $\mu\in SM(\widehat{\cal I})$. Since for all $i\in\{1,2,\ldots,\delta\}$, it holds that $\mu(\widetilde{m}^i)=\widetilde{w}^i$ and $p_{\widetilde{m}^i}(\widetilde{w}^i)=p_{\widetilde{w}^i}(\widetilde{m}^i)=1$, we have that there cannot exist a blocking pair with at least one person from $\widetilde{M}\cup\widetilde{W}$. Thus, there can also not be a blocking pair with any person from $\{m^*,w^*\}$.

On the one hand, notice that for every $m\in (M_V\setminus M^U_V)\cup(M_E\setminus M^U_E)\cup\{m^*\}$, the woman most preferred by $m$ who is outside $\widetilde{W}$ is also the one with whom it is matched. Thus, no man in $(M_V\setminus M^U_V)\cup(M_E\setminus M^U_E)\cup\{m^*\}$ can belong to a blocking pair. Moreover, the set of acceptable partners of any woman in $W_E$ matched to a man in $M_E\setminus M^U_E$ is a subset of $\widetilde{M}\cup (M_E\setminus M^U_E)$, and therefore such a woman cannot belong to a blocking pair. On the other hand, let $W'$ denote the set of every woman that is matched to a man $m\in M^U_V\cup M^U_E$. Then, for every $w\in W'$, the man most preferred by $w$ is also the one with whom she is matched. Therefore, no woman in $W'$ can belong to a blocking pair. Hence, we also conclude that no woman in $W_E$ can belong to a blocking pair.

Thus, if there exists a blocking pair, it must consist of a man $m\in M^U_V\cup M^U_E$ and a woman $w\in W_V\setminus W'$. Suppose, by way of contradiction, that there exists such a blocking pair $(m,w)$. First, let us assume that $m=m^i_v\in M^U_V$. In this case, since apart from $w^i_v$, all women in ${\cal A}(m^i_v)$ belong to $\widetilde{W}\cup\{\mu(m^i_v)\}$, we deduce that $w=w^i_v$. However, $w^i_v$ prefers $\mu(w^i_v)$ over $m^i_v$, and thus we reach a contradiction. Next, we assume that $m=m^i_{\{u,v\}}\in M^U_E$. In this case, it must hold that $w$ is either $w^i_v$ or $w^i_u$. Without loss of generality, we assume that $w=w^i_v$. However, since $m^i_{\{u,v\}}\in M^U_E$, we have that $v\in U$. Therefore, $\mu(w^i_v)=m^{3-i}_v$. Since $w^i_v$ prefers $m^{3-i}_v$ over $m^i_{\{u,v\}}$, we reach a contradiction.

It remains to prove that $\bal{\mu}\leq\widehat{k}$. To this end, we need to show that 
\[\displaystyle{\max\{\sum_{(m,w) \in \mu} p_m(w), \sum_{(m,w) \in \mu} p_w(m)\}\leq |M|+\delta+6(k+\displaystyle{\frac{k(k-1)}{2}})}.\]

First, note that
\[\begin{array}{lll}
\displaystyle{\sum_{(m,w) \in \mu} p_m(w)} & = &  \displaystyle{\sum_{m\in M^U_V} p_m(\mu(m)) + \sum_{m\in M_V\setminus M^U_V} p_m(\mu(m)) + \sum_{m\in M^U_E} p_m(\mu(m))}\\

  && + \displaystyle{\sum_{m\in M_E\setminus M^U_E} p_m(\mu(m)) + \sum_{m\in\widetilde{M}}p_m(\mu(m)) + p_{m^*}(\mu(m^*))}\\
	
  & = &  4|M^U_V| + |M_V\setminus M^U_V| + 4|M^U_E| + |M_E\setminus M^U_E| + |\widetilde{M}| + \delta+1\\
	
	& = &  |M| + \delta + 3(|M^U_V|+|M^U_E|) = |M| + \delta + 6(k+\displaystyle{\frac{k(k-1)}{2}}).
\end{array}\]

Second, note that

\[\begin{array}{lll}
\displaystyle{\sum_{(m,w) \in \mu} p_w(m)} & = &  \displaystyle{\sum_{m\in M^U_V} p_{\mu(m)}(m) + \sum_{m\in M_V\setminus M^U_V} p_{\mu(m)}(m) + \sum_{m\in M^U_E} p_{\mu(m)}(m)}\\

  && + \displaystyle{\sum_{m\in M_E\setminus M^U_E} p_{\mu(m)}(m) + \sum_{m\in\widetilde{M}}p_{\mu(m)}(m) + p_{{\mu(m^*)}}(m^*)}\\
	
  & = &  |M^U_V| + |M_V\setminus M^U_V|(|E|+2) + |M^U_E| + |M_E\setminus M^U_E|(|V||E|+2) + |\widetilde{M}| + 1 \\
	
  & = &  |M| + 2(|V|-k)(|E|+1) + 2(|E|-\displaystyle{\frac{k(k-1)}{2}})(|V||E|+1) \\	
	
	
	& = & |M| + \delta + 6(k+\displaystyle{\frac{k(k-1)}{2}}).
\end{array}\]

This concludes the proof of the lemma.
\end{proof}

We now turn to prove the second direction.

\begin{lemma}\label{lem:hardSecond}
If $\widehat{\cal I}$ is a \Yes-instance, then $\cal I$ is a \Yes-instance.
\end{lemma}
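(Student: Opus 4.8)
To prove Lemma~\ref{lem:hardSecond}, suppose that $\widehat{\cal I}$ is a \Yes-instance, and let $\mu\in SM(\widehat{\cal I})$ be a stable matching with $\bal{\mu}\leq\widehat{k}$. I would first extract from $\mu$ the combinatorial structure that should encode a clique. By Observations~\ref{obs:dummy1} and \ref{obs:dummy2}, $\mu(\widetilde{m}^i)=\widetilde{w}^i$ for all $i$ and $\mu(m^*)=w^*$, so the ``action'' of $\mu$ is confined to $M_V\cup M_E$ and $W_V\cup W_E$. Using Corollary~\ref{cor:prefixCleaned}-style reasoning (here via Proposition~\ref{lem:manwomanOpt} together with the identification of $\mu_M$ and $\mu_W$ in Lemmata~\ref{lem:manOpt} and \ref{lem:womanOpt}), each man $m\in M_V\cup M_E$ is matched either to his top choice $w^i_v$ (resp.\ $w^i_e$) or to the woman $w^{3-i}_v$ (resp.\ $w^{3-i}_e$) he gets in $\mu_W$, and in fact the two men $m^1_v,m^2_v$ representing a vertex must make the \emph{same} choice, and likewise $m^1_e,m^2_e$ for an edge (otherwise a blocking pair is created among $\{m^1_v,m^2_v,w^1_v,w^2_v\}$, or the analogous edge gadget). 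So define $U=\{v\in V: \mu(m^1_v)\neq w^1_v\}$ and $F=\{e\in E:\mu(m^1_e)\neq w^1_e\}$; these are the ``selected'' vertices and edges.

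Next I would set up the two counting inequalities that pin down $|U|$ and $|F|$. Write $a=|\{m\in M_V: \mu(m)\neq \mu_M(m)\}|=2|U|$ and $b=|\{m\in M_E:\mu(m)\neq\mu_M(m)\}|=2|F|$. Evaluating $\sum_{(m,w)\in\mu}p_m(w)$ exactly as in the proof of Lemma~\ref{lem:hardFirst} (every displaced man in $M_V$ or $M_E$ contributes $4$ instead of $1$, i.e.\ $+3$ each), one gets $\sum p_m(w)=|M|+\delta+3(a+b)=|M|+\delta+6(|U|+|F|)$. The constraint $\sum p_m(w)\le\widehat k$ then forces $|U|+|F|\le k+\frac{k(k-1)}{2}$. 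For the women's side, the key point is the asymmetry built into the preference lists: a displaced man in $M_V$ improves some woman's assignment only modestly, whereas a displaced man in $M_E$ lets a woman in $W_E$ jump from a dummy partner of rank up to $|V||E|+2$ down to rank $1$. Computing $\sum_{(m,w)\in\mu}p_w(m)$ in terms of $|U|$ and $|F|$ (as in the second display of Lemma~\ref{lem:hardFirst}, but now with $|U|$ and $|F|$ in place of $k$ and $\frac{k(k-1)}{2}$) and imposing $\sum p_w(m)\le\widehat k=|M|+\delta+6(k+\frac{k(k-1)}{2})$ should yield a lower bound, something like $|F|\ge\frac{k(k-1)}{2}$ (after using $\delta\ge 0$ and the chosen value of $\delta$). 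Combined with $|U|+|F|\le k+\frac{k(k-1)}{2}$ this gives $|F|=\frac{k(k-1)}{2}$ and $|U|\le k$; a symmetric or dual bound on $|U|$ (or simply re-reading the vertex count) forces $|U|=k$ as well.

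Finally I would show that $(U,F)$ is a clique, i.e.\ that every $e\in F$ has both endpoints in $U$, which together with $|F|=\binom{|U|}{2}$ and $|U|=k$ forces $G[U]$ to be the complete graph $K_k$. This is exactly the stability argument run in reverse: if $e=\{u,v\}\in F$ but, say, $v\notin U$, then $\mu(m^i_e)=w^{3-i}_e$ while $\mu(w^i_v)=m^i_v$ (since $v\notin U$), and by inspection $m^i_{\{u,v\}}$ prefers $w^i_v$ to $w^{3-i}_e$ while $w^i_v$ prefers $m^i_{\{u,v\}}$ to $m^i_v$ (rank $i+1$ or $|E|$-ish versus rank $|E|+2$), so $(m^i_{\{u,v\}},w^i_v)$ blocks $\mu$ — a contradiction. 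Hence every edge of $F$ lies inside $U$, so $U$ is a $k$-clique and $\cal I$ is a \Yes-instance.

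\medskip
\noindent\textbf{Main obstacle.} The delicate part is the women's-side counting inequality: one must evaluate $\sum_{(m,w)\in\mu}p_w(m)$ exactly, accounting for which women in $W_V$ and $W_E$ are displaced, how far, and the role of the dummies $\widetilde M$ sitting in the lists of $W_E$ (and of $w^1_v,w^2_v$), and then verify that the specific choice of $\delta=2(|V|+|E|+|V||E|+|V||E|^2)-k(4+4k+2|E|+(k-1)|V||E|)$ makes the arithmetic come out to force $|F|\ge\binom{k}{2}$ with no slack. Getting the signs and the $+1$/$+2$ offsets right here, and confirming that the two inequalities together leave only the clique configuration, is where the real work lies; the stability (blocking-pair) arguments are routine gadget-checking by comparison.
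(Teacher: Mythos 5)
Your plan follows the paper's proof essentially step for step: use Observations \ref{obs:dummy1}, \ref{obs:dummy2} and \ref{obs:matchAll} to see that each gadget is either ``straight'' or ``crossed'', define $U$ and $F$ (the paper's $S$) as the crossed vertices and edges, get $|U|+|F|\le k+\frac{k(k-1)}{2}$ from the men's sum $\sum_{(m,w)\in\mu}p_m(w)=|M|+\delta+6(|U|+|F|)\le\widehat{k}$, get a lower bound on $|F|$ from the women's sum, and use a blocking pair between an edge-man and a vertex-woman to show every edge of $F$ has both endpoints in $U$. Those parts of your sketch, including the blocking-pair check $(m^i_{\{u,v\}},w^i_v)$ and the men's-side arithmetic, are correct and match the paper.

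The genuine gap is the step you yourself flag: you never derive the women's-side inequality, and the way you gesture at it is slightly off. The paper computes $\sum_{(m,w)\in\mu}p_w(m)=|M|+2(|V|+|E|+|V||E|+|V||E|^2)-2|U|(|E|+1)-2|F|(|V||E|+1)$ and, unwinding $\delta$ inside $\widehat{k}$, rewrites $\widehat{k}=|M|+2(|V|+|E|+|V||E|+|V||E|^2)-2k(|E|+1)-k(k-1)(|V||E|+1)$; the condition $\bal{\mu}\le\widehat{k}$ then gives $|U|(|E|+1)+|F|(|V||E|+1)\ge k(|E|+1)+\frac{k(k-1)}{2}(|V||E|+1)$. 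The lever that turns this into $|F|\ge\frac{k(k-1)}{2}$ is \emph{not} ``$\delta\ge 0$'' as you suggest, but the men's-side bound already proved together with the standing assumption $|V|>k+\frac{k(k-1)}{2}$: since $|U|\le k+\frac{k(k-1)}{2}<|V|$, the term $|U|(|E|+1)$ is too small to compensate for losing even one unit of the huge coefficient $|V||E|+1$, so $|F|\ge\frac{k(k-1)}{2}$ is forced. Your endgame is also looser than it needs to be: there is no ``symmetric or dual bound on $|U|$'', and $|F|=\binom{|U|}{2}$ is not established directly; rather, since every edge of $F$ lies inside $U$, $|F|\ge\frac{k(k-1)}{2}$ already forces $|U|\ge k$, and combined with $|U|+|F|\le k+\frac{k(k-1)}{2}$ this pins down $|U|=k$, $|F|=\frac{k(k-1)}{2}$, and hence that $U$ induces a $k$-clique. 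So the route is right, but the decisive counting argument and its closing logic still had to be supplied.
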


\begin{proof}
Suppose that $\widehat{\cal I}$ is a \Yes-instance, and let $\mu$ be a stable matching such that $\bal{\mu}\leq\widehat{k}$. By Observations \ref{obs:dummy1} and \ref{obs:dummy2}, it holds that
\begin{itemize}
\item For all $i\in\{1,2,\ldots,\delta\}$: $\mu(\widetilde{m}^i)=\widetilde{w}^i$.
\item $\mu(m^*)=w^*$.
\end{itemize}

Thus, since Observation \ref{obs:matchAll} implies that all vertices in $M_V$ should be matched by $\mu$, we deduce that
\begin{itemize}
\item For all $v\in V$: Either both $\mu(m^1_v)=w^1_v$ and $\mu(m^2_v)=w^2_v$ or both $\mu(m^2_v)=w^1_v$ and $\mu(m^1_v)=w^2_v$.
\end{itemize}

Let $U$ denote the set of every $v\in V$ such that $\mu(m^2_v)=w^1_v$ and $\mu(m^1_v)=w^2_v$. Moreover, denote $M^U_V=\{m^i_v\in M_V: v\in U\}$. By the item above, and since all vertices in $M_E$ should also be matched by $\mu$, we further deduce that
\begin{itemize}
\item For all $e\in E$: Either both $\mu(m^1_e)=w^1_e$ and $\mu(m^2_e)=w^2_e$ or both $\mu(m^2_e)=w^1_e$ and $\mu(m^1_e)=w^2_e$.
\end{itemize}

Let $S$ denote the set of every $e\in E$ such that $\mu(m^2_e)=w^1_e$ and $\mu(m^1_e)=w^2_e$. Moreover, denote $M^S_E=\{m^i_e\in M_E: e\in S\}$. If there existed $\{u,v\}\in S$ such that $u\notin U$, then $(m^1_{\{u,v\}},w^1_u)$ would have formed a blocking pair, which contradicts the fact that $\mu$ is a stable matching. Thus, we have that the set of endpoints of the edges in $S$ is a subset of $U$.

We claim that $|U|=k$ and that $U$ is the vertex-set of a clique in $G$, which would imply that $\cal I$ is a \Yes-instance. Since we have argued that the set of endpoints of the edges in $S$ is a subset of $U$, it is sufficient to show that $|U|\leq k$ and $|S|\geq \displaystyle{\frac{k(k-1)}{2}}$ (note that $|S|\geq \displaystyle{\frac{k(k-1)}{2}}$ implies that $|U|\geq k$), as this would imply that $U$ is indeed the vertex-set of a clique on $k$ vertices in $G$. First, since $\bal{\mu}\leq \widehat{k}$, we have that $\displaystyle{\sum_{(m,w)\in\mu}p_m(w)\leq |M| + \delta + 6(k+\displaystyle{\frac{k(k-1)}{2}})}$. Now, note that
\[\begin{array}{lll}
\displaystyle{\sum_{(m,w) \in \mu} p_m(w)} & = &  \displaystyle{\sum_{m\in M^U_V} p_m(\mu(m)) + \sum_{m\in M_V\setminus M^U_V} p_m(\mu(m)) + \sum_{m\in M^S_E} p_m(\mu(m))}\\

  && + \displaystyle{\sum_{m\in M_E\setminus M^S_E} p_m(\mu(m)) + \sum_{m\in\widetilde{M}}p_m(\mu(m)) + p_{m^*}(\mu(m^*))}\\
	
  & = & 4|M^U_V| + |M_V\setminus M^U_V| + 4|M^S_E| + |M_E\setminus M^S_E| + |\widetilde{M}| + \delta+1\\	

  & = & |M| + \delta + 6(|U|+|S|).
\end{array}\]

Thus, we deduce that $|U|+|S|\leq k+\displaystyle{\frac{k(k-1)}{2}}$. Now, observe that since $\bal{\mu}\leq \widehat{k}$, we also have that $\displaystyle{\sum_{(m,w)\in\mu}p_{w}(m)\leq |M| + \delta + 6(k+\displaystyle{\frac{k(k-1)}{2}})}$. Here, on the one hand we note that
\[\begin{array}{lll}
\displaystyle{\sum_{(m,w) \in \mu} p_w(m)} & = &  \displaystyle{\sum_{m\in M^U_V} p_{\mu(m)}(m) + \sum_{m\in M_V\setminus M^U_V} p_{\mu(m)}(m) + \sum_{m\in M^S_E} p_{\mu(m)}(m)}\\

  && + \displaystyle{\sum_{m\in M_E\setminus M^S_E} p_{\mu(m)}(m) + \sum_{m\in\widetilde{M}}p_{\mu(m)}(m) + p_{{\mu(m^*)}}(m^*)}\\
	
  & = & |M^U_V| + |M_V\setminus M^U_V|(|E|+2) + |M^S_E| + |M_E\setminus M^S_E|(|V||E|+2) + |\widetilde{M}| + 1\\
	
	& = & |M| + 2(|V|-|U|)(|E|+1) + 2(|E|-|S|)(|V||E|+1)\\
	
	& = & |M| + 2(|V|+|E|+|V||E|+|V||E|^2)- 2|U|(|E|+1) - 2|S|(|V||E|+1).
\end{array}\]

On the other hand, we note that
\[\begin{array}{lll}
\widehat{k} & = &  |M| + \delta + 6(k+\displaystyle{\frac{k(k-1)}{2}})\\

  & = & |M| + 2(|V|+|E|+|V||E|+|V||E|^2) -k(4+4k+2|E|+(k-1)|V||E|) + 6(k+\displaystyle{\frac{k(k-1)}{2}})\\
	
  & = & |M| + 2(|V|+|E|+|V||E|+|V||E|^2) - \displaystyle{2k(|E|+1) - k(k-1)(|V||E|+1)}.
\end{array}\]

Thus, we have that
\[|U|(|E|+1) + |S|(|V||E|+1)\geq \displaystyle{k(|E|+1) + \frac{k(k-1)}{2}(|V||E|+1)}\]

Recall that we have also shown that  $|U|+|S|\leq \displaystyle{k+\frac{k(k-1)}{2}}$. Thus, since $|U|\leq \displaystyle{k+\frac{k(k-1)}{2}} < |V|$, to satisfy the above equation it must hold that $|S|\geq \displaystyle{\frac{k(k-1)}{2}}$. Since $|U|+|S|\leq \displaystyle{k+\frac{k(k-1)}{2}}$, we deduce that $|U|\leq k$. This, as we have argued earlier, finished the proof.
\end{proof}

This concludes the proof of Theorem \ref{thm:hardness}.

\bibliographystyle{siam}
\bibliography{References}
\end{document}